\documentclass[11pt]{article}

\usepackage[T1]{fontenc}
\usepackage{lmodern}
\usepackage{amsmath,amssymb,amsthm,mathtools,bm}
\usepackage{fullpage}
\usepackage{booktabs}
\usepackage{longtable}
\usepackage{array}
\usepackage{tikz}
\usetikzlibrary{arrows.meta,positioning}
\usepackage[numbers,sort&compress]{natbib}
\usepackage{microtype}
\usepackage{mathpazo}
\usepackage{authblk}
\usepackage{xcolor}
\usepackage[
  colorlinks=true,
  linkcolor=blue,
  citecolor=blue,
  urlcolor=blue
]{hyperref}
\hypersetup{
  pdftitle={A Sharp Local-Question Threshold for GHZ-Equatorial Completeness in Four-Player XOR Games},
  pdfauthor={Anonymous},
  pdfsubject={The local-question threshold for GHZ equatorial strategies in perfect four-player XOR games},
  pdfkeywords={XOR games, commuting-operator strategies, GHZ strategies, Cayley games, primitive circuits, Magnus expansion}
}

\newtheorem{theorem}{Theorem}[section]
\newtheorem{proposition}[theorem]{Proposition}
\newtheorem{lemma}[theorem]{Lemma}
\newtheorem{corollary}[theorem]{Corollary}
\theoremstyle{definition}
\newtheorem{definition}[theorem]{Definition}
\theoremstyle{remark}

\newcommand{\F}{\mathbb{F}}
\newcommand{\Z}{\mathbb{Z}}
\newcommand{\Q}{\mathbb{Q}}
\newcommand{\R}{\mathbb{R}}
\newcommand{\one}{\mathbf{1}}
\newcommand{\supp}{\operatorname{supp}}
\newcommand{\rank}{\operatorname{rank}}
\newcommand{\dist}{\operatorname{d}}

\newcommand{\co}{\mathrm{co}}
\newcommand{\unif}{\mathrm{unif}}
\newcommand{\card}[1]{\lvert #1\rvert}
\newcommand{\inner}[2]{\left\langle #1,#2\right\rangle}
\allowdisplaybreaks
\numberwithin{equation}{section}
\begin{document}

\title{A Sharp Local-Question Threshold for GHZ-Equatorial Completeness in Four-Player XOR Games}

\author[1]{Ziao Tang}

\author[2]{Chengkai Zhu}

\author[1]{Ge Bai}

\author[1,*]{Xin Wang}

\author[3,$\dagger$]{Ranyiliu Chen}

\affil[1]{The Hong Kong University of Science and Technology (Guangzhou),
Guangdong 511453, China}

\affil[2]{QudeLeap Research, Shanghai 200030, China}

\affil[3]{Quantum Science Center of Guangdong-Hong Kong-Macao Greater Bay Area,
Shenzhen 518045, China}
\date{}

\maketitle
\begingroup
\renewcommand{\thefootnote}{\fnsymbol{footnote}}

\footnotetext[1]{\href{mailto:felixxinwang@hkust-gz.edu.cn}
{felixxinwang@hkust-gz.edu.cn}}

\footnotetext[2]{\href{mailto:chenranyiliu@quantumsc.cn}
{chenranyiliu@quantumsc.cn}}

\endgroup
\begin{abstract}

We determine the smallest number of active questions per player at which a
four-player binary exclusive-or (XOR) game of commuting-operator value one
need not admit a Greenberger--Horne--Zeilinger (GHZ) equatorial realization.
Such a realization uses the four-qubit GHZ state and equatorial qubit
observables, reducing perfect play to additive phase equations. We prove that
every four-player XOR game with commuting-operator value one and at most three
active questions per player has a perfect GHZ-equatorial strategy. Conversely,
we construct a uniform eight-clause game with four active questions per player
whose commuting-operator value is one but whose phase equations are
inconsistent. Thus four is the sharp local-question threshold. The positive
result follows by lifting every integral incidence obstruction to an ordered
noncommutative refutation, using primitive circuits, forest matchings, and
ternary Hamming geometry. For the separating game, a Klein four-group
incidence relation obstructs the phase system, while an even-subgroup normal
form and degree-one and degree-two Magnus coefficients exclude refutations of
arbitrary length.
\end{abstract}
\vspace{5mm}
\medskip
\noindent\textbf{Formal verification.}
The principal threshold theorem and its obstruction-space corollary have
been formalized in Lean~4.30.0~\cite{deMouraUllrich2021} using
Mathlib~4.30.0~\cite{mathlib2020} and the Lean-QIT
library~\cite{ZhuEtAl2026LeanQIT}. The resulting proof terms are checked by
the Lean kernel. The source code is available in the accompanying GitHub
repository~\cite{FourPlayerXORGames2026}.
Appendix~\ref{app:lean} gives the statement-by-statement correspondence
between the manuscript and the formal development.
\medskip
\newpage
\setcounter{tocdepth}{2}
\tableofcontents

\section{Introduction}\label{sec:introduction}

Nonlocal games recast Bell experiments as cooperative tests and provide a
common language for Bell inequalities, entangled strategies, and
operator-algebraic models of multipartite measurements
\cite{Bell1964,ClauserHorneShimonyHolt1969,
CleveHoyerTonerWatrous2004}.  In a binary \(n\)-player XOR game, the verifier
samples a question tuple (or a \emph{clause}), receives one bit from each player, and checks a
prescribed parity.  
Despite this
restriction, multiplayer XOR games have served as a tractable testbed for
multipartite Bell violations and quantitative bounds on entangled values
\cite{PerezGarciaEtAl2008,BrietVidick2013}. 

At value one, every positive-weight clause imposes an exact algebraic relation among
the players' binary observables.  We compare two realizations of these
relations.  A finite-dimensional tensor-product strategy assigns a Hilbert
space factor to each player, whereas a commuting-operator strategy represents
all observables on one, possibly infinite-dimensional, Hilbert space and
requires operators belonging to distinct players to commute.  The latter
model contains the former and is central to semidefinite and game-algebraic
approaches to nonlocal correlations
\cite{NavascuesPironioAcin2008,CleveLiuSlofstra2017}.  Since the two correlation
models differ in general \cite{Slofstra2019,JiNatarajanVidickWrightYuen2020}, it
is a substantive restriction to ask when commuting-operator satisfiability is
already witnessed by a specific finite-dimensional construction.

For XOR games, the relevant construction is suggested by the GHZ--Mermin
paradox and the structure of multipartite full-correlation Bell inequalities
\cite{GreenbergerHorneShimonyZeilinger1990,Mermin1990,WernerWolf2001}.  We use
GHZ-equatorial, or maximally entangled relative-phase (MERP), strategies in the
terminology of \cite{WattsHarrowKanwarNatarajan2019}.  With the shared state and local measurements given by
\[
  \lvert\mathrm{GHZ}_n\rangle
  =\frac{\lvert0\rangle^{\otimes n}+\lvert1\rangle^{\otimes n}}{\sqrt2},
  \qquad
  M(\theta)=\cos\theta\,X+\sin\theta\,Y,
\]
one has
\begin{equation}\label{eq:intro-ghz-correlator}
  \bigl\langle\mathrm{GHZ}_n\bigr\rvert
  \bigotimes_{\alpha=1}^{n}M(\theta_\alpha)
  \bigl\lvert\mathrm{GHZ}_n\bigr\rangle
  =\cos\!\left(\sum_{\alpha=1}^{n}\theta_\alpha\right).
\end{equation}
Thus a clause \(c\) with target parity \(b_c\) is satisfied with certainty if
and only if
\begin{equation}\label{eq:intro-merp-phase}
  \sum_{\alpha=1}^{n}\theta_{\alpha,q_c^{(\alpha)}}
  =\pi b_c\pmod{2\pi}.
\end{equation}
When the angles are restricted to multiples of \(\pi/2\),
this recovers the usual \(X/Y\) relations of the GHZ--Mermin construction.
Arbitrary angles extend them to continuous \(U(1)\) phase data.  The phase
description is abelian, although the local observables need not commute;
indeed, $M(\theta)M(\phi)=e^{i(\phi-\theta)Z}.$ MERP nevertheless fixes the local dimension, shared state, and measurement
geometry, while a general commuting-operator strategy imposes none of these
restrictions.
The question is whether Eq.~\eqref{eq:intro-merp-phase} is complete for perfect
commuting-operator satisfiability.

Prior work identifies two complementary frontiers of MERP completeness.
Werner and Wolf's analysis of multipartite full-correlation inequalities with
two dichotomic observables per site underlies the two-question regime: for any
number of players, commuting-operator value one is witnessed by a perfect
GHZ-equatorial strategy when each player has at most two active questions
\cite{WernerWolf2001,WattsHelton2023}.  Bene Watts et al. formulated the MERP--PREF duality, which turns consistency of the phase system into the absence of an integer incidence obstruction~\cite{WattsHarrowKanwarNatarajan2019}.  The same work showed that completeness
does not hold in general: it identified a six-player \(123\) game which has a perfect finite-dimensional strategy but no perfect MERP strategy.  In the other direction, Bene Watts and Helton proved that every three-player XOR game of commuting-operator value one admits a perfect three-qubit MERP strategy, for
arbitrary question alphabets, and obtained a polynomial-time decision
procedure for perfect \(3\)XOR~\cite{WattsHelton2023}.   Taken together, these results leave unresolved the
smallest active local alphabet at which perfect commuting-operator
satisfiability can escape the GHZ-phase model in the four-player setting.
This brings us to the central question of the paper:
\begin{quote}
\emph{What is the smallest value of \(k\) for which a four-player \(k\)-question XOR game
can have commuting-operator value one while admitting no perfect MERP
strategy?}
\end{quote}
The following theorem gives the exact answer.

\begin{theorem}\label{thm:main}

\emph{(i)} Let \(G\) be any finite four-player XOR game. If \(G\) has at most three active questions per player, then
\begin{equation}\label{eq:main-result}
  \omega_{\co}(G)=1
  \quad\Longleftrightarrow\quad
  G\text{ admits a perfect MERP strategy}.
\end{equation}

\emph{(ii)} There exists a four-player XOR game \(G\), with four active
questions per player, such that \(\omega_{\co}(G)=1\) but no perfect MERP
strategy exists.  Consequently, four is the smallest number of active
questions per player for which such a separation can occur.
\end{theorem}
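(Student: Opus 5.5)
The proof has two independent halves. The direction ``perfect MERP \(\Rightarrow\) \(\omega_{\co}(G)=1\)'' is immediate from Eq.~\eqref{eq:intro-ghz-correlator}: a solution of the phase system \eqref{eq:intro-merp-phase} yields a tensor-product, hence commuting-operator, perfect strategy. For the converse in part (i), I would argue by contraposition using the MERP--PREF duality of \cite{WattsHarrowKanwarNatarajan2019}. If the phase system has no solution, there is an integer incidence obstruction. This is an integer combination \(\sum_c z_c\) of clauses whose net incidence at every (player, question) vertex vanishes, while \(\sum_c z_c b_c\) is odd.

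The main task is to lift every such obstruction to an ordered noncommutative refutation when each player has at most three questions. Concretely, I want a word in the clause products \(\prod_\alpha A^{(\alpha)}_{q_c^{(\alpha)}}\) that reduces to \(-1\) in the game group. The defining relations are that observables square to \(1\), observables of distinct players commute, and each clause product equals \((-1)^{b_c}\). Such a word forces \(\omega_{\co}<1\).

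To build it, I would decompose the obstruction over \(\Z\) into primitive circuits of the clause--question incidence matrix, so that it suffices to treat minimal obstructions. For a circuit, each player's sequence of questions must return to its start, so the cancellation can proceed in the free product \(\Z_2 * \Z_2 * \Z_2\) generated by that player's three involutions. With three letters, the local words a player sees are reduced words of a specific shape. I would order the clauses through a spanning forest matching, pairing clause occurrences so that each player's subword collapses to the identity. The ternary Hamming geometry of \(\{0,1,2\}^4\) (question tuples) would then control which pairings are simultaneously realizable for all four players.

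I expect this to be the main obstacle: showing a simultaneous ordering exists for all four players. With four letters, the group \(\Z_2^{*4}\) has genuinely noncommuting cycles, and part (ii) shows this breaks down. My plan is a case analysis on the support structure of primitive circuits, with induction on circuit size, using Hamming-distance constraints to exclude the bad configurations.

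For part (ii), I would exhibit an explicit eight-clause game with four questions per player, with question sets indexed by the Klein four-group \(V=\Z_2^2\). The clauses would be arranged so that a Klein-group incidence relation gives an odd integer combination with zero net incidence. That rules out MERP directly, via the duality.

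To show \(\omega_{\co}=1\), I would show that \(-1\) is not trivial in the game group, equivalently that no refutation of any length exists. I would first pass to the even subgroup generated by products of pairs of a player's observables, and derive a normal form for words there. I would then apply the Magnus embedding into noncommutative power series, \(a\mapsto 1+x_a\), and compute degree-one and degree-two coefficients. These coefficients should give homomorphism-like invariants: abelian incidence at degree one and commutator pairings at degree two. Every relator gives them value zero, while any word equal to \(-1\) would force an odd value. Since \(-1\) survives, a perfect commuting-operator strategy exists, for instance via the GNS construction on the group algebra, or by a trace argument.

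Minimality of four then follows by combining the two parts.
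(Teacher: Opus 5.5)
\paragraph{Part (ii): the target statement is false as you set it up.} You impose the clause relations $\prod_\alpha x_{\alpha,q_c^{(\alpha)}}=J^{b_c}$ in the game group. You then propose to show $J\neq 1$ there, using Magnus invariants that vanish on all relators. In that group every active generator satisfies $x_{\alpha,q_c^{(\alpha)}}=J^{b_c}\prod_{\beta\neq\alpha}x_{\beta,q_c^{(\beta)}}$, and the right-hand side commutes with all of player $\alpha$'s generators. So the group is abelian. Multiplying the eight Klein relations, in which each $x_{\alpha,g}$ occurs once among the $E$-clauses and once among the $O$-clauses, gives $J=1$.

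Consequently, no relator-vanishing invariant can detect $J$. Nontriviality of $J$ in this quotient is also not equivalent to the absence of refutations. It would amount to an operator solution, which an XOR game has only when it is classically satisfiable. The same conflation makes your part-(i) criterion wrong as stated: the GHZ game has $J=1$ in this group, yet $\omega_{\co}=1$.

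What must be shown is that $J$ does not lie in the subgroup generated by the clause elements of the group \emph{without} clause relations. Equivalently, no clause word whose four page projections freely reduce has odd target parity, so the argument has to run on an arbitrary such word. The decisive content is also missing from your sketch. Degree one forces $V_g=v$ and $W_g=-v$ for all $g$, with $v$ odd because $O_0$ is the unique odd-target clause. Then a specific integer combination of the degree-two identities $P_{i,j+c}+Q_{i+c,j}+S_{i+c,j+c}=-R_{ij}$ over the four players, together with the order splittings $R_{ij}+R_{ji}=v^2$, $P_{ig}+Q_{gi}=-v^2$ and $S_{gh}+S_{hg}=v^2$, collapses to $8\cdot(\text{integer})=4v^2$. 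This forces $v$ even, a contradiction. The phrase ``any word equal to $-1$ would force an odd value'' supplies nothing of this kind.

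\paragraph{Part (i): the ingredients are named, but the mechanism is missing.} Primitive circuits are support-minimal while their multiplicities $\lvert c_e\rvert$ are unbounded, so ``induction on circuit size'' has nothing to act on. What makes a finite case analysis possible is the rank bound $\lvert\supp c\rvert\le 2+\sum_\alpha(k_\alpha-1)\le 10$, where $k_\alpha$ is the number of questions used on page $\alpha$.

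Nor is the reason three questions suffice a shape property of words in $\Z_2^{*3}$. One selects at most one question block per page as a color and matches its positive and negative occurrences so that the union of the matchings is a bipartite linear forest. One then orders the occurrences alternately, with every matching edge adjacent. Each page then retains at most two letters with vanishing alternating sums, and such two-letter words always freely reduce. The linear-forest condition holds for at most three colors with signatures of size at most two, which already lifts every circuit with at most three ternary pages. It also holds for four colors once some signature is a singleton.

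The Hamming geometry enters by taking the four selected blocks from a point $u$ at distance at least two from the support. A clause at distance three gives a singleton, so a failure forces all distances into $\{2,4\}$. That rigid configuration is excluded by packing and ball-excess counts, with a rank argument for nine-point perfect packings. For ten balls covering $[3]^4$ it is excluded by the cosets of the $[4,2,3]_3$ Hamming code. Without these steps your proposal gives no reason why a single ordering should freely reduce on all four pages simultaneously.
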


We remark that part~(i) does not assert uniqueness or rigidity of the GHZ realization.
Whether this four-question game admits a perfect
finite-dimensional non-MERP strategy remains open as well.

The four-player problem is the natural next boundary case after the
three-player collapse theorem of Bene Watts and Helton
\cite{WattsHelton2023}, but Theorem~\ref{thm:main} is not obtained by a
direct extension of their proof.  That argument is intrinsically organized
around three page projections: projector and right-inverse constructions
clear two page components, while specialized gadget maps control the
remaining one.  With a fourth page, the same operations leave coupled
residual components on several pages, and the clearing identities no longer
close.  Moreover, part~(ii) shows that the unrestricted four-player analogue
is false. A different support-level lifting mechanism is therefore needed.

The algebraic content of this lifting mechanism is most transparent in terms
of two obstruction spaces.  An integer dependence
among the clause-incidence rows whose target-weighted parity is odd obstructs
the MERP phase equations; following
\cite{WattsHarrowKanwarNatarajan2019}, we call it a parity-permuted refutation
specification (PREF).  A true refutation is stronger: it orders clause
occurrences so that every player's question word freely reduces to the
identity and the total target parity is odd.  The known dualities say that a
PREF exists exactly when no perfect MERP strategy exists, whereas a true
refutation exists exactly when the commuting-operator value is below one
\cite{WattsHarrowKanwarNatarajan2019,WattsHelton2023,WattsHeltonKlep2023}.

The collapse result (part (i)) is then proved through an exact circuit-lifting theorem. We show that, for every four-partite ternary
support, each abelian obstruction can be lifted to an ordered
noncommutative refutation. In particular, the
integer relation lattice is generated by primitive support-minimal circuits,
and a rank bound confines each such circuit to at most ten clauses.  For every
primitive circuit we construct a balanced word with the prescribed absolute
multiplicities, using signed-occurrence forest matchings to produce the order
and ternary Hamming geometry to exclude the exceptional configurations.  The
ten-point extremal case is resolved by a ternary perfect-code argument. This
is a uniform lifting result rather than a bounded search. For sharpness (part (ii)), we prove that the Klein four-group game has a PREF but no true refutation of any length. An even-subgroup normal
form, together with the degree-one and degree-two Magnus coefficients, rules
out true refutations with arbitrary order and multiplicity by an explicit
integral parity certificate \cite{Magnus1937,MagnusKarrassSolitar1976}.

Section~\ref{sec:preliminaries} develops the two obstruction criteria and the
Klein witness.  Section~\ref{sec:sufficiency} proves exact circuit lifting for
four ternary pages.  Section~\ref{sec:sharpness} establishes sharpness through
the Cayley-game analysis and the all-length Magnus obstruction.
Section~\ref{sec:discussion} discusses scope and open problems. Additionally, a Lean~4 formalization of our main theorem is provided for completeness, which is given in Appendix \ref{app:lean}.

\section{Preliminaries and obstruction criteria}\label{sec:preliminaries}

\subsection{Games, strategies, and active supports}

A finite four-player binary XOR game \(G\) consists of finite question sets
\(Q_1,\ldots,Q_4\) and a finite family of clauses with weights \(p_c\geq0\)
satisfying \(\sum_c p_c=1\).  Clause \(c\) is specified
by a question tuple
\[
  q_c=(q_c^{(1)},\ldots,q_c^{(4)})
  \in Q_1\times\cdots\times Q_4,
\]
a probability \(p_c\), and a target parity \(b_c\in\mathbb F_2\).  On clause
\(c\), the players return bits \(a_1,\ldots,a_4\) and win precisely when
\[
  a_1\oplus\cdots\oplus a_4=b_c.
\]
A commuting-operator strategy consists of a complex Hilbert space
\(\mathcal H\), a unit vector \(\psi\in\mathcal H\), and bounded
self-adjoint involutions \(A_{\alpha,j}\in\mathcal B(\mathcal H)\) such that
\([A_{\alpha,j},A_{\beta,k}]=0\) whenever \(\alpha\ne\beta\).  Its success
probability is
\begin{equation}\label{eq:co-value-definition}
 V(G;\psi,A)=\sum_c\frac{p_c}{2}
 \left(1+(-1)^{b_c}
 \left\langle\psi,
 \prod_{\alpha=1}^{4}A_{\alpha,q_c^{(\alpha)}}\psi
 \right\rangle\right).
\end{equation}
The commuting-operator value \(\omega_{\co}(G)\) is the supremum of
Eq.~\eqref{eq:co-value-definition} over such strategies.  A strategy is
perfect precisely when
\(\prod_\alpha A_{\alpha,q_c^{(\alpha)}}\psi=(-1)^{b_c}\psi\) for every
positive-weight clause.

A question is \emph{active} if it occurs in at least one positive-weight
clause.  Thus the active question sets, rather than unused alphabet symbols,
are the local resources counted in Theorem~\ref{thm:main}.

Only the positive-weight support is relevant to perfection.  The resulting
reduction to a uniform distribution on the supported clauses is standard in
the perfect-value setting; see, for example,
\cite[Sec.~2.1.1]{WattsHelton2023}.  We delete zero-weight clauses and merge
repeated question tuples with the same target.  Repeated tuples with opposite
targets preclude a perfect strategy.  The resulting set
\[
  E\subseteq Q_1\times\cdots\times Q_4
\]
is the \emph{reduced support}.  Proposition~\ref{prop:weight-reduction} shows
that replacing its positive distribution by the uniform distribution
preserves both commuting-operator value one and perfect MERP satisfiability.
We may therefore work with distinct, uniformly weighted clauses whenever
only these exact properties are at issue.  We state the reduction with an
explicit comparison inequality and include the proof for completeness.

\begin{proposition}[Standard support and weight reduction]
\label{prop:weight-reduction}
Let a reduced support have \(m\) clauses, and let \(p_{\min}\) and
\(p_{\max}\) be the minimum and maximum of a strictly positive clause
distribution \(p\).  For every strategy,
\begin{equation}\label{eq:weight-comparison-app}
 m p_{\min}(1-V_{\unif})
 \leq 1-V_p
 \leq m p_{\max}(1-V_{\unif}),
\end{equation}
where \(V_p\) and \(V_{\unif}\) are its success probabilities under \(p\)
and the uniform distribution.  Hence \(\omega_{\co,p}=1\) if and only if
\(\omega_{\co,\unif}=1\).  Perfect MERP satisfiability likewise depends
only on the reduced support and its targets.
\end{proposition}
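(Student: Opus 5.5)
The plan is to write the loss of a fixed strategy clause by clause and compare the two weightings termwise. Fix a commuting-operator strategy \((\mathcal H,\psi,A)\). For each clause \(c\) of the reduced support put
\[
  \ell_c=\tfrac12\Bigl(1-(-1)^{b_c}\bigl\langle\psi,\textstyle\prod_{\alpha}A_{\alpha,q_c^{(\alpha)}}\psi\bigr\rangle\Bigr).
\]
First I will check that \(\ell_c\) is real and lies in \([0,1]\). The product of the four involutions is self-adjoint, since the factors belong to distinct players and therefore commute. It is also a unitary involution, so its expectation in the unit vector \(\psi\) is a real number in \([-1,1]\).

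By Eq.~\eqref{eq:co-value-definition}, and using \(\sum_c p_c=1\), we get \(1-V_p=\sum_c p_c\ell_c\) and \(1-V_{\unif}=\frac1m\sum_c\ell_c\). Because every \(\ell_c\ge0\) and \(p_{\min}\le p_c\le p_{\max}\), summing gives
\[
  p_{\min}\sum_c\ell_c\le 1-V_p\le p_{\max}\sum_c\ell_c .
\]
Substituting \(\sum_c\ell_c=m(1-V_{\unif})\) yields Eq.~\eqref{eq:weight-comparison-app}.

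For the equivalence of value one, I take the supremum over strategies, noting that both inequalities hold uniformly in the strategy. If \(\omega_{\co,\unif}=1\), pick strategies with \(1-V_{\unif}\to0\). The upper bound then forces \(1-V_p\to0\) for the same strategies, so \(\omega_{\co,p}=1\). Conversely, the lower bound together with \(p_{\min}>0\) transfers \(1-V_p\to0\) to \(1-V_{\unif}\to0\).

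The reduction from the original game to the reduced support needs a separate word. Deleting zero-weight clauses leaves \(V\) unchanged. Merging repeated tuples with equal targets merely adds their weights. If the same tuple appears with opposite targets, perfect strategies are impossible, and the paper already excludes that case.

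For MERP, a strategy is perfect if and only if \(\ell_c=0\) for every positive-weight clause. By Eq.~\eqref{eq:intro-ghz-correlator}, this is the system of phase equations Eq.~\eqref{eq:intro-merp-phase} indexed by the support tuples and their targets. That system does not involve the weights, so MERP perfection depends only on the reduced support.

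The only delicate point is the step that ties "value one" to vanishing loss. Value one is a supremum and need not be attained, so I will argue with approximating sequences rather than with perfect strategies; the two-sided bound handles this directly.
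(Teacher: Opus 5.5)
Your proposal is correct and uses the same termwise loss comparison as the paper; it also adds a valid commutation argument that each \(\ell_c\in[0,1]\), and it makes the ``infima over strategies'' step explicit with approximating sequences. Beyond this, the paper treats the opposite-target case: the two losses on a repeated tuple sum to one, so every strategy loses at least \(\min\{p_0,p_1\}\) and \(\omega_{\co}<1\). Your remark that perfect strategies are impossible would not by itself give this, since the supremum need not be attained, but the case lies outside the reduced-support hypothesis and is only needed later, in the proof of Theorem~\ref{thm:main}(i).
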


\begin{proof}
If \(\ell_c\in[0,1]\) is the loss probability on clause \(c\), then
\(1-V_p=\sum_c p_c\ell_c\) and
\(1-V_{\unif}=m^{-1}\sum_c\ell_c\).  Bounding every \(p_c\) between
\(p_{\min}\) and \(p_{\max}\) proves Eq.~\eqref{eq:weight-comparison-app};
taking infima over strategies proves the value-one equivalence.  A MERP
strategy is perfect precisely when it satisfies the phase equation on every
positive-weight clause.  Deleting zero-weight clauses and merging repeated
equal-target clauses therefore preserve both properties.  If two clauses on
the same question tuple have opposite targets and weights \(p_0,p_1>0\),
their combined loss is at least \(\min\{p_0,p_1\}\), independently of the
strategy.  Such a pair therefore forces \(\omega_{\co}<1\) and also precludes
a perfect MERP strategy.
\end{proof}

For a MERP strategy, question \(j\) of player \(\alpha\) is assigned the
observable \(M(\theta_{\alpha,j})\).  Identifying the output bit
\(a_\alpha\) with the eigenvalue \((-1)^{a_\alpha}\), clause \(c\) is won
with certainty exactly when
\begin{equation}\label{eq:merp-phase}
  \sum_{\alpha=1}^{4}
  \theta_{\alpha,q_c^{(\alpha)}}
  =\pi b_c\pmod{2\pi}.
\end{equation}
Thus perfect MERP satisfiability is an abelian phase-consistency problem.

\subsection{Abelian and noncommutative obstructions}
\label{sec:obstructions}

Let \(A\) be the clause-by-player-question incidence matrix of the reduced
support:
\[
  A_{e,(\alpha,j)}
  =
  \begin{cases}
    1,&q_e^{(\alpha)}=j,\\
    0,&q_e^{(\alpha)}\neq j.
  \end{cases}
\]
We call the block of columns associated with a fixed player, and by extension
the corresponding projection of a clause word, that player's \emph{page}.
Its integer relation lattice and its parity image are
\begin{align}
  K_E&=\ker_{\mathbb Z}(A^{\mathsf T}),\\
  L_E&=\{y\bmod2:y\in K_E\}\subseteq\mathbb F_2^E.
  \label{eq:LE-main}
\end{align}
An integer vector \(y\) is a \emph{parity-permuted refutation} (PREF) specification if
\begin{equation}\label{eq:pref-definition}
  A^{\mathsf T}y=0,
  \qquad
  b^{\mathsf T}y=1\pmod2.
\end{equation}
The noncommutative obstruction retains the order of the clauses.  A clause
word \(w=e_1\cdots e_N\) is \emph{balanced} if, for each player, the projected
word in the question involutions freely reduces to the identity.  Let
\(\pi(w)\in\mathbb F_2^E\) record the parity with which each clause occurs and
set
\begin{equation}\label{eq:RE-main}
  R_E=\operatorname{span}_{\mathbb F_2}
  \{\pi(w):w\text{ is balanced}\}.
\end{equation}
A balanced word with \(b^{\mathsf T}\pi(w)=1\) is a \emph{true refutation}.

For completeness, consider the standard game group generated by a central
involution \(J\) and question involutions \(x_{\alpha,j}\), with generators
belonging to distinct players required to commute.  Clause \(c\) is represented by
\[
  J^{b_c}\prod_{\alpha=1}^{4}x_{\alpha,q_c^{(\alpha)}}.
\]
An ordered product of these clause elements equals \(J\) exactly when the
corresponding clause word is a true refutation.  The following proposition
records the two proof interfaces used throughout the paper.

\begin{proposition}[MERP--PREF and strategy--refutation dualities]
\label{prop:obstruction-dualities}
For every finite reduced binary XOR game,
\begin{align}
  G\text{ has a perfect MERP strategy}
  &\quad\Longleftrightarrow\quad b\in L_E^\perp
  \quad\Longleftrightarrow\quad\text{no PREF exists},
  \label{eq:merp-main}\\
  \omega_{\mathrm{co}}(G)=1
  &\quad\Longleftrightarrow\quad b\in R_E^\perp
  \quad\Longleftrightarrow\quad\text{no true refutation exists}.
  \label{eq:co-main}
\end{align}
Consequently, \(\omega_{\mathrm{co}}(G)=1\) without a perfect MERP strategy
if and only if a PREF exists but no true refutation exists.
\end{proposition}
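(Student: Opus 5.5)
We prove the two equivalence chains separately and then combine them. In each chain the middle condition, \(b\in L_E^\perp\) or \(b\in R_E^\perp\), is the linear-algebra hinge.

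\textbf{The MERP chain \eqref{eq:merp-main}.} Write the phase system \eqref{eq:merp-phase} as \(A\theta\equiv\pi b \pmod{2\pi}\), with \(\theta\in(\mathbb R/2\pi\mathbb Z)^{Q}\). Dividing by \(\pi\), this is solvability of \(At\equiv b\pmod{2}\) with \(t\) real modulo \(2\), that is, \(b\in A\,\mathbb R^{Q}+2\mathbb Z^E\). By the duality for lattices and tori (Pontryagin duality, or the integral Farkas lemma, via Smith normal form of \(A\)), the system is solvable if and only if every integer \(y\) with \(y^{\mathsf T}A\in 2\mathbb Z\) and \(y^{\mathsf T}A\) vanishing on \(\mathbb R\)-directions satisfies \(y^{\mathsf T}b\in2\mathbb Z\).

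Concretely, I would diagonalize \(A=UDV\) with \(U,V\) unimodular. The real image \(A\mathbb R^Q\) is a rational subspace, and the character group of \(\mathbb R^E/(A\mathbb R^Q+2\mathbb Z^E)\) is \(\{y\in\tfrac12\mathbb Z^E : A^{\mathsf T}y=0\}\). Hence solvability holds if and only if \(b^{\mathsf T}y\in2\mathbb Z\) for all \(y\in K_E\), which is exactly \(b\in L_E^\perp\). Its negation is the existence of an integer \(y\) satisfying \eqref{eq:pref-definition}, namely a PREF.

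For the converse direction I would construct \(\theta\) explicitly. Pass to a \(\mathbb Z\)-basis of the saturated lattice \(\mathbb Z^E\cap A\mathbb Q^Q\). The parity condition says \(b\bmod2\) lies in the image of \(A\) over \(\mathbb Q\) up to \(2\mathbb Z\), so rational angles exist.

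\textbf{The commuting-operator chain \eqref{eq:co-main}.} I would pass through the game group \(\Gamma\) described before the proposition. The standard result (Cleve--Liu--Slofstra, and Watts--Helton--Klep for XOR games) is that \(\omega_{\co}(G)=1\) if and only if \(J\neq 1\) in \(\Gamma\).

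For the forward direction, take a perfect strategy. The GNS-type construction of the cyclic subspace turns the clause relations \(\prod_\alpha A_{\alpha,q_c^{(\alpha)}}\psi=(-1)^{b_c}\psi\) into a representation of \(\Gamma\) on the closure of \(\{\text{words}\cdot\psi\}\) with \(J\mapsto-1\).

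For the reverse direction, use the left-regular representation of \(\Gamma\), or more precisely its \(J=-1\) isotypic part, together with the vector \(\delta_1\). The defining relations make clause elements act as \(J^{b_c}\)-eigen on \(\delta_1\) once we quotient by the clause subgroup. The fact that the commuting clause elements generate a subgroup not containing \(J\) gives the state.

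Next, \(J=1\) in \(\Gamma\) means some ordered product of clause elements equals \(J\). Since the clause elements commute with \(J\) and the players' generators commute across pages, this product reduces to \(J^{\sum b_{e_i}}\prod_\alpha(\text{page word}_\alpha)\). It equals \(J\) if and only if each page word reduces to the identity in the free product of \(\mathbb Z_2\)'s and the parity is odd, which is a true refutation. Finally, the set of parity vectors of balanced words is closed under concatenation, so it is already a subspace. Hence a true refutation exists if and only if \(b\notin R_E^\perp\).

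\textbf{Conclusion and main obstacle.} The final sentence of the proposition follows by combining the two chains. The main obstacle is the commuting-operator step. Justifying that perfection for the supremum value (not just attained) forces \(J\neq1\) needs the known fact that for XOR games \(\omega_{\co}=1\) is attained, via compactness of states on the universal \(C^*\)-algebra. I would cite \cite{WattsHelton2023,WattsHeltonKlep2023} for this rather than reprove it.
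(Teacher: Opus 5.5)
\textbf{Verdict on the MERP chain.} Your MERP chain is correct. The Smith-normal-form computation (equivalently, Pontryagin duality for the subgroup $A\R^Q+2\Z^E$, which is closed because $A\R^Q$ is a rational subspace) shows that $At\equiv b\pmod 2$ is solvable over $\R$ exactly when $b^{\mathsf T}y$ is even for every $y\in K_E$. This is a self-contained substitute for the paper's citation of the MERP--PREF alternative. Two further steps are also fine: your reduction of ``an ordered product of clause elements equals $J$'' to a true refutation, and the closure of balanced parity vectors under concatenation.

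\textbf{The gap: the game-group criterion.} In the paper's group $\Gamma$ no clause relations are imposed, so $J\neq1$ always. Your forward step, a representation on $\overline{\{w\psi\}}$ in which the clause relations hold and $J\mapsto-1$, shows that you really mean the quotient of $\Gamma$ by the \emph{normal closure} of the clause elements $h_c=J^{b_c}\prod_\alpha x_{\alpha,q_c^{(\alpha)}}$. For XOR games that criterion is false. Take the GHZ game: questions $000$ with target $0$ and $011,101,110$ with target $1$; add a one-question dummy player if you want four players. The relations give $u:=x_{2,0}x_{2,1}=x_{1,1}x_{1,0}$ and $x_{1,1}x_{1,0}=Jx_{2,1}x_{2,0}$, so $u^2=J$. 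But $u$ equals a player-one element, so it commutes with $x_{2,0}$. Hence $u=x_{2,0}ux_{2,0}=u^{-1}$, and $J=1$ in the quotient, although the GHZ strategy is perfect.

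Your forward step fails because perfection gives $C_c\psi=(-1)^{b_c}\psi$, with $C_c=\prod_\alpha A_{\alpha,q_c^{(\alpha)}}$, only on the vector $\psi$. It is not an operator identity on the cyclic subspace. Correspondingly, your sentence ``$J=1$ in $\Gamma$ means some ordered product of clause elements equals $J$'' conflates the normal closure of the $h_c$ with the subgroup $H$ they generate.

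\textbf{How to repair it.} The true criterion, and the one the paper cites from Watts--Helton, is subgroup membership: $J\notin H$. Since every $h_c$ is an involution, $H$ is exactly the set of ordered products of clause elements, which is what your normal-form reduction needs.

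With this criterion, the direction ``true refutation $\Rightarrow\omega_{\co}(G)<1$'' needs no representation. If $e_1\cdots e_N$ is a true refutation, then $C_{e_1}\cdots C_{e_N}=1$ as an operator, whereas applying the clause relations to $\psi$ one factor at a time gives $-\psi$. Quantitatively, the telescoping bound $2\le\sum_i\|(C_{e_i}-(-1)^{b_{e_i}})\psi\|$ holds, and $\|(C_e-(-1)^{b_e})\psi\|^2$ is four times the loss probability on $e$. This gives $\omega_{\co}(G)<1$ directly, without your attainment argument.

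For the converse, let $\Gamma$ act by left translation on $\ell^2(\Gamma/H)$. The translations by $x_{\alpha,j}$ are self-adjoint involutions commuting across players. Take $\psi=(\delta_H-\delta_{JH})/\sqrt2$, which is a unit vector precisely because $JH\neq H$. Then $J\psi=-\psi$, and $h_c\psi=\psi$ since $h_cJH=Jh_cH=JH$. In the left-regular representation you proposed, $\delta_1$ is neither $J$-odd nor fixed by the $h_c$, so passing to the coset space $\Gamma/H$ is essential.
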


\begin{proof}
The first equivalence is the MERP--PREF alternative
\cite[Theorem~21 and ~22]{WattsHarrowKanwarNatarajan2019}.
For the second, the XOR game-group criterion gives the value-one equivalence
\cite[Theorem~2.1]{WattsHelton2023}; see also the
Nullstellensatz formulation in \cite[Sec.~3.5.1, Theorem~5.12, and Sec.~5.3.2]{WattsHeltonKlep2023}.  Orthogonality to
\(R_E\) is precisely the absence of a balanced word with odd target parity.
In the cited terminology, \(A^{\mathsf T}y=0\) is the parity-permuted
incidence condition, while a balanced word records an ordered product of
clause generators in the game group.
\end{proof}

\subsection{The Klein four-group witness}\label{sec:klein-witness}

Write \(V_4=\{0,a,b,c\}\), with \(a+b=c\), and index both the players and
their questions by \(V_4\).  The eight clauses of the separating game are
listed in Table~\ref{tab:klein}.

\begin{table}[htbp]
\centering
\caption{The Klein four-group game.  The question tuple is ordered by the
players \(0,a,b,c\); all clauses have weight \(1/8\), and the last column
gives the target parity.}
\label{tab:klein}
\begin{tabular}{ccc}
\toprule
Clause & Question tuple & Target\\
\midrule
\(E_0\) & \((0,0,0,0)\) & \(0\)\\
\(E_a\) & \((a,a,a,a)\) & \(0\)\\
\(E_b\) & \((b,b,b,b)\) & \(0\)\\
\(E_c\) & \((c,c,c,c)\) & \(0\)\\
\(O_0\) & \((0,a,b,c)\) & \(1\)\\
\(O_a\) & \((a,0,c,b)\) & \(0\)\\
\(O_b\) & \((b,c,0,a)\) & \(0\)\\
\(O_c\) & \((c,b,a,0)\) & \(0\)\\
\bottomrule
\end{tabular}
\end{table}

Every player-question pair occurs once in the \(E\)-family and once in the
\(O\)-family.  The phase contradiction is therefore immediate: summing the
four \(O\)-clause equations in Eq.~\eqref{eq:merp-phase} and subtracting the
four \(E\)-clause equations cancels every local angle, whereas the target
side is \(\pi\) modulo \(2\pi\), because \(O_0\) is the unique odd-target
clause.  Equivalently, the coefficient assignment
\begin{equation}\label{eq:cayley-pref}
  E_g\longmapsto-1,\qquad O_g\longmapsto1\qquad(g\in V_4)
\end{equation}
is an integer incidence relation of odd target parity, hence a PREF.
Theorem~\ref{thm:all-length-obstruction} proves that no true refutation
exists, irrespective of length, clause order, or multiplicities, for every
elementary abelian \(2\)-group; the Klein game is the rank-two case.  Together
with Eqs.~\eqref{eq:merp-main} and \eqref{eq:co-main}, this yields part~(ii)
of Theorem~\ref{thm:main}.

\section{Exact circuit lifting for four ternary pages}
\label{sec:sufficiency}

Using the two dualities in Proposition~\ref{prop:obstruction-dualities},
Theorem~\ref{thm:main}(i) reduces to the exactness of abelianization for four
pages with at most three active questions, namely
\begin{equation}\label{eq:central-target}
  L_E=R_E
  \qquad\text{for every }E\subseteq[3]^4.
\end{equation}
The inclusion \(R_E\subseteq L_E\) holds in general.  The content of this
section is the reverse inclusion for four-partite ternary supports, which
requires an unordered integral incidence relation to be realized by a single
clause ordering whose four page projections all freely reduce.

We prove a statement that retains the full integral multiplicity data.  The
\emph{alternating vector} of a word \(w=e_1\cdots e_N\) is
\[
  y(w)_e=\sum_{\{t:e_t=e\}}(-1)^{t-1}.
\]
A \emph{primitive support-minimal circuit} is a nonzero vector
\(c\in\ker_{\mathbb Z}(A^{\mathsf T})\) such that no nonzero relation has
support strictly contained in \(\supp c\), and whose nonzero coordinates have
greatest common divisor one.  It is determined by its support up to sign.

\begin{theorem}[Exact circuit lifting]\label{thm:exact-circuit-lifting}
Let \(E\subseteq[3]^4\) be a distinct clause support, let \(A\) be its
clause-by-player-question incidence matrix, and let
\(c\in\ker_{\mathbb Z}(A^{\mathsf T})\) be a primitive support-minimal
circuit.  There exists a balanced word \(w_c\) that uses each clause \(e\)
exactly \(\lvert c_e\rvert\) times and satisfies
\begin{equation}\label{eq:exact-circuit-lift}
  y(w_c)=c\ \text{ or }\ -c.
\end{equation}
In particular, \(\pi(w_c)=c\bmod2\).
\end{theorem}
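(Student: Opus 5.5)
The plan is to turn the circuit into a multigraph-style ordering problem, one page at a time, and then to glue the four page constraints together by a single alternating walk.

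\emph{Reduction to orderings.} Write \(c=c^+-c^-\). A word \(w\) with \(y(w)=\pm c\) that uses each clause exactly \(\lvert c_e\rvert\) times must place the clauses with \(c_e>0\) in odd positions and those with \(c_e<0\) in even positions (or the reverse). Each clause is then used with a single sign, as many times as its multiplicity requires. Since the question letters are involutions, the projection of \(w\) to page \(\alpha\) is a word in at most three letters \(x_{\alpha,1},x_{\alpha,2},x_{\alpha,3}\). The condition \(A^{\mathsf T}c=0\) says that on every page each question carries as much positive mass as negative mass. So we need one interleaving of the positive and negative occurrences, alternating in sign, whose four projections all freely reduce.

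\emph{Bounding the support.} First I would use support-minimality to bound the size of a circuit. The columns of \(A\) restricted to \(\supp c\) span a space of rank \(\lvert\supp c\rvert-1\). The rank of the incidence matrix of \(E\subseteq[3]^4\) is at most \(1+4\cdot2=9\), because the four all-ones page blocks coincide. Hence \(\lvert\supp c\rvert\le 10\). Next I would show that the coefficients of a primitive circuit are small: by Cramer's rule they are maximal minors of a \(0/1\) matrix with at most three ones per page. I expect most of them to be \(\pm1\), and possibly \(\pm2\) in a few cases.

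\emph{Constructing the word.} For each page I would form a matching between positive and negative occurrences that share the same question; such a matching exists by the balance condition. The union over all four pages forms a signed-occurrence multigraph. The aim is a matching choice that is a forest in a suitable sense, so that it can be traversed as nested pairs \(e\,e'\) whose letters cancel on every page simultaneously. Concretely, I would order the word so that on each page it becomes a concatenation of nested cancelling pairs, that is, a Dyck-type reduction.

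The obstruction is a pair of pages whose matchings cross. In the ternary setting I would encode each clause as a point of \(\F_3^4\) and argue with Hamming distance. Two clauses that agree on many pages can be placed adjacently. The configurations in which no choice of matchings is simultaneously non-crossing should force the support to be a highly symmetric set. Exhausting these exceptional configurations, especially the ten-point extremal case, is where I expect the main difficulty to lie.

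For the ten-point case I would exploit the ternary perfect-code structure: a ten-point support of full rank looks like a Hamming-type code in \([3]^4\). Its symmetry should then allow an explicit balanced word to be written down and checked, together with its orbit under the automorphism group. This case is the main obstacle.

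The final assertion \(\pi(w_c)=c\bmod 2\) is immediate, since \(y(w)_e\equiv\) (number of occurrences of \(e\)) \(\pmod 2\).
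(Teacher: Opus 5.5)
You set the problem up correctly (alternating signs, every page projection must freely reduce), and your bound $\lvert\supp c\rvert\le 10$ from the rank count matches the paper. The gap is that the existence of the word, which is the whole content of the theorem, is never argued. Asking for a noncrossing matching of equal letters on every page only restates balancedness. The ``forest'' version you gesture at, with matched pairs placed adjacently, cannot hold for all questions of all pages. Every occurrence is matched once on each of the four pages, so the union has degree up to four. Moreover, an adjacent positive/negative pair never cancels on all pages at once, because the two occurrences come from distinct clauses.

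The missing idea is that only one question block per ternary page has to be controlled:
\begin{itemize}
\item Select at most one block per page and match its positive and negative occurrences so that the simple union of these matchings is a bipartite linear forest.
\item Order all occurrences alternately positive/negative, with every forest edge consecutive, by traversing the paths.
\item The selected letters then cancel in adjacent pairs without changing the parity of the surviving positions. Each page is left with a word in at most two involutions whose letterwise alternating sums are entries of $A^{\mathsf T}c=0$, and such a word always freely reduces.
\end{itemize}
This turns lifting into a matching problem on signatures (the set of selected blocks containing a clause). That problem is solvable for at most three colors with signatures of size at most two, and for four colors whenever some signature is a singleton. Your coefficient-size heuristic is neither proved nor needed; the paper never bounds $\lvert c_e\rvert$.

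Your proposal also does not identify how the Hamming geometry enters. With at most three ternary pages, one block per ternary page already keeps signatures of size at most two; for exactly three, take the blocks of a point of $[3]^3$ missed by the projection of the at most ten clauses. With four ternary pages, select the coordinates of a safe center $u$ with $\dist(u,S)\ge2$. Signatures are then agreements with $u$, so they have size at most two, and any clause at distance three gives a singleton. A nonliftable circuit therefore forces every safe center to lie at distance two or four from every clause. Packing and rank counts then exclude nonliftable circuits on supports of size at most nine and on ten-point supports with a safe center. These counts use that safe centers are pairwise at distance at least three, that there are at most six of them, and that each has local excess $2(a_u-4)$.

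Your plan for the ten-point case points the wrong way. When the ten radius-one balls cover $[3]^4$, the paper shows the support cannot carry a primitive circuit at all. Coset counting against the $[4,2,3]_3$ code gives $3n_1+2n_2=9$, while blocks of size at least two force $2n_1+n_2-n_4\ge3$. That leaves at most six edges, too few for the Gram graph to be connected, contradicting the one-dimensional kernel. Nine-point perfect codes have rank nine and are not circuits either. So there is no symmetric exceptional support for which to write an explicit word. What is needed is this exclusion argument, together with lifts for all non-extremal supports, and your proposal supplies neither.
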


\begin{corollary}
\label{cor:obstruction-exactness}
For every distinct support \(E\subseteq[3]^4\), one has \(L_E=R_E\).
\end{corollary}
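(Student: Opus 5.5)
We prove Corollary~\ref{cor:obstruction-exactness} from Theorem~\ref{thm:exact-circuit-lifting} by two standard facts about integer relation lattices; no further combinatorics is needed. The inclusion \(R_E\subseteq L_E\) holds in general and is proved first. Then we show \(L_E\subseteq R_E\) by generating \(K_E\) with primitive support-minimal circuits and lifting each circuit with Theorem~\ref{thm:exact-circuit-lifting}.

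\emph{Step 1: \(R_E\subseteq L_E\).} Take a balanced word \(w=e_1\cdots e_N\) and fix a player \(\alpha\) and a question \(j\). Free reduction of the page word in involutions cancels only adjacent equal letters \(x_{\alpha,j}x_{\alpha,j}\). Assign the occurrence at position \(t\) the sign \((-1)^{t-1}\). A cancelling pair lies at positions of opposite parity, because the letters between the two members themselves cancel in pairs and so are even in number. Cancellation therefore removes a \(+1\) and a \(-1\), and a word that reduces to the identity has signed count zero for every letter. This says exactly that \(A^{\mathsf T}y(w)=0\), so \(y(w)\in K_E\). Since \(y(w)_e\equiv\pi(w)_e\pmod 2\), we get \(\pi(w)\in L_E\). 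As \(L_E\) is a subspace, it contains the span \(R_E\).

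\emph{Step 2: circuits generate \(K_E\).} We show that every \(y\in K_E\) is an integer combination of primitive support-minimal circuits, using the conformal decomposition of integer kernel vectors. Induct on \(|\supp y|\) for nonzero \(y\). Choose a relation whose support is minimal among those contained in \(\supp y\), and divide by the gcd of its entries to get a primitive circuit \(c\). Its kernel over \(\mathbb Q\) restricted to \(\supp c\) is one-dimensional, so \(y\) is not proportional to \(c\) unless \(\supp y=\supp c\).

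If \(\supp y=\supp c\), then \(y\) is a rational multiple of \(c\). Primitivity of \(c\) makes that multiple an integer.

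Otherwise pick \(e\in\supp c\) and form \(y'=c_e\,y-y_e\,c\), which lies in \(K_E\). Its support is contained in \(\supp y\) and misses \(e\), so \(|\supp y'|<|\supp y|\). Because \(c_e\) may differ from \(\pm1\), this writes only \(c_e y\), not \(y\), as an integer combination of circuits.

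For the parity statement, division by \(c_e\) is harmless when \(c_e\) is odd. We make this choice possible as follows. If \(y\bmod 2=0\), there is nothing to prove. Otherwise apply the reduction inside \(\supp(y\bmod 2)\). If every circuit available there had all of its entries even on the chosen set, this would contradict primitivity, so an odd coordinate \(e\) can be found. Alternatively, one can simply work over \(\mathbb Z\) with the standard fact that primitive circuits generate the saturated lattice \(K_E=\ker_{\mathbb Z}\cap\mathbb Z^E\).

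Either way, reducing modulo \(2\) gives
\[
  L_E=\operatorname{span}_{\mathbb F_2}\{c\bmod 2: c\ \text{a primitive support-minimal circuit}\}.
\]

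\emph{Step 3: conclusion.} For each primitive support-minimal circuit \(c\), Theorem~\ref{thm:exact-circuit-lifting} supplies a balanced word \(w_c\) with \(\pi(w_c)=c\bmod 2\), so \(c\bmod 2\in R_E\). By Step 2 these vectors span \(L_E\), hence \(L_E\subseteq R_E\). Combined with Step 1, \(L_E=R_E\).

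The only delicate point is Step 2, namely ensuring the mod-\(2\) images of circuits span \(L_E\) rather than a proper subspace. The cleanest route is to cite that primitive circuits generate the relation lattice over \(\mathbb Z\), which the paper's overview asserts.
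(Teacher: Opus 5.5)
Your proof follows the paper's argument. Step~1 is Lemma~\ref{lem:alternating-lift} with its proof, and \(L_E\subseteq R_E\) comes from integral circuit generation (Proposition~\ref{prop:circuit-generation}) plus the exact lifts of Theorem~\ref{thm:exact-circuit-lifting}.

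The only divergence is in how Step~2 closes the induction. The paper stays over \(\Z\). It notes that \(c_e y\) lies in the circuit span for \emph{every} \(e\in\supp c\), and then uses B\'ezout, \(\sum_e a_e c_e=1\), to write \(y=\sum_e a_e(c_e y)\). This gives full integral generation. Your odd-coordinate device is the mod-\(2\) shadow of that step, since an odd \(c_e\) is a unit in \(\F_2\). It is enough for the corollary, but it should be stated cleanly:
\begin{itemize}
\item run the induction on the parity claim itself;
\item keep \(c\) chosen with \(\supp c\subseteq\supp y\);
\item pick \(e\in\supp c\) with \(c_e\) odd, which exists simply because \(\gcd\{c_e:e\in\supp c\}=1\).
\end{itemize}
Then \(y\equiv c_e y=y'+y_e c\pmod 2\) with \(\lvert\supp y'\rvert<\lvert\supp y\rvert\), and the induction closes.

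Drop the instruction to ``apply the reduction inside \(\supp(y\bmod 2)\).'' In general the odd coordinates of \(y\) need not contain the support of any nonzero relation, and the argument does not need them to. With that correction, or by simply citing Proposition~\ref{prop:circuit-generation} as your fallback does, the proof is complete.
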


The exact clause multiplicities are stronger than a parity realization for
each primitive circuit.  This strengthened
conclusion makes the passage from integral circuit generation to the
refutation space immediate.  The proof has three ingredients.  Support minimality
localizes every circuit to at most ten clauses.  Signed-occurrence matchings
then convert the ordering problem into a linear-forest problem.  Finally,
ternary Hamming geometry excludes every configuration for which the required
forest might fail to exist.

\subsection{Algebraic localization}\label{sec:parity}

We record the algebraic reductions used to pass between the circuit-lifting
theorem and the obstruction spaces, and then localize primitive circuits to
bounded support.

\begin{lemma}\label{lem:alternating-lift}
For every balanced word \(w=e_1\cdots e_N\), its alternating vector
satisfies \(A^{\mathsf T}y(w)=0\) and
\(y(w)\bmod2=\pi(w)\).  Hence \(R_E\subseteq L_E\).
\end{lemma}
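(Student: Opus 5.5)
The plan is to verify the two claims of Lemma~\ref{lem:alternating-lift} separately and then read off the inclusion \(R_E\subseteq L_E\) from them. The parity identity is immediate. For each clause \(e\), the coordinate \(y(w)_e\) is a sum of \(n_e=\lvert\{t:e_t=e\}\rvert\) terms \(\pm1\). Since \(\pm1\equiv1\pmod 2\), this gives \(y(w)_e\equiv n_e\pmod2\), and \(n_e\bmod 2\) is exactly \(\pi(w)_e\).

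The substantive step is \(A^{\mathsf T}y(w)=0\). Fix a player \(\alpha\) and a question \(j\). By definition of \(A\),
\[
(A^{\mathsf T}y(w))_{(\alpha,j)}=\sum_{e:\,q_e^{(\alpha)}=j}y(w)_e=\sum_{t:\,q_{e_t}^{(\alpha)}=j}(-1)^{t-1}.
\]
This is the alternating count of the positions at which letter \(x_{\alpha,j}\) appears in the projected page word \(u_\alpha=x_{\alpha,q_{e_1}^{(\alpha)}}\cdots x_{\alpha,q_{e_N}^{(\alpha)}}\), with each position weighted by its parity. To show this count vanishes when \(u_\alpha\) freely reduces to the identity, I introduce the invariant
\[
\Phi_j(v_1\cdots v_N)=\sum_{t:\,v_t=j}(-1)^{t-1}.
\]
The plan is to check that \(\Phi_j\) is unchanged by one elementary reduction in the free product of copies of \(\mathbb Z/2\), namely the deletion of an adjacent pair \(v_sv_{s+1}\) with \(v_s=v_{s+1}\).

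Verifying that invariance is the only real obstacle. I will check it directly. If the deleted letter is \(j\), the pair contributes \((-1)^{s-1}+(-1)^{s}=0\). In every case, each later position shifts by two, so its sign \((-1)^{t-1}\) is unchanged. Earlier positions are untouched. Hence \(\Phi_j\) is preserved. Since \(u_\alpha\) reduces to the empty word, where \(\Phi_j=0\), we get \(\Phi_j(u_\alpha)=0\) for every \(j\) and every \(\alpha\). That is exactly \(A^{\mathsf T}y(w)=0\).

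Finally, \(y(w)\in\ker_{\mathbb Z}(A^{\mathsf T})=K_E\), so \(\pi(w)=y(w)\bmod2\in L_E\) for every balanced word \(w\). Since \(L_E\) is an \(\mathbb F_2\)-subspace, being the image of a lattice under reduction mod~2, it contains the span \(R_E\) of these vectors.
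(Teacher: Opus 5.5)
Your proposal is correct and follows the paper's argument. The paper fixes a complete free reduction and notes that each cancelled pair of equal letters occupies positions of opposite parity, because the word between them reduces and so has even length. You instead show that the signed count is invariant under each adjacent deletion. Both are bookkeeping for the same cancellation, and the parity identity and the inclusion \(R_E\subseteq L_E\) are handled identically.
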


\begin{proof}
Fix a player and a complete free reduction of the corresponding projected
word.  Matched equal letters
occupy positions \(r<s\) whose intervening subword also reduces to the
empty word.  That subword has even length, so \(r\) and \(s\) have opposite
parity.  Their contributions cancel in the alternating sum for their common
question.  Summing over reduction pairs proves every coordinate of
\(A^{\mathsf T}y(w)=0\).  Modulo two, both signs equal \(1\), giving the
parity identity.
\end{proof}

Thus only \(L_E\subseteq R_E\) remains.  The first reduction is genuinely
integral: rational spanning of the relation space would not suffice for the
parity conclusion.

\begin{proposition}\label{prop:circuit-generation}
The lattice \(K_E\) is generated over \(\Z\) by its primitive
support-minimal circuits.
\end{proposition}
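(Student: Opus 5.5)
The plan is to argue by induction on the support, working with the saturated sublattices \(K_S=\{y\in K_E:\supp y\subseteq S\}\) for \(S\subseteq E\). Let \(C_S\subseteq K_S\) be the subgroup generated by primitive support-minimal circuits with support in \(S\). I will prove \(C_S=K_S\) for every \(S\) by induction on \(\card{S}\), and the proposition is the case \(S=E\). Two remarks make the induction coherent. A circuit of \(K_E\) with support in \(S\) is exactly a circuit of \(K_S\). Primitivity is intrinsic to the vector, since it is determined by its support up to sign. The base cases \(K_S=0\) and \(\operatorname{rank}K_S=1\) are immediate: in rank one, \(K_S\) is generated by a single primitive vector, and that vector is automatically support-minimal within \(S\).

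For the inductive step, fix \(S\) and a coordinate \(e\in S\) lying in the support of some relation. Restricting to \(e\) gives a homomorphism \(\rho_e:K_S\to\Z\) whose image is \(g\Z\) for some \(g\geq1\), and whose kernel is \(K_{S\setminus\{e\}}\). By the inductive hypothesis this kernel equals \(C_{S\setminus\{e\}}\subseteq C_S\). Hence \(K_S=C_S\) follows once some element of \(C_S\) has \(e\)-coordinate exactly \(g\). Equivalently, the greatest common divisor of the \(e\)-coordinates of the circuits through \(e\) inside \(S\) must equal \(g\).

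This gcd identity is the step I expect to be the main obstacle. I would attack it through Cramer's rule in the row matroid of \(A\). Choose a basis \(B\subseteq S\setminus\{e\}\) of the span of the rows of \(A\) indexed by \(S\setminus\{e\}\), and let \(f_1,\dots,f_k\) be the remaining rows in \(S\setminus\{e\}\). Every fundamental circuit in \(B\cup\{e\}\) or \(B\cup\{f_i\}\) has coordinates equal to maximal minors, divided by their gcd. I would then compare the lattice generated by the fundamental circuits with \(K_S\) via the Smith normal form of the submatrix with rows \(B\cup\{e\}\). The aim is to show that the index \([K_S:C_S]\) divides a quantity that the circuit exchanges, namely changing the basis \(B\) one element at a time, force to be \(1\).

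If this general matroid argument stalls, my fallback is to exploit the specific structure of \(A\). Every row has exactly one \(1\) in each of the four pages, and \(E\subseteq[3]^4\). Consequently \(A^{\mathsf T}\) has at most \(12\) rows and rank at most \(9\), since the four page sums coincide. This bounds the circuit size by ten, as the paper anticipates. It also makes the relevant minors small enough to control the gcd explicitly, or in the worst case to check by a finite enumeration of supports up to the symmetries of \([3]^4\). Either route would establish the gcd identity and close the induction.
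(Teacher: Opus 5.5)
\paragraph{There is a genuine gap: the gcd identity is never proved.}
Your reduction is correct as far as it goes. With $K_S$ and $C_S$ as you define them, $\ker\rho_e=K_{S\setminus\{e\}}=C_{S\setminus\{e\}}$ by induction. So $K_S=C_S$ is equivalent to $\rho_e(C_S)=\rho_e(K_S)$, which is exactly your gcd identity. But that identity is the whole content of the proposition, and the proposal does not prove it.

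The Cramer's-rule paragraph only states an aim: that $[K_S:C_S]$ divides some quantity that basis exchanges force to be $1$. It neither names that quantity nor gives the exchange argument. This is not a formality, because the fundamental circuits of a single basis can span a proper finite-index sublattice. For the $3\times1$ matrix with rows $2,3,1$ and basis the first row, the fundamental circuits $(3,-2,0)$ and $(1,0,-2)$ have index $2$ in the kernel.

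The fallback does not close the gap either. The ten-clause bound says nothing about gcds of $e$-coordinates. Your induction runs over all $S\subseteq E$, so the proposed enumeration of supports is neither carried out nor made precise. In fact no structure of $A$ is needed at all.

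\paragraph{The missing idea.}
Stop fixing one coordinate $e$. Instead, use every coordinate of a single circuit together with its primitivity. This fits inside your own induction:
\begin{itemize}
\item If $y\in K_S$ has $\supp y\subsetneq S$, then $y\in C_{\supp y}\subseteq C_S$ by induction.
\item Otherwise, choose a primitive circuit $c$ with $\supp c\subseteq S$.
\item For each $f\in\supp c$, the vector $c_f y-y_f c$ vanishes at $f$. It therefore lies in $K_{S\setminus\{f\}}=C_{S\setminus\{f\}}$, and hence $c_f y\in C_S$.
\item The entries of $c$ have gcd one, so B\'ezout gives integers $a_f$ with $\sum_f a_f c_f=1$. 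Then $y=\sum_f a_f(c_f y)\in C_S$.
\end{itemize}
This is precisely the paper's proof, phrased there as induction on $\card{\supp y}$. It makes the map $\rho_e$ and the gcd identity unnecessary.
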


\begin{proof}
We use induction on \(\lvert\supp y\rvert\), for \(y\in K_E\).  The assertion
is immediate for \(y=0\).  Otherwise choose a nonzero relation whose support
is contained in \(\supp y\) and is minimal among such supports.  Dividing its
coefficients by their greatest common divisor produces a primitive circuit
\(c\) with \(\supp c\subseteq\supp y\).

For \(e\in\supp c\), the relation \(c_e y-y_e c\) vanishes in coordinate
\(e\).  If it is nonzero, its support is strictly smaller than \(\supp y\),
and hence it is an integer combination of primitive circuits by induction.
Thus \(c_e y=(c_e y-y_e c)+y_e c\) belongs to the integral circuit span in
every case.  Primitivity gives
\(\gcd\{c_e:e\in\supp c\}=1\), so B\'ezout's identity supplies integers
\(a_e\) with \(\sum_{e\in\supp c}a_e c_e=1\).  Therefore
\[
  y=\sum_{e\in\supp c}a_e(c_e y)
\]
also lies in the integer span of the primitive circuits.
\end{proof}

\begin{lemma}
\label{lem:no-singleton-block}
Let \(c\in K_E\) be a circuit with support \(S=\supp c\).  Every
player-question block that is active on \(S\) contains at least two clauses
of \(S\).
\end{lemma}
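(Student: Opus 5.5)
The plan is to read off the incidence equation \(A^{\mathsf T}c=0\) one column at a time. Fix a player \(\alpha\) and a question \(j\) that is active on \(S\), meaning some clause \(e\in S\) has \(q_e^{(\alpha)}=j\). The column of \(A\) indexed by \((\alpha,j)\) has entry \(1\) exactly on the clauses whose \(\alpha\)-th question is \(j\). So the corresponding coordinate of \(A^{\mathsf T}c=0\) reads
\[
  \sum_{e\in E:\,q_e^{(\alpha)}=j} c_e = 0 .
\]
Because \(c\) vanishes off \(S\), only the clauses of \(S\) in this block contribute.

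Now suppose, for contradiction, that the block contains exactly one clause \(e_0\) of \(S\). The sum then collapses to the single term \(c_{e_0}=0\). But \(e_0\in S=\supp c\) means \(c_{e_0}\neq0\), which is a contradiction. Hence every active block meets \(S\) in at least two clauses.

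Only the relation \(c\in K_E\) and the definition of support are used. Neither support minimality nor primitivity enters, so the lemma holds for any nonzero relation. No step is a genuine obstacle; the only care needed is to note that inactive blocks are excluded by hypothesis, since a block meeting \(S\) in zero clauses satisfies the equation trivially.
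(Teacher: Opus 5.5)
Your proposal is correct and follows the paper's argument: read off the \((\alpha,j)\) coordinate of \(A^{\mathsf T}c=0\) as the sum of \(c_e\) over that block, and note that a singleton intersection with \(S\) would force \(c_e=0\) for some \(e\in\supp c\). You are also right that neither minimality nor primitivity is used, so the lemma holds for every relation in \(K_E\).
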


\begin{proof}
For every player \(\alpha\) and question \(j\), the coordinate equation
\((A^{\mathsf T}c)_{(\alpha,j)}=0\) is the sum of \(c_e\) over clauses in
that block.  If its intersection with \(S\) were the singleton \(\{e\}\),
the equation would give \(c_e=0\), contradicting \(e\in\supp c\).
\end{proof}

\begin{proposition}\label{prop:rank-cap}
If \(c\) is a primitive circuit on \(S\), then, for the restriction \(A_S\)
of \(A\) to the rows indexed by \(S\),
\[
  \rank_{\Q}A_S=\card S-1,
  \qquad
  \card S\leq
  2+\sum_{\alpha=1}^{4}
  \bigl(\card{\{q_e^{(\alpha)}:e\in S\}}-1\bigr)\leq10.
\]
At ten clauses all pages are ternary; at nine clauses the question-count
profile is, up to permutation, \((3,3,3,2)\) or \((3,3,3,3)\).
\end{proposition}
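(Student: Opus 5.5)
The plan is to read off both claims from the linear algebra of the restricted incidence matrix \(A_S\), and then bound its rank page by page.

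\textbf{Rank.} Since \(c\) is support-minimal, the left kernel of \(A_S\) over \(\Q\), i.e.\ \(\{z\in\Q^S: A_S^{\mathsf T}z=0\}\), is one-dimensional. Suppose instead it contained some \(z\) not proportional to \(c\). Clearing denominators, we may take \(z\) integral. Choose \(e\in S\) with \(z_e\neq0\) and form \(c_e z-z_e c\). This vector is a nonzero integer relation, because \(z\) and \(c\) are independent, and it vanishes at \(e\). Its support is therefore strictly contained in \(S\), contradicting minimality. Hence the left kernel has dimension \(1\), and \(\rank_{\Q}A_S=\card S-1\).

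\textbf{Upper bound on the rank.} Next I bound \(\rank A_S\) from above. Write \(d_\alpha=\card{\{q_e^{(\alpha)}:e\in S\}}\) and keep only the columns that are active on \(S\); there are \(\sum_\alpha d_\alpha\) of them. For each player \(\alpha\), the columns in page \(\alpha\) sum to the all-ones vector \(\one\in\Q^S\), since each clause picks exactly one question per player. This gives three independent linear dependencies among the active columns: page~\(1\) sum minus page~\(\alpha\) sum, for \(\alpha=2,3,4\). The column rank is therefore at most \(\sum_\alpha d_\alpha-3\). Combining with the rank computation yields
\[
\card S\le 1+\sum_\alpha d_\alpha-3 = 2+\sum_\alpha(d_\alpha-1).
\]
Because \(d_\alpha\le3\), this is at most \(2+8=10\).

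\textbf{Profiles.} Equality \(\card S=10\) forces \(\sum_\alpha(d_\alpha-1)=8\), so every \(d_\alpha=3\). For \(\card S=9\) we need \(\sum_\alpha(d_\alpha-1)\ge7\). This leaves only two options: either all four pages are ternary, or exactly one page has \(d_\alpha=2\), giving the profile \((3,3,3,2)\) up to permutation. No page can have \(d_\alpha=1\) in either case, because that alone would bring the sum down to at most \(6\).

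\textbf{Expected obstacle.} The only delicate point is verifying that the three page dependencies are genuinely independent, so that the bound is \(\sum_\alpha d_\alpha-3\) and not something weaker. The way to see this is that each of the dependencies for \(\alpha=2,3,4\) uses columns of page \(\alpha\), and no other dependency in the list touches page \(\alpha\); they are therefore linearly independent as vectors of coefficients. Lemma~\ref{lem:no-singleton-block} is not needed for the bound itself, though it is consistent with it: every active block on \(S\) contains at least two clauses.
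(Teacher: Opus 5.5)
Your proof is correct and follows the paper's argument. The same \(c_e z - z_e c\) support reduction shows that the rational relation space on \(S\) is one-dimensional, and the bound \(\rank_{\Q}A_S\le\sum_\alpha d_\alpha-3\) comes from each page's active columns summing to the all-ones vector; the paper phrases this as a spanning set (the all-ones vector plus \(d_\alpha-1\) contrasts per page) rather than as your three independent column dependencies, but the two are equivalent.
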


\begin{proof}
If the rational relation space on \(S\) had dimension at least two, choose
\(z\in\Q^S\) independent of \(c\), and clear denominators.  Since every
\(c_e\neq0\), some
\(c_e z-z_e c\) is nonzero and vanishes at \(e\), contradicting minimal
support after clearing denominators once more.  Thus the nullity is one and
the rank is \(\card S-1\).

For every page, its nonempty question-indicator columns sum to the same
all-ones vector.  Hence that vector, together with
one fewer contrast than the number of active questions on each page, spans
all incidence columns.  This proves both the rank bound and the asserted
question-count profiles.
\end{proof}

The rank cap shows that a primitive circuit has at most ten support clauses,
although the coefficients \(\lvert c_e\rvert\), and hence the length of an
exact lift, remain unrestricted.  It therefore remains to solve an ordering
problem on at most ten distinct clause types.  The next subsection constructs
such orderings from signed-occurrence forests.

\subsection{Exact lifts from signed-occurrence forests}\label{sec:forests}

We now isolate the noncommutative part of the proof.  Fix a primitive circuit
\(c\).  Its coefficients prescribe signed clause multiplicities but do not
provide a common ordering in which all four page projections freely reduce.  Replace clause
\(e\) by \(\lvert c_e\rvert\) occurrence vertices and put these vertices in
parts \(P\) and \(N\) according to the sign of \(c_e\).  Summing the incidence
equations over the questions of any one player gives \(\sum_e c_e=0\), so
\(P\) and \(N\) have the same size.  More precisely, every player-question
block contains equally many vertices from the two parts.

We call \(c\) \emph{liftable} if there is a balanced word using each clause
\(e\) exactly \(\lvert c_e\rvert\) times.  Such a word is an \emph{exact
lift}; the following lemma shows that its position signs necessarily recover
the circuit signs.

\begin{lemma}\label{lem:exact-multiplicity}
If a balanced word uses clause \(e\) exactly \(\lvert c_e\rvert\) times, then
its alternating vector is \(c\) or \(-c\).  Consequently, after replacing
\(c\) by \(-c\) if necessary, the positive and negative occurrences occupy
the odd and even positions, respectively.
\end{lemma}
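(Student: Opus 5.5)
By Lemma~\ref{lem:alternating-lift}, the alternating vector \(y(w)\) of any balanced word lies in \(K_E=\ker_{\Z}(A^{\mathsf T})\). The plan is to show that \(y(w)\) is supported inside \(\supp c\) and dominated coordinatewise by \(\lvert c\rvert\), and then to use support minimality and primitivity of \(c\) to force \(y(w)=\pm c\).

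\emph{Step 1 (support and domination).} Let \(w\) use clause \(e\) exactly \(\lvert c_e\rvert\) times. Then \(y(w)_e\) is a sum of \(\lvert c_e\rvert\) signs \(\pm1\). Hence \(\lvert y(w)_e\rvert\le\lvert c_e\rvert\) for every \(e\), with \(y(w)_e\equiv c_e\pmod 2\), and \(\supp y(w)\subseteq\supp c\).

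\emph{Step 2 (proportionality).} By the first paragraph of the proof of Proposition~\ref{prop:rank-cap}, the rational relation space on \(S=\supp c\) is one-dimensional. The vector \(y(w)\) is a rational relation supported in \(S\), so \(y(w)=\lambda c\) for some \(\lambda\in\Q\). Since \(c\) is primitive, its gcd is one. Because \(y(w)\) is integral, \(\lambda c_e\in\Z\) for all \(e\), and a Bézout combination gives \(\lambda=\sum a_e\lambda c_e\in\Z\). Step~1 gives \(\lvert\lambda\rvert\le1\), so \(\lambda\in\{-1,0,1\}\).

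\emph{Step 3 (exclude \(\lambda=0\)).} This is the one genuinely delicate point. The parity congruence does not exclude \(\lambda=0\) when every \(\lvert c_e\rvert\) is even. Primitivity, however, forces some \(c_e\) to be odd, since otherwise \(2\) would divide the gcd. For such an \(e\), the coordinate \(y(w)_e\equiv c_e\equiv1\pmod 2\) is nonzero. Hence \(y(w)\neq0\) and \(y(w)=\pm c\).

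\emph{Step 4 (position signs).} If \(y(w)=\pm c\), then \(\lvert y(w)_e\rvert=\lvert c_e\rvert\) equals the number of occurrences of \(e\). This forces all occurrences of \(e\) to carry the same sign \((-1)^{t-1}\), namely the sign of \(\pm c_e\). After replacing \(c\) by \(-c\) if needed, occurrences of clauses with \(c_e>0\) therefore sit at odd positions and those with \(c_e<0\) at even positions. The final claim \(\pi(w_c)=c\bmod 2\) follows from Lemma~\ref{lem:alternating-lift}.
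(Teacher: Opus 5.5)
Your proposal is correct and follows the paper's proof essentially step for step. Both arguments use the one-dimensional relation space on $\operatorname{supp} c$, obtain integrality of $\lambda$ from primitivity, get $\lvert\lambda\rvert\le 1$ from the coordinate bounds, exclude $\lambda=0$ by the parity of occurrence counts combined with primitivity, and use equality in the coordinate bounds to force the position parities.
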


\begin{proof}
By Lemma~\ref{lem:alternating-lift}, the alternating vector is a rational
relation on the circuit support.  The relation space on that support is
one-dimensional, so it equals \(\lambda c\).  Since \(c\) is primitive and
the alternating vector is integral, \(\lambda\in\Z\); the coordinate bounds
give \(\lvert\lambda\rvert\leq1\).  If \(\lambda=0\), then every count
\(\lvert c_e\rvert\) is even, because an occurrence count and its alternating
sum have the same parity.  This contradicts primitivity.  Hence
\(\lambda=\pm1\), and equality in every coordinate bound places all
occurrences of one circuit sign on one position parity.
\end{proof}

\begin{lemma}\label{lem:ordering}
(i) A word on at most two letters with zero alternating sum for each letter
freely reduces to the empty word.
(ii) A finite bipartite linear forest with bipartition \(P\sqcup N\) and
\(\card P=\card N\) has a total vertex order alternating between the parts and placing
every forest edge consecutively.
\end{lemma}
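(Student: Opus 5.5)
For part~(i), I will argue by strong induction on the length of the word \(u\), using free reduction in the free product of two copies of \(\Z/2\) generated by the letters \(x\) and \(z\). First I freely reduce \(u\) by deleting adjacent equal pairs \(xx\) or \(zz\). Such a deletion removes two consecutive positions \(t,t+1\) of opposite parity, so that letter's alternating sum is unchanged. The letters after the deleted pair shift by two positions, so their signs are also unchanged. Hence the alternating sums are invariant under reduction, and it suffices to treat a reduced word. A reduced word on two letters alternates, \(xzxz\cdots\) or \(zxzx\cdots\). In such a word each letter sits on positions of a single parity. Its alternating sum is therefore \(\pm\) its number of occurrences. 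If this vanishes for both letters, the word is empty.

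For part~(ii), I will first handle a single path component. A path whose vertices are listed in order alternates between \(P\) and \(N\), since the forest is bipartite. An odd-length path (with an even number of vertices) has equally many vertices from \(P\) and \(N\), and its endpoints lie in opposite parts. An even-length path has one extra vertex in either \(P\) or \(N\); call it \(P\)-heavy or \(N\)-heavy, and note that both its endpoints lie in the heavy part. Isolated vertices count as length-zero paths of the appropriate type.

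Since \(\card P=\card N\), the number of \(P\)-heavy paths equals the number of \(N\)-heavy paths. I pair them up arbitrarily. The total order is a concatenation of blocks, built so that the sequence of parts alternates throughout:
\begin{itemize}
\item a balanced path is listed starting from its \(P\)-endpoint, so it reads \(P N\cdots P N\);
\item a pair consisting of a \(P\)-heavy path followed by an \(N\)-heavy path reads \(P\cdots P\,N\cdots N\), which alternates across the junction.
\end{itemize}
Each block starts with \(P\) and ends with \(N\), so the concatenation alternates globally. Every forest edge joins consecutive vertices inside one path listing, so it is placed consecutively.

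Neither step is deep. The only point needing care is the bookkeeping in~(ii): checking that the heavy paths pair up exactly, which follows from the count difference per path, and orienting each block so that every junction goes from \(N\) back to \(P\).
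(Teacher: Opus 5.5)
Your proposal is correct and follows the paper's proof in substance. For (i), the paper uses the same observation: a nonempty reduced two-letter word alternates, so its letterwise alternating sums cannot both vanish. It applies this to find one adjacent equal pair and inducts, whereas you reduce fully first; for (ii), the paper likewise traverses each path, pairs the odd-order components of opposite excess using \(\card P=\card N\), and concatenates with ``suitable orientations,'' which your rule that every block starts in \(P\) and ends in \(N\) makes explicit.
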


\begin{proof}
For (i), if the word has no adjacent equal letters, every nonempty reduced
word on two letters alternates the letters.  Its two letterwise alternating
sums cannot both vanish.  Thus an adjacent equal pair exists.  Delete it and
argue by induction; deleting two consecutive positions preserves the parity
of every remaining position.

For (ii), traverse each path component from one endpoint to the other.
A component of even order contains equally many vertices from \(P\) and
\(N\), and its traversal may start in either part.  A component of odd order
contains one extra vertex in the part containing its endpoints.  The equality
\(\card P=\card N\) pairs the odd-order components with opposite excesses.
Concatenating each such pair, and then the even-order components with suitable
orientations, gives an alternating total order.  Every path edge joins
consecutive vertices in this order.
\end{proof}

The matching problem is encoded by signatures.  Select at most one question
block on each ternary page and regard each selected block as a color.  An
occurrence vertex receives the set of colors of the selected blocks that
contain it.

\begin{definition}[Balanced signature system]
A balanced signature system consists of finite vertex sets \(P,N\), a finite
color set, and a signature (a subset of the colors) attached to each vertex,
such that every color occurs equally often in \(P\) and \(N\).  A colorwise
matching pairs positive and negative incidences of each color.  If two
endpoints share two colors, the same simple edge may serve both matchings;
the forest condition refers to the union of the resulting simple edges.
\end{definition}

\begin{proposition}\label{prop:selected-block}
If the selected question blocks admit positive-to-negative matchings whose
simple union is a bipartite linear forest, then \(c\) has an exact lift.
\end{proposition}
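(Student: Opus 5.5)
Use Lemma~\ref{lem:ordering}(ii) to order the occurrence vertices, and check page by page that the resulting word is balanced. Let \(F\) be the bipartite linear forest on \(P\sqcup N\) given by the simple union of the selected colorwise matchings. Since \(\card P=\card N\), Lemma~\ref{lem:ordering}(ii) gives a total order \(v_1,\dots,v_N\) that alternates between \(P\) and \(N\) and places every edge of \(F\) on consecutive positions. Reversing the order if necessary, \(P\) occupies the odd positions. Let \(w\) be the word whose \(t\)-th letter is the clause of \(v_t\). Then \(w\) uses clause \(e\) exactly \(\card{c_e}\) times. Its alternating vector is \(c\) by construction, because positive occurrences carry sign \(+1\) at odd positions and negative occurrences carry \(-1\) at even positions. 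Once \(w\) is shown balanced, it is an exact lift. Lemma~\ref{lem:exact-multiplicity} then confirms \(y(w)=\pm c\) independently.

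It remains to show that each page projection of \(w\) freely reduces. Fix a player \(\alpha\). Since \(E\subseteq[3]^4\), the page has at most three question letters, of which at most one block, say \(j_\alpha\), is selected. First I cancel the selected block. Each matching edge for color \(j_\alpha\) joins a positive and a negative occurrence in block \(j_\alpha\), and the order places it on consecutive positions \(t,t+1\). Both occurrences project to the letter \(x_{\alpha,j_\alpha}\), so they form an adjacent equal pair in the projected word. The matching for color \(j_\alpha\) is a perfect pairing of the block's positive and negative incidences, and its edges are pairwise disjoint because each vertex lies in block \(j_\alpha\) once. Deleting all these adjacent pairs together therefore removes every occurrence of \(x_{\alpha,j_\alpha}\) at once. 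Because each deleted pair occupies consecutive positions, every surviving letter keeps the parity of its position, so the alternating sums of the surviving letters are unchanged.

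After this deletion the projected word uses at most two letters. Each letter \(j\neq j_\alpha\) has alternating sum \((A^{\mathsf T}c)_{(\alpha,j)}=0\), since positive occurrences sit at odd positions and negative ones at even positions. Lemma~\ref{lem:ordering}(i) then shows that the remainder freely reduces to the empty word. The same argument covers a page with no selected block only when that page has at most two active questions on the support. I read the hypothesis as selecting one block on every ternary page, which is consistent with the surrounding text's phrase ``select at most one question block on each ternary page''. Doing this for all four pages makes \(w\) balanced.

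The main obstacle is making the adjacency argument rigorous. One has to check that deleting matched consecutive pairs in the full clause word is legitimate and preserves parity, and that no vertex is paired twice within a single color. The latter holds because a vertex carries each color at most once. The cases where a shared edge serves two colors at once are harmless, because the deletion is performed separately on each page projection.
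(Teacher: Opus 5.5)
Your proof is correct and follows the paper's argument essentially step for step. You order the occurrences by Lemma~\ref{lem:ordering}(ii), cancel each selected block through its consecutive matched pairs (which preserves position parities), and finish each residual page of at most two letters with Lemma~\ref{lem:ordering}(i) using \((A^{\mathsf T}c)_{(\alpha,j)}=0\). Your reading that every page using three questions on \(\supp c\) carries a selected block is exactly the assumption the paper's proof makes implicitly. Placing \(P\) on the odd positions so that \(y(w)=c\) directly is a harmless refinement of the paper's ``up to one common sign''.
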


\begin{proof}
Order the occurrence vertices by Lemma~\ref{lem:ordering}(ii), with \(P\) and
\(N\) alternating.  Reading their clause labels gives a word with the
prescribed multiplicities.  Every matching edge joins consecutive positions.
Consequently, on a page with a selected question, all occurrences of that
question cancel in adjacent equal pairs.  The same simple edge may carry two
colors, in which case the corresponding pair cancels on both pages.

After these pairs are deleted from a fixed page projection, at most two
question letters remain.  Deletion takes place in pairs, so the position
parities of the remaining letters do not change.  Because the total order
alternates between \(P\) and \(N\), the alternating sum for every remaining
question equals its coordinate in \(A^{\mathsf T}c\), up to one common sign,
and is therefore zero.
Lemma~\ref{lem:ordering}(i) now freely reduces the residual page word.  This
holds on every page, so the clause word is balanced and is an exact lift.
\end{proof}

Thus exact circuit lifting has been reduced to a matching statement: select
question blocks and pair their signed occurrences so that the simple union
has maximum degree two and no cycle.  The next two propositions provide the
matching statements needed below.

\begin{proposition}\label{prop:three-color}
For at most three colors, every balanced signature system with signatures
of size at most two has colorwise matchings whose simple union is a
bipartite linear forest.
\end{proposition}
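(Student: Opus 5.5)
The plan is an extremal argument over all colorwise matchings. Let \(C\) be the color set, with \(\card C\leq 3\). For each color \(x\in C\), the vertices of \(P\) carrying \(x\) and the vertices of \(N\) carrying \(x\) are equinumerous, so perfect colorwise matchings \(M_x\) exist. In the simple union \(H=\bigcup_x M_x\), every vertex is incident to at most one \(M_x\)-edge per color in its signature. Hence \(\deg_H v\leq\card{\text{sig}(v)}\leq2\), and every edge of \(H\) joins \(P\) to \(N\), so \(H\) is bipartite. The degree and bipartiteness conditions therefore hold for free, and the only thing to prove is that some choice of the \(M_x\) makes \(H\) acyclic. Among all such choices I would take one that \emph{minimizes the number of simple edges} of \(H\), equivalently maximizes the number of doubly used edges, and secondarily minimizes the number of cycles. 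I would then show that a cycle cannot survive this choice.

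\textbf{Step 1: structure of a cycle.} Fix a cycle \(Z\) of \(H\). Every vertex on \(Z\) has degree two, so its signature is a \(2\)-set. Its two cycle edges carry different colors from its signature; they cannot carry the same color, since each vertex lies on at most one edge of \(M_x\). An edge of \(Z\) is not doubly used, for otherwise both of its endpoints would have degree one. Record the cyclic color sequence \(x_1x_2\cdots x_{2k}\) of \(Z\), which has no two consecutive colors equal, together with the vertex signatures \(\{x_i,x_{i+1}\}\). With only three colors, each vertex type is one of \(\{x,y\},\{y,z\},\{x,z\}\).

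\textbf{Step 2: local surgery.} The key observation is that if two consecutive edges of \(Z\) carry the same pair of colors, there is a short exchange. Take a path \(p\!-\!n\!-\!p'\!-\!n'\) with edge colors \(x,y,x\). Then \(n\) and \(p'\) both have signature \(\{x,y\}\), and we have \(pn\in M_x\), \(np'\in M_y\), \(p'n'\in M_x\). Swapping the two \(x\)-edges to \(pn'\) and \(p'n\) is legal, since \(p\) and \(n'\) carry \(x\). This makes \(np'\) a double edge, so the simple edge count drops by one, contradicting minimality. It follows that in an extremal configuration no color appears at distance two along a cycle. In other words, three consecutive edges never have colors \(x,y,x\).

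\textbf{Step 3: the remaining cycles.} With three colors, a cyclic sequence with no equal neighbors and no pattern \(x,y,x\) must be periodic \(xyzxyz\cdots\), so \(Z\) has length divisible by \(6\). For such cycles I would use a different swap. Take two edges of the same color \(x\) at distance three on \(Z\). These edges join vertices of opposite sides with compatible signatures. Swapping them splits \(Z\) into two shorter even cycles; bipartiteness forces this crosswise reconnection because the two edges are an odd distance apart. Each new cycle contains a repeated-pair pattern, which can then be removed by Step 2, lowering the edge count.

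The main obstacle is Step 3. There I must check carefully which reconnection the \(P\)--\(N\) constraint permits, and that the swapped edges do not create a doubly used edge elsewhere that breaks the accounting. If the direct swap argument fails, my fallback is to swap an \(x\)-edge of \(Z\) with an \(x\)-edge lying in a path component. That would merge the cycle into the path, lowering the cycle count without raising the edge count. If no such edge exists, every \(x\)-edge lies on \(xyz\)-periodic cycles, and I would handle this residual configuration by the explicit rematching described above.
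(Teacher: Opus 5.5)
Your extremal framework and Steps 1--2 are sound. On a cycle every edge carries exactly one color, and the Step 2 exchange deletes two single-color edges while adding at most one new simple edge.

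\textbf{Step 3 is wrong as written.} Write $Z=v_0v_1\cdots v_{L-1}$ with $e_i=v_{i-1}v_i$ colored $x,y,z,x,\dots$ and $v_0\in P$. The only sign-respecting exchange of the $x$-edges $e_1,e_4$ replaces them by $v_0v_3$ and $v_1v_4$. This crossing reconnection does not split $Z$. It re-forms a single cycle $v_1v_2v_3v_0v_{L-1}\cdots v_5v_4v_1$ of the same length.

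A genuine split into two even cycles occurs only for same-colored edges at even distance, such as $e_1,e_7$. In that case both pieces are again periodic $xyz$ cycles, and Step 2 cannot act on them. So the justification you give would not deliver your conclusion.

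\textbf{The repair is a short check.} Since $L\geq6$ and every cycle vertex has degree two in $H$, the edges $v_0v_3$ and $v_1v_4$ are new. Hence the exchange preserves the simple edge count. In the re-formed cycle, the new edge $v_3v_0$ is flanked by $e_3$ and $e_L$, both colored $z$. Apply Step 2 to this $z,x,z$ triple: replace $v_2v_3$ and $v_{L-1}v_0$ by $v_0v_3$ and $v_2v_{L-1}$. This makes $v_0v_3$ doubly used and lowers the count below the minimum. With this correction, both the secondary cycle-count objective and your fallback are unnecessary.

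\textbf{Comparison with the paper.} Corrected, your argument is a valid proof by a different route. The paper inducts on the number of color incidences in three stages:
\begin{itemize}
\item It removes opposite-sign vertices with identical signatures as isolated (possibly doubly colored) edges.
\item It peels each remaining singleton by matching it and deleting that color, so the singleton endpoint becomes isolated.
\item It excludes the residual all-pairs system because the $3\times3$ pair--color incidence matrix has determinant $-2$. This forces each pair signature to occur equally often in $P$ and $N$, contradicting the first reduction.
\end{itemize}
In the paper, three colors enter through the nonsingularity of that triangle incidence matrix. In yours, they enter through the forced period-three color pattern on a cycle.

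The paper's recursion is an explicit construction, and its singleton-peeling step is reused in Proposition~\ref{prop:four-singleton}. Your exchange argument is equally algorithmic, since each exchange lowers the edge count. It also makes transparent why four colors fail without a singleton. Take signatures $\{x,y\},\{z,w\}$ in $P$ and $\{y,z\},\{w,x\}$ in $N$. This system is balanced, and its unique colorwise matchings form a four-cycle colored $y,z,w,x$, which no exchange can break.
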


\begin{proof}
Proceed by induction on the total number of color incidences.  First match and
remove opposite-sign vertices with identical nonempty signatures, using one
simple edge for both colors of a two-color signature.  Record these isolated
edges and restore them after treating the residual system.  No nonempty
signature then occurs on both signs in that system.

If a singleton \(\{a\}\) remains, match its \(a\)-incidence to any
opposite-sign endpoint incident with \(a\), delete \(a\) from both signatures,
and apply induction.  The singleton endpoint is isolated in the residual
system; restoring the edge therefore creates neither a cycle nor degree
greater than two.

It remains to exclude an active residual system consisting only of
two-element signatures.  For \(\sigma\in\{ab,ac,bc\}\), let \(\delta_\sigma\)
be the multiplicity of signature \(\sigma\) in \(P\) minus its multiplicity
in \(N\).  Color balance says
\[
\begin{pmatrix}1&1&0\\1&0&1\\0&1&1\end{pmatrix}
\begin{pmatrix}
\delta_{ab}\\\delta_{ac}\\\delta_{bc}
\end{pmatrix}
=0.
\]
The matrix has determinant \(-2\), so every \(\delta_\sigma=0\).  Since no
nonempty signature occurs on both signs in the residual system, all three
multiplicities must vanish, a contradiction.  The cases with fewer active
colors are contained in the same argument.
\end{proof}

\begin{proposition}
\label{prop:four-singleton}
For four colors, a balanced signature system of signature size at most two
has a linear-forest matching whenever either sign contains a singleton.
\end{proposition}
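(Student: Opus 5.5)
We argue by induction on the total number of color incidences, in the style of Proposition~\ref{prop:three-color}, with the singleton hypothesis maintained or reestablished along the way. First match and remove opposite-sign vertices with identical nonempty signatures. Each such pair becomes an isolated simple edge, and we restore it at the end; this cannot create a cycle or a vertex of degree greater than two. After this step no nonempty signature occurs on both signs. Color balance is preserved, and so is the existence of a singleton, unless the singleton itself was removed in an identical pair.

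Now suppose a singleton \(\{a\}\) remains, say in \(P\). Match it to an \(N\)-vertex \(v\) that contains \(a\), delete \(a\) from both signatures, and recurse. The singleton endpoint becomes isolated, so restoring the edge adds degree one only at \(v\). We must control \(v\)'s degree and avoid cycles through \(v\). Since the singleton endpoint is a leaf, the new edge cannot close a cycle. The degree of \(v\) is at most two overall, because \(v\) has at most two colors and each color contributes at most one matching edge at \(v\). So the step is safe, exactly as in the three-color case.

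The difficulty is the inductive hypothesis. After deleting \(a\), the residual system may contain no singleton, or the vertex \(v\) may itself become a singleton. In the latter case induction applies directly. Otherwise we must show that a four-color system with only two-element signatures, no signature repeated across signs, and full color balance either reduces or admits a direct linear-forest matching. To handle this, we write the balance equations \(\sum_{\sigma\ni x}\delta_\sigma=0\) for the six pairs \(\sigma\subseteq\{a,b,c,d\}\). The solution space is three-dimensional, spanned by the alternating 4-cycles: for example \(+ab,-bc,+cd,-da\), and its analogues. In a sign-separated residual system, the positive pairs and negative pairs form edge-disjoint subgraphs of \(K_4\) with equal degree sequences. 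Degree considerations then force the support to be a union of alternating 4-cycles in \(K_4\). For such a configuration we would match along the 4-cycle structure, so that each occurrence vertex has one \(P\)-neighbor per color. This yields paths, but possibly cycles: an alternating color cycle \(ab,bc,cd,da\) used once on each side gives a closed 4-cycle of occurrence vertices.

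We therefore expect the main obstacle to be exactly these pure two-color cyclic residues. To avoid them, we choose the singleton's partner \(v\) carefully instead of arbitrarily. If \(v\) has signature \(\{a,x\}\), deleting \(a\) turns \(v\) into the singleton \(\{x\}\), and the singleton hypothesis then persists. We therefore arrange that every step produces a new singleton, or ends when all incidences are gone. The step fails only if every vertex carrying \(a\) on the opposite sign is itself a singleton \(\{a\}\). But then that vertex would have been matched in the identical-signature pre-step. Hence a two-element partner always exists, unless the singleton was the last \(a\)-incidence pair. In that case the residual again has a singleton, or is empty, or is a pure two-element system of total degree balance. For this last case we fall back on the \(K_4\) cycle analysis. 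There we would break each alternating cycle by opening it into a path, checking that one edge of the 4-cycle of occurrences can be dropped because the color it would carry is already covered through the preceding singleton chain. This last verification is the step we expect to require the most care.
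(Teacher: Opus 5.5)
\textbf{Verdict: genuine gap.} You leave one case of your induction open, and your own first step is what creates that case. Your treatment of a singleton $\{a\}$ matched to a two-element partner $\{a,x\}$ is correct and coincides with the paper's first case.

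\textbf{Where the gap is.} The identical-signature pre-step is harmless in Proposition~\ref{prop:three-color} only because a sign-separated system of two-element signatures on three colors must be empty (the determinant $-2$ computation). On four colors this fails. The balance system for the six pair multiplicities has a two-dimensional kernel (not three-dimensional: the three alternating $4$-cycles of $K_4$ satisfy one linear relation). So nonempty sign-separated pure two-element residues exist, and your pre-step can consume every singleton and leave one of them.

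Concretely, write $P\{\cdot\}$ and $N\{\cdot\}$ for the vertex with the given signature, and take $P=\{\{a\},\{a,b\},\{c,d\}\}$ and $N=\{\{a\},\{b,c\},\{d,a\}\}$. This system is balanced and has a singleton on each side. Your pre-step pairs the two copies of $\{a\}$ and leaves $P=\{\{a,b\},\{c,d\}\}$, $N=\{\{b,c\},\{d,a\}\}$. Each color now has exactly one incidence of each sign, so the colorwise matchings are forced and their union is a $4$-cycle.

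No subsequent ``$K_4$ cycle analysis'' can repair this. Every incidence must be matched, so no edge of that cycle can be dropped. The only way out is to reroute color $a$ through the pair you already removed, which means undoing the pre-step.

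\textbf{The missing idea.} Choose the singleton's partner before doing any identical-pair cancellation, as the paper does.
\begin{itemize}
\item If a positive $\{a\}$ has a negative partner $\{a,x\}$, match them and delete $a$ from both. The positive vertex becomes empty and the partner becomes the singleton $\{x\}$, so induction applies, and the restored edge hangs off a leaf.
\item Only when every negative vertex containing $a$ is itself $\{a\}$ do you remove a singleton pair. Then either some singleton survives and you recurse, or none does.
\item In the latter case no negative vertex carries $a$, and balance forces the same on the positive side. The residue is then a three-color system, handled by Proposition~\ref{prop:three-color}.
\end{itemize}
Applied to the example, this rule matches $P\{a\}$ with $N\{d,a\}$ and propagates singletons around the would-be cycle. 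It yields the path $N\{a\}$, $P\{a,b\}$, $N\{b,c\}$, $P\{c,d\}$, $N\{d,a\}$, $P\{a\}$. With this case split, a pure two-element four-color residue never arises, so the $K_4$ analysis is not needed at all.
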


\begin{proof}
Induct on total color incidence.  Let a positive vertex have signature
\(\{a\}\).  If a negative endpoint has signature \(\{a,b\}\), delete
\(a\) from both endpoints.  The positive endpoint becomes empty and the
negative endpoint becomes the singleton \(\{b\}\); apply induction and
restore the edge incident with the now-isolated positive endpoint.

Otherwise every negative endpoint incident with \(a\) has signature
\(\{a\}\).  Delete one positive-negative singleton pair.  If another
singleton remains, recurse.  If not, no remaining negative endpoint uses
\(a\), and color balance says no positive endpoint uses it either.  The
residual system uses only three colors, so
Proposition~\ref{prop:three-color} applies.  Restore the isolated pair.
The negative-singleton case is symmetric.
\end{proof}

The support bound and the three-color forest already establish liftability
for every circuit that does not use three questions on all four pages.

\begin{proposition}
\label{prop:four-ternary-reduction}
Let \(c\) be a primitive circuit with \(\card{\supp c}\leq10\).  If at most
three pages use three questions on \(\supp c\), then \(c\) is liftable.
\end{proposition}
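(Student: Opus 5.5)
The plan is to apply Proposition~\ref{prop:selected-block} with a suitable choice of selected blocks and to obtain the required linear-forest matching from Proposition~\ref{prop:three-color}. Since at most three pages use three questions on \(\supp c\), at least one page uses at most two questions there. On each such page no block is selected: after the matched pairs on other pages are cancelled, that page projection still involves at most two letters. On each of the (at most three) ternary pages, select exactly one question block. This yields at most three colors.

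Every occurrence vertex carries one color from each selected page whose chosen block contains its clause, so signatures have size at most three. Proposition~\ref{prop:three-color} requires signatures of size at most two, and this mismatch is the main obstacle. First I would check color balance. For every selected block \((\alpha,j)\), the coordinate equation \((A^{\mathsf T}c)_{(\alpha,j)}=0\) says that the occurrence vertices of that block split equally between \(P\) and \(N\), so the signature system is balanced. To enforce signature size at most two, I would use the freedom in which block to select on each ternary page. A vertex with signature \(\{a,b,c\}\) comes from a clause lying in all three selected blocks. With \(3\) choices per ternary page there are \(27\) selections. A fixed clause lies in all three selected blocks for exactly one of them, namely the selection given by its own questions. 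By Proposition~\ref{prop:rank-cap}, a support with at most three ternary pages has \(\card S\leq 2+3\cdot 2+1=9\). Nine clauses can rule out at most nine selections, so some selection makes every signature have size at most two.

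I would then apply Proposition~\ref{prop:three-color} to get colorwise matchings whose simple union is a bipartite linear forest, with \(\card P=\card N\) as established earlier. Proposition~\ref{prop:selected-block} then converts this into an exact lift. Its proof already handles the residual page words: on a selected ternary page two letters remain, and on an unselected page at most two letters are present. I would double-check the edge cases. If only one or two pages are ternary, there are fewer colors; this is covered by the ``fewer active colors'' remark in Proposition~\ref{prop:three-color}, and the counting argument only becomes easier. If no page is ternary, nothing is selected and Lemma~\ref{lem:ordering}(i) applies to any alternating order directly.
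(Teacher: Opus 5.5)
Your proposal is correct and follows essentially the paper's argument: select one block per ternary page, and with three ternary pages choose a cell of \([3]^3\) missed by the projection of the support, so that no signature has size three; Propositions~\ref{prop:three-color} and~\ref{prop:selected-block} then finish. The only cosmetic difference is that you bound the support by nine via Proposition~\ref{prop:rank-cap}, whereas the paper simply uses the hypothesis \(\card{\supp c}\leq10<27\).
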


\begin{proof}
If at most two pages are ternary, select one question block on each of them.
There are at most two colors, so every signature has size at most two and
Propositions~\ref{prop:three-color} and \ref{prop:selected-block} apply.

Suppose exactly three pages are ternary.  The projection of at most ten
clauses onto these pages omits a point of \([3]^3\).  Select the three
question blocks defining such an omitted point.  No occurrence has all three
colors, so again every signature has size at most two.  The same two
propositions give an exact lift.  Pages using at most two questions require no
selected block and are handled by Lemma~\ref{lem:ordering}(i).
\end{proof}

\subsection{Hamming geometry of a nonliftable circuit}\label{sec:safe}

By Proposition~\ref{prop:four-ternary-reduction}, a nonliftable primitive
circuit must use all three questions on every page.  We identify its support
with a subset \(S\subset[3]^4\) of the ternary Hamming cube and use a point
outside the radius-one neighborhood of \(S\) to select one block on each
page.

Set
\[
 U(S)=\{u\in[3]^4:\dist(u,S)\geq2\},
\]
and for \(u\in U(S)\) write
\(a_u=\lvert\{q\in S:\dist(u,q)=2\}\rvert\).  We call the points of
\(U(S)\) \emph{safe centers}.  At a safe center \(u\), select the four
question blocks specified by the coordinates of \(u\).  The signature of a
clause \(q\) is its set of agreements with \(u\), and hence has size
\(4-\dist(u,q)\leq2\).

If \(\dist(u,q)=3\), this signature is a singleton.
Propositions~\ref{prop:four-singleton} and \ref{prop:selected-block} would
then lift the circuit.  Thus nonliftability forces the even-layer condition
\begin{equation}\label{eq:hard-geometry-main}
  \dist(u,q)\in\{2,4\}
  \qquad(u\in U(S),\ q\in S).
\end{equation}
A safe center is therefore a certificate of liftability unless the support
has this rigid even-layer form.

The packing estimates and the complete exclusion of the eight-, nine-, and
ten-point extremal configurations are given in
Appendix~\ref{app:extremal-hamming}.  Their conclusions are invoked below
only through Propositions~\ref{prop:subnine}, \ref{prop:nine},
\ref{prop:ten-noncover}, and \ref{prop:ten-cover}.

\begin{proof}[Proof of Theorem~\ref{thm:exact-circuit-lifting}]
Let \(c\) be a primitive circuit with support \(S\).
Proposition~\ref{prop:rank-cap} gives \(\card S\leq10\).  If at most three pages are ternary,
Proposition~\ref{prop:four-ternary-reduction} supplies an exact lift.  If all
four pages are ternary, Propositions~\ref{prop:subnine} and \ref{prop:nine}
handle support sizes at most nine.  For support size ten, either
\(U(S)\neq\varnothing\), in which case
Proposition~\ref{prop:ten-noncover} applies, or the ten radius-one balls cover
the cube, which Proposition~\ref{prop:ten-cover} shows is incompatible with a
primitive circuit.  Thus every primitive circuit has an exact lift.  Its
alternating vector is \(c\) or \(-c\) by
Lemma~\ref{lem:exact-multiplicity}, proving
Eq.~\eqref{eq:exact-circuit-lift}.
\end{proof}

\begin{proof}[Proof of Corollary~\ref{cor:obstruction-exactness}]
Proposition~\ref{prop:circuit-generation} writes every element of \(K_E\) as
an integer combination of primitive circuits.  Reducing this identity modulo
two shows that their parity vectors span \(L_E\) over \(\F_2\).  Each such
parity vector equals \(\pi(w_c)\) for a balanced word, and hence belongs to
\(R_E\).  Therefore \(L_E\subseteq R_E\).  The reverse inclusion is
Lemma~\ref{lem:alternating-lift}, so \(L_E=R_E\).  Notice that this identity
depends only on the support \(E\), not on the target vector \(b\).
\end{proof}

\begin{proof}[Proof of Theorem~\ref{thm:main}(i)]
If one question tuple occurs with both targets, the commuting-operator value
is strictly below one and no perfect MERP strategy exists, as observed in
Proposition~\ref{prop:weight-reduction}.  Otherwise reduce to a distinct
positive-weight support.  Proposition~\ref{prop:weight-reduction} permits the
uniform distribution, and the active-question hypothesis identifies the
support with a subset \(E\subseteq[3]^4\).
Corollary~\ref{cor:obstruction-exactness} gives \(L_E=R_E\).  The dualities in
Eqs.~\eqref{eq:merp-main} and
\eqref{eq:co-main} now yield
\[
  \omega_{\co}(G)=1
  \Longleftrightarrow b\in R_E^\perp
  \Longleftrightarrow b\in L_E^\perp
  \Longleftrightarrow G\text{ admits a perfect MERP strategy}.
\]
\end{proof}

\section{Sharpness at four questions}\label{sec:sharpness}

Part~(ii) of Theorem~\ref{thm:main} requires more than the PREF exhibited in
Subsection~\ref{sec:klein-witness}: one must also exclude every true refutation.
We separate this task into two logically distinct levels.  First, we classify
the shortest natural candidates---words using every Cayley clause exactly
once---for an arbitrary finite group.  We then specialize to elementary
abelian \(2\)-groups and use the Magnus expansion to rule out refutations with
arbitrary order and multiplicity.  The second step is essential; the first
alone cannot establish commuting-operator value one.

\subsection{Each-clause-once refutations in Cayley games}\label{sec:cayley}

Before addressing refutations of arbitrary length, we isolate the shortest
natural candidate: a word in which every Cayley clause occurs exactly once.
For each player, free reduction of such a word is equivalent to a
noncrossing matching of equal questions.  The group translations produce a
family of matchings that must be noncrossing in one common linear order.
Although simultaneous noncrossing matchings are related to partitioned book
embeddings \cite{AkitayaEtAl2017}, the regular group action makes this family
completely classifiable.

Let \(H\) be a finite group of order \(n\).  The associated Cayley game has
players and question labels
indexed by \(H\), clauses
\[
  E_g=(g)_{\alpha\in H},
  \qquad
  O_g=(g\alpha)_{\alpha\in H},
\]
and the unique odd target on \(O_1\).  For \(h\in H\), let \(m_h\) denote
the perfect matching between the two clause copies defined by
\(E_x\leftrightarrow O_{xh}\).

\begin{theorem}
\label{thm:cayley-cyclic}
The family \(\{m_h:h\in H\}\) admits a common noncrossing linear order if
and only if \(H\) is cyclic.  For \(n\geq2\), this is equivalent to the
existence of a true refutation that uses every clause of the associated
Cayley game exactly once.
\end{theorem}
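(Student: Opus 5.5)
The plan is to treat the second sentence first, since it is a direct translation, and then prove the group-theoretic characterization by a counting argument on a circle.

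\textbf{Reduction to noncrossing matchings.} In the Cayley game, question \(j\) on page \(\alpha\) occurs in exactly two clauses: \(E_j\) and \(O_{j\alpha^{-1}}\). Hence, in a word using every clause once, page \(\alpha\) sees each letter exactly twice, and its two occurrences are joined by the matching \(m_{\alpha^{-1}}\). A word in which every letter occurs twice freely reduces to the identity if and only if the pairing of equal letters is noncrossing. Noncrossing pairs are nested, each encloses a reducible subword, and conversely a free reduction produces a noncrossing pairing. Since \(\alpha\mapsto\alpha^{-1}\) is a bijection of \(H\), the word is balanced if and only if all \(m_h\) are noncrossing in its order. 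The target parity of such a word is automatically odd, because \(O_1\) is the unique odd clause and occurs once. So for \(n\ge2\), where the word is a genuine refutation, the two conditions in the theorem coincide.

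\textbf{Cyclic implies a common order.} Take \(H=\Z_n\). Place the \(2n\) clauses on a circle of \(2n\) slots, with \(E_x\) at slot \(2x\) and \(O_y\) at slot \(1-2y\), both modulo \(2n\). These slots alternate between \(E\)'s and \(O\)'s and are all distinct. A chord of \(m_h\) joins slot \(2x\) to slot \(1-2(x+h)\). Its slot sum is \(1-2h\) modulo \(2n\), independent of \(x\) and unaffected by wraparound, because \(y\) is reduced modulo \(n\) and then doubled. Chords with constant slot sum are parallel, hence noncrossing. Cutting the circle anywhere gives the required linear order.

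\textbf{A common order implies \(H\) is cyclic.} Fix a common noncrossing order and close it into a circle. Every \(E\)–\(O\) pair \((E_x,O_y)\) lies in exactly one matching, namely \(m_{x^{-1}y}\), so the \(n\) matchings partition all \(n^2\) pairs. Each matched pair encloses an even-length subword, so \(E\)'s and \(O\)'s occupy opposite parities and the circle alternates. Consequently all \(2n\) cyclically adjacent pairs of slots are \(E\)–\(O\) pairs, each belonging to exactly one \(m_h\).

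Next, a noncrossing perfect matching on \(2n\ge4\) points has at least two chords joining cyclically adjacent points. These are the leaves of its dual tree, whose vertices are the \(n+1\) regions. Double counting then forces each \(m_h\) to have exactly two such chords, so its dual tree is a path. I expect the main step to be the resulting rigidity claim: a path dual tree means the chords are successively nested. Each intermediate region is bounded by two chords, and its boundary arcs contain no further points, since any such point would create another chord and hence another branch. Therefore the chords of \(m_h\) are parallel, with constant slot sum \(c_h\) modulo \(2n\).

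\textbf{Extracting a homomorphism.} Write \(p(x)\) for the slot of \(E_x\) and \(r(y)\) for the slot of \(O_y\), so that
\[
  r(xh)=c_h-p(x)\qquad\text{for all } x,h\in H.
\]
Setting \(x=1\) gives \(c_h=r(h)+p(1)\), hence \(r(xh)=r(h)+f(x)\) with \(f(x)=p(1)-p(x)\). Setting \(h=1\) and using \(f(1)=0\) gives \(f(x)=r(x)-r(1)\). Substituting back yields
\[
  f(xh)=f(x)+f(h),
\]
so \(f\colon H\to\Z_{2n}\) is a homomorphism. It is injective because distinct \(E\)-clauses occupy distinct slots. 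Thus \(H\) embeds in a cyclic group and is itself cyclic.
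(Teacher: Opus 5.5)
Your proof is correct. It follows the paper for the reduction to noncrossing matchings, but it proves the cyclicity direction by a genuinely different, more global argument.

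\textbf{Sufficiency.} Your cyclic placement \(E_x\mapsto 2x\), \(O_y\mapsto 1-2y\), cut at slot \(0\), is exactly the paper's zigzag order \(E_1,O_1,E_d,O_{d^{-1}},\ldots\). The paper verifies noncrossing by a case analysis on wrapped and unwrapped indices. You instead note that every chord of \(m_h\) has slot sum \(1-2h\), so the chords are parallel, which is shorter.

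\textbf{Necessity.} The paper first normalizes the order to begin at \(E_1\). It then compares \((E_1,O_g)\) with \((E_x,O_{xg})\) to obtain the side lemma, and computes the prefix sets \(S_j=\{1\}\cup\{o_io_j^{-1}:i<j\}\). By induction it shows that all ratios \(d_j=o_jo_{j+1}^{-1}\) equal a single element \(d\) of order \(n\). Your argument replaces this with the following steps:
\begin{itemize}
\item The \(n\) matchings partition all \(n^2\) pairs \((E_x,O_y)\), so parity forces the order to alternate.
\item The \(2n\) cyclically adjacent pairs are then distributed among the \(n\) matchings, each of which has at least two ears. Hence every dual tree is a path, and every \(m_h\) is a parallel family with constant slot sum \(c_h\).
\item The relation \(p(x)+r(xh)=c_h\) yields an injective homomorphism \(H\to\Z_{2n}\).
\end{itemize}

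\textbf{What each route buys.} The paper's argument is fully elementary. It names the generator \(o_1o_2^{-1}\) explicitly and shows directly that the order is forced to be the zigzag. Yours avoids the normalization step and the \(S_j\) bookkeeping. It uses one constant-slot-sum picture for both directions, and it gets cyclicity (and abelianness) at once from an embedding into a cyclic group.

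\textbf{What is only sketched.} Your route relies on two planar facts that you sketch rather than prove: ears are exactly the leaves of the dual tree, and a path dual tree forces consecutive nested chords \((k-i+1,k+i)\). These are standard, and your sketch is adequate. You should also say explicitly that \(n=1\) is trivial, since your ear count needs \(2n\ge4\).
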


This theorem explains the failure of an each-clause-once refutation for the
Klein group, but it is not by itself an all-length obstruction.  The latter
requires the Magnus argument of Subsection~\ref{sec:magnus}.

The classification proof is given in
Appendix~\ref{app:cayley-classification}.  Its noncrossing-matching
criterion is logically independent of the all-length Magnus obstruction
proved next.

\subsection{All-length obstruction for elementary abelian Cayley games}
\label{sec:magnus}

\begin{theorem}[All-length elementary-abelian obstruction]
\label{thm:all-length-obstruction}
Let \(H=\F_2^r\), \(r\geq2\), and consider the associated Cayley game with
the unique odd target on \(O_0\).  The coefficient assignment
\(E_g\mapsto-1\), \(O_g\mapsto1\) is a PREF, but no true refutation exists.
Consequently, the game has commuting-operator value one and no perfect MERP
strategy.
\end{theorem}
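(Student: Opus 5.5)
The plan is to reduce everything to the two dualities of Proposition~\ref{prop:obstruction-dualities}. The PREF half is immediate. Every player--question pair \((\alpha,j)\) lies in exactly one \(E\)-clause, namely \(E_j\), and in exactly one \(O\)-clause, namely \(O_{j+\alpha}\). Hence the assignment \(E_g\mapsto-1\), \(O_g\mapsto1\) lies in \(K_E\). Its target parity is the coefficient of \(O_0\), which is odd. The substance is the claim that no balanced word has odd target parity. The target parity of a word \(w\) is the number of occurrences of \(O_0\) modulo two, which equals \(y(w)_{O_0}\bmod 2\). By Lemma~\ref{lem:alternating-lift}, \(y(w)\in K_E\) for every balanced \(w\).

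\emph{Step 1: the lattice.} First I would compute \(K_E\) for the Cayley game on \(\F_2^r\). I expect the incidence rank to be \(2n-1\), so that \(K_E=\Z c\) with \(c\) the assignment above. The natural route is to show that the clause graph coupling the \(E\)- and \(O\)-families through the page blocks is connected. Then every balanced \(w\) has \(y(w)=\lambda c\), and it suffices to prove that \(\lambda\) is even. Degree-one information alone cannot give this, since it only recovers \(y(w)\in K_E\). So the proof must use a second-order invariant of the page words.

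\emph{Step 2: even-subgroup normal form.} On page \(\alpha\), the involutions \(x_{\alpha,j}\) generate a free product of copies of \(\Z/2\). Its even-length subgroup is free on \(u_{\alpha,j}=x_{\alpha,0}x_{\alpha,j}\) for \(j\neq0\). A balanced word has even length, since its alternating vector sums to zero. I would split \(w\) into consecutive pairs \(e_{2t-1}e_{2t}\). On page \(\alpha\), each pair projects to \(x_{\alpha,q}x_{\alpha,q'}=u_{\alpha,q}^{-1}u_{\alpha,q'}\), with the convention \(u_{\alpha,0}=1\). So balancedness says that for each \(\alpha\) an explicit word in the free group \(F(u_{\alpha,j})\) is trivial. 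I would then apply the Magnus embedding \(u_{\alpha,j}\mapsto 1+U_{\alpha,j}\). The degree-one coefficients reproduce \(A^{\mathsf T}y(w)=0\). The degree-two coefficients \(\mu_\alpha(j,k)\), the coefficients of \(U_{\alpha,j}U_{\alpha,k}\), must also vanish. Each \(\mu_\alpha(j,k)\) is a sum over position pairs \(s<t\) of products of signed indicators of the questions asked at \(s\) and \(t\).

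\emph{Step 3: the parity certificate.} This is the main obstacle. I would look for weights \(\beta_\alpha(j,k)\in\{0,1\}\), built from the group structure of \(\F_2^r\). A first candidate is a nondegenerate bilinear pairing on two coordinates, which is where \(r\geq2\) enters. The goal is an identity
\[
\sum_{\alpha}\sum_{j,k}\beta_\alpha(j,k)\,\mu_\alpha(j,k)\equiv\lambda\pmod 2
\]
valid for every word \(w\) with \(A^{\mathsf T}y(w)=0\).

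To find the weights, I would expand the left side as a double sum over position pairs \(s<t\) of a clause-pair kernel \(K(e_s,e_t)\). The aim is to choose \(\beta\) so that \(K\) is symmetric modulo two up to a term that factors through the degree-one data. Symmetrization then turns the double sum into \(\sum_e\binom{y_e}{2}\)-type and \(y_ey_{e'}\)-type expressions. Substituting \(y=\lambda c\) should reduce these to \(\lambda\) (or \(\lambda^2\equiv\lambda\)) modulo two.

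I would test candidate \(\beta\) on the each-clause-once case first. There Theorem~\ref{thm:cayley-cyclic} already predicts failure for noncyclic \(H\), which gives a sanity check on the certificate. Once the identity holds, vanishing of all \(\mu_\alpha\) forces \(\lambda\) even. This excludes every true refutation, regardless of length, order, or multiplicity. Proposition~\ref{prop:obstruction-dualities} then gives \(\omega_{\co}=1\) together with the absence of a perfect MERP strategy.
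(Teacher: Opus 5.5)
Your Steps~1 and~2 are sound and coincide with the paper's proof. They cover the PREF check, the degree-one conclusion $y(w)=\lambda c$ with target parity $\equiv\lambda$ (the paper's odd $v$, with $V_g=v$, $W_g=-v$), the free basis of the even subgroup, and the vanishing of the degree-two Magnus coefficients. The gap is Step~3, which carries the whole theorem. You do not produce a certificate, and the one you propose to search for does not exist for $H=\F_2^2$, the case needed for Theorem~\ref{thm:main}(ii).

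Modulo $2$ every sign product $s_ps_q$ becomes $1$. So each $\mu_\alpha(i,j)$ is a sum of pair counts $N(e,e')=\#\{p<q:\ e_p=e,\ e_q=e'\}$, subject to $N(e,e')+N(e',e)=n_en_{e'}\equiv\lambda$. The $(\alpha,j,i)$ constraint is the $(\alpha,i,j)$ one with all four pairs reversed, so modulo $2$ only the twelve constraints with $i<j$ matter. Their incidence pattern is rigid:
the clause $E_0$ occurs in none of them;
each pair $\{E_i,E_j\}$ occurs in four, always with the same orientation;
every pair $\{E_k,O_g\}$ ($k\neq0$) or $\{O_g,O_h\}$ occurs in exactly two.

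The graph joining constraints that share such a pair is connected. Your symmetrization step needs the clause-pair kernel to be symmetric modulo $2$, and by connectivity the only nonzero $\{0,1\}$-combination with this property is the sum of all twelve. In that sum exactly ten pairs occur once in each orientation: the four $\{E_2,O_g\}$ and the six $\{O_g,O_h\}$. So it returns $10\lambda\equiv0$, not $\lambda$.

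Your allowance for a degree-one correction does not rescue this. Append to an even-length word the $y$-neutral block $a\,b\,a\,f\,b\,a\,f\,a$ instead of $a\,f\,f\,a$. This flips the parity of $N$ on exactly the six ordered pairs inside the triangle $\{a,b,f\}$ and leaves $y(w)$ unchanged. Hence a combination whose value is a function of $y(w)$ must have $K(e,e')+K(e',e)$ of cut form $d(e)+d(e')$. Since $E_0$ lies in no constraint, $d$ is constant and the kernel is symmetric after all. No mod-$2$ identity $\sum\beta\mu\equiv\lambda$ of your type exists.

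The missing idea is that the obstruction lives in the higher $2$-adic digits and needs the signs $s_ps_q$. The paper keeps the integer signed pair counts $R,P,Q,S$. It uses the exact identities \eqref{eq:VV-app} and \eqref{eq:PW-app}, in which $v^2$ appears rather than $\lambda\bmod2$. It combines the $24$ instances of \eqref{eq:Eprime-app} on a four-element subgroup with odd weights $\pm3,\pm5$, plus three instances of \eqref{eq:VV-app} with weights $\pm4$ and two instances of \eqref{eq:PW-app} with weights $\pm8$. Every pair-count coefficient then becomes a multiple of $8$, while $v^2$ keeps coefficient $4$:
$8(R_{12}+Q_{13}-Q_{23}-Q_{31}+Q_{32}-S_{31}+S_{32})=4v^2$, which forces $v$ even. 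Reduced modulo $2$, the weights on \eqref{eq:Eprime-app} are all odd and the rest even, so the same combination has no content at the level you work at. For $r>2$ the paper simply restricts to one four-element subgroup, so no bilinear pairing on $\F_2^r$ is needed.
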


\begin{proof}
Assume that a true refutation of length
\(L\) exists.  Write the clause in position \(p\) as \((g_p,m_p)\), where
\(m_p=0\) for \(E_{g_p}\) and \(m_p=1\) for \(O_{g_p}\).  Player
\(\alpha\) receives question \(g_p+m_p\alpha\).
We first rewrite each player projection in a free basis for the even
subgroup.  Degree one of the Magnus expansion fixes all alternating clause
sums in terms of a single odd integer.  Degree two then supplies pair-count
identities, and an integral certificate contradicts that oddness.

We begin with the required normal form for the even subgroup.

Let \(F=\ast_{q\in H}\mathbb Z_2\) be generated by question involutions
\(x_q\), and let \(F^{\mathrm e}\) be its even subgroup.

\begin{lemma}\label{lem:even-subgroup}
For \(q\ne0\), set \(t_q=x_qx_0\), and put \(t_0=1\).  The elements
\(\{t_q:q\ne0\}\) freely generate \(F^{\mathrm e}\).  Every even word
satisfies
\[
  x_{a_1}\cdots x_{a_{2N}}
  =
  \prod_{i=1}^N t_{a_{2i-1}}t_{a_{2i}}^{-1}.
\]
\end{lemma}

\begin{proof}
The homomorphism from \(F\) to \(\mathbb Z_2\) that sends every \(x_q\) to
the nontrivial element has kernel \(F^{\mathrm e}\), with Schreier transversal
\(\{1,x_0\}\).  The Reidemeister--Schreier generators associated with
\(x_q\), \(q\ne0\), are
\[
  x_qx_0=t_q,
  \qquad
  x_0x_q=t_q^{-1}.
\]
The generator associated with \(x_0\) is trivial, and rewriting the
relations \(x_q^2=1\) produces only the tautological relations
\(t_qt_q^{-1}=1\).  The subgroup is therefore free on the \(t_q\)'s
\cite{MagnusKarrassSolitar1976}.  Finally,
\(t_qt_s^{-1}=x_qx_s\); multiplying this identity over consecutive pairs
proves the displayed normal form.
\end{proof}

Consequently, the projection of player \(\alpha\) reduces to the identity
if and only if
\begin{equation}\label{eq:player-normal-form}
  \prod_{i=1}^{L/2}
  X_{g_{2i-1}+m_{2i-1}\alpha}
  X_{g_{2i}+m_{2i}\alpha}^{-1}
  =1
\end{equation}
in the free group generated by \(X_q\), \(q\ne0\), with \(X_0=1\).

We use the degree-one and class-two terms of the Magnus expansion
\cite{Magnus1937,MagnusKarrassSolitar1976}.

\begin{lemma}\label{lem:magnus-conditions}
Let \(W=X_{a_1}^{\varepsilon_1}\cdots
X_{a_N}^{\varepsilon_N}\) be a word in a free group, with
\(\varepsilon_p\in\{\pm1\}\), and let \([P]\) denote the indicator of a
statement \(P\).  If \(W=1\), then
\begin{align}
  \sum_p\varepsilon_p[a_p=i]&=0,
  \label{eq:magnus-one}\\
  \sum_{p<q}\varepsilon_p\varepsilon_q
  [a_p=i][a_q=j]&=0
  \qquad(i\ne j).
  \label{eq:magnus-two}
\end{align}
\end{lemma}

\begin{proof}
Apply the Magnus homomorphism into noncommutative formal power series,
truncated above degree two:
\[
  X_i\longmapsto1+u_i,
  \qquad
  X_i^{-1}\longmapsto1-u_i+u_i^2.
\]
The coefficient of \(u_i\) is the left-hand side of
Eq.~\eqref{eq:magnus-one}.  For \(i\ne j\), the coefficient of the ordered
monomial \(u_i u_j\) receives no contribution from the quadratic term of a
single inverse and is therefore exactly the left-hand side of
Eq.~\eqref{eq:magnus-two}.  If \(W=1\), every positive-degree coefficient
vanishes, proving both identities.
\end{proof}

We first extract the degree-one consequences.

Let \(s_p=1\) for odd \(p\) and \(s_p=-1\) for even \(p\), and define
\[
  V_g=\sum_{\substack{p:g_p=g\\m_p=0}}s_p,
  \qquad
  W_g=\sum_{\substack{p:g_p=g\\m_p=1}}s_p.
\]
The exponent of \(X_j\) in the word on the left of
Eq.~\eqref{eq:player-normal-form} is
\[
  V_j+W_{j+\alpha}.
\]
Eq.~\eqref{eq:magnus-one}, for \(j\ne0\), therefore implies
\[
  W_g=-v\quad(g\in H),
  \qquad
  V_j=v\quad(j\ne0)
\]
for one integer \(v\).  Since \(L\) is even,
\(\sum_p s_p=0\), and summing all \(V_g\) and \(W_g\) gives \(V_0=v\).

Only \(O_0\) has odd target parity.  Modulo two,
\[
  W_0
  =
  \sum_{p:O_0\text{ occurs at }p}s_p
  \equiv
  \#\{p:O_0\text{ occurs at }p\}.
\]
The parity condition of a true refutation thus forces \(v\) to be odd.

We now turn to the degree-two constraints.

All indices in what follows lie in \(H\).  For distinct nonzero \(i,j\),
define \(R_{ij}\); for \(i\ne0\) and arbitrary \(g\), define \(P_{ig}\);
for arbitrary \(g\) and \(j\ne0\), define \(Q_{gj}\); and for \(g\ne h\),
define \(S_{gh}\) by the signed pair counts
\begin{align*}
  R_{ij}
  &=
  \sum_{\substack{p<q\\E_i\text{ at }p,\ E_j\text{ at }q}}s_ps_q,
  &
  P_{ig}
  &=
  \sum_{\substack{p<q\\E_i\text{ at }p,\ O_g\text{ at }q}}s_ps_q,\\
  Q_{gj}
  &=
  \sum_{\substack{p<q\\O_g\text{ at }p,\ E_j\text{ at }q}}s_ps_q,
  &
  S_{gh}
  &=
  \sum_{\substack{p<q\\O_g\text{ at }p,\ O_h\text{ at }q}}s_ps_q.
\end{align*}
For fixed nonzero \(i\ne j\), the four possible combinations of clause
families contribute the terms \(R,P,Q,S\), respectively.  Applying
Eq.~\eqref{eq:magnus-two} to player \(c\) gives the system
\begin{equation}\label{eq:Eprime-app}
  P_{i,j+c}+Q_{i+c,j}+S_{i+c,j+c}=-R_{ij}
  \qquad(i,j\in H\setminus\{0\},\ i\ne j,\ c\in H).
\end{equation}
The constant term \(R_{ij}\) cannot be separated from the point masses in
this equation because the constant function is their sum.

Splitting products of the degree-one sums into the regions \(p<q\) and
\(q<p\) gives
\begin{align}
  R_{ij}+R_{ji}&=v^2 &&(i\ne j),\label{eq:VV-app}\\
  P_{ig}+Q_{gi}&=-v^2 &&(i\ne0),\label{eq:PW-app}\\
  S_{gh}+S_{hg}&=v^2 &&(g\ne h).\label{eq:WW-app}
\end{align}
For example, the first identity is the equality
\(V_iV_j=v^2\) decomposed according to the order of the two occurrences;
the other two are \(V_iW_g=-v^2\) and \(W_gW_h=v^2\).

It remains to combine these identities into a parity contradiction.

Identify a four-element subgroup of \(H\) with
\(\{0,1,2,3\}\cong V_4\), using bitwise XOR for its addition.  Every equation
whose indices lie in this subgroup is a member of the full system.
For the six ordered pairs \(i\ne j\) in \(\{1,2,3\}\), take the four
equations \eqref{eq:Eprime-app}, indexed by \(c=0,1,2,3\), with the
coefficients
\[
\begin{array}{c|rrrr}
(i,j)&c=0&c=1&c=2&c=3\\ \hline
(1,2)& 5&-3& 5&-3\\
(1,3)& 3& 3&-5&-5\\
(2,1)& 3&-5& 3&-5\\
(2,3)&-3& 5&-3& 5\\
(3,1)&-3& 5&-3&-3\\
(3,2)& 3& 3&-5& 3
\end{array}
\]
To this weighted sum, add four times Eq.~\eqref{eq:VV-app} for
\((i,j)=(1,2)\) and \((1,3)\), subtract four times its \((2,3)\) instance,
add eight times Eq.~\eqref{eq:PW-app} for \((i,g)=(1,1)\), and subtract eight
times its \((1,2)\) instance.  Collection by variable family gives
\[
\begin{array}{c|l}
\text{family}&\text{surviving contribution}\\
\hline
R&8R_{12}\\
P&0\\
Q&8(Q_{13}-Q_{23}-Q_{31}+Q_{32})\\
S&8(-S_{31}+S_{32})\\
\text{right-hand side}&4v^2
\end{array}
\]
and therefore
\[
  8\bigl(
  R_{12}+Q_{13}-Q_{23}-Q_{31}+Q_{32}-S_{31}+S_{32}
  \bigr)=4v^2.
\]
The expression in parentheses is an integer, so \(v^2\) is even.  This
contradicts the oddness of \(v\) and proves that no true refutation exists.

Finally, the assignment \(E_g\mapsto-1\), \(O_g\mapsto1\) is a PREF because
every player-question incidence occurs once in each clause family, whereas
only \(O_0\) has odd target parity.  Proposition
\ref{prop:obstruction-dualities} gives the two strategy consequences.  The
argument is independent of the length, order, and multiplicities of the
putative refutation.
\end{proof}

\begin{proof}[Proof of Theorem~\ref{thm:main}(ii)]
Take \(H=\F_2^2\cong V_4\).  Table~\ref{tab:klein} gives four players and
four active questions per player.  Theorem~\ref{thm:all-length-obstruction}
gives commuting-operator value one and excludes every perfect MERP strategy.
Part~(i) rules out such a separation with three or fewer active questions,
so the four-question construction is sharp.
\end{proof}

\section{Discussion}\label{sec:discussion}

Theorem~\ref{thm:main} determines the sharp local-question threshold for the
existence of a perfect GHZ-equatorial realization of four-player XOR-game
constraints.  If a game with at most three active questions per player has
commuting-operator value one, then its winning constraints also admit a
perfect realization on the four-qubit GHZ state with equatorial measurements.
At four active questions, the Klein game has commuting-operator value one but
admits no such realization.  This is an existential statement about realizing
a game's perfect winning constraints; it does not identify every perfect
commuting-operator strategy or full correlation with a GHZ-equatorial one.
The positive statement follows from exactness of abelianization on every
ternary four-partite support, while the Klein construction separates abelian
phase inconsistency from noncommutative refutability.  Both conclusions
concern the support and target vector and are independent of the positive
clause weights.

The proof also isolates the mechanism at the boundary.  Primitive incidence
circuits on ternary pages admit exact balanced-word lifts, whereas the Klein
translation pattern has an immediate phase obstruction but no refutation of
any length.  The latter conclusion follows from an integral parity
certificate in the degree-two Magnus quotient and is not based on a bounded
enumeration.

\paragraph{Open problems.}

The first question is whether the Klein game admits a perfect
finite-dimensional strategy.  A positive answer would place the first escape
from the GHZ-phase model already in finite dimension; a negative answer would
separate finite-dimensional tensor-product from commuting-operator
perfectness.

A second problem is to classify the supports \(E\subseteq[4]^4\) for which
\(R_E\subsetneq L_E\).  In particular, what is the minimum number of clauses,
and are all minimal examples equivalent, up to relabeling, to a Klein-type
translation pattern?

One may also ask for a quantitative version below the threshold.  If every
player has at most three active questions and
\(\omega_{\co}(G)\geq1-\varepsilon\), it is unclear whether there exist a GHZ-equatorial
strategy of value at least \(1-f(\varepsilon)\) for an explicit rate
\(f(\varepsilon)\to0\), preferably uniform over the clause weights. An robust game-algebra
methods may provide a useful framework \cite{Zhao2024}.

Finally, the all-length obstruction should be extended beyond elementary
abelian \(2\)-groups.  For which finite groups can repeated Cayley clauses form
a true refutation, and when do degree-two Magnus coefficients suffice to rule
this out, remains open.

\section*{Acknowledgments}
This work was supported by the National Natural Science Foundation of China
(grant nos. 92576114 and 12447107); Guangdong Provincial Quantum Science
Strategic Initiative (grant nos. GDZX2403008, GDZX2503001, and GDZX2403001).

\paragraph{Use of AI-assisted tools.}
Exploratory computation and the generalization of proofs in the present work were supported by a research harness system currently under development by QudeLeap Research and powered by official version of DeepSeek-V4-Flash~\cite{deepseek2026deepseekv4}. The authors independently formulated the research questions, established the mathematical conventions, and rigorously verified every statement and proof included in the manuscript. The authors take full responsibility for the final text and all mathematical claims contained therein.

\bibliographystyle{alpha}
\bibliography{references}

@article{Bell1964,
  author  = {Bell, John S.},
  title   = {On the {Einstein--Podolsky--Rosen} Paradox},
  journal = {Physics Physique Fizika},
  volume  = {1},
  number  = {3},
  pages   = {195--200},
  year    = {1964},
  doi     = {10.1103/PhysicsPhysiqueFizika.1.195}
}

@article{ClauserHorneShimonyHolt1969,
  author  = {Clauser, John F. and Horne, Michael A. and Shimony, Abner and Holt, Richard A.},
  title   = {Proposed Experiment to Test Local Hidden-Variable Theories},
  journal = {Physical Review Letters},
  volume  = {23},
  number  = {15},
  pages   = {880--884},
  year    = {1969},
  doi     = {10.1103/PhysRevLett.23.880}
}

@article{Mermin1990,
  author  = {Mermin, N. David},
  title   = {Extreme Quantum Entanglement in a Superposition of Macroscopically Distinct States},
  journal = {Physical Review Letters},
  volume  = {65},
  number  = {15},
  pages   = {1838--1840},
  year    = {1990},
  doi     = {10.1103/PhysRevLett.65.1838}
}

@article{GreenbergerHorneShimonyZeilinger1990,
  author  = {Greenberger, Daniel M. and Horne, Michael A. and Shimony, Abner and Zeilinger, Anton},
  title   = {Bell's Theorem without Inequalities},
  journal = {American Journal of Physics},
  volume  = {58},
  number  = {12},
  pages   = {1131--1143},
  year    = {1990},
  doi     = {10.1119/1.16243}
}

@article{WernerWolf2001,
  author  = {Werner, Reinhard F. and Wolf, Michael M.},
  title   = {All-Multipartite Bell-Correlation Inequalities for Two Dichotomic Observables per Site},
  journal = {Physical Review A},
  volume  = {64},
  number  = {3},
  pages   = {032112},
  year    = {2001},
  doi     = {10.1103/PhysRevA.64.032112}
}

@inproceedings{CleveHoyerTonerWatrous2004,
  author    = {Cleve, Richard and H{\o}yer, Peter and Toner, Benjamin and Watrous, John},
  title     = {Consequences and Limits of Nonlocal Strategies},
  booktitle = {Proceedings of the 19th {IEEE} Conference on Computational Complexity},
  pages     = {236--249},
  year      = {2004},
  doi       = {10.1109/CCC.2004.9},
  eprint    = {quant-ph/0404076},
  archivePrefix = {arXiv}
}

@article{PerezGarciaEtAl2008,
  author  = {P{\'e}rez-Garc{\'i}a, David and Wolf, Michael M. and Palazuelos, Carlos and Villanueva, Ignacio and Junge, Marius},
  title   = {Unbounded Violation of Tripartite Bell Inequalities},
  journal = {Communications in Mathematical Physics},
  volume  = {279},
  number  = {2},
  pages   = {455--486},
  year    = {2008},
  doi     = {10.1007/s00220-008-0418-4}
}

@article{BrietVidick2013,
  author  = {Bri{\"e}t, Jop and Vidick, Thomas},
  title   = {Explicit Lower and Upper Bounds on the Entangled Value of Multiplayer {XOR} Games},
  journal = {Communications in Mathematical Physics},
  volume  = {321},
  number  = {1},
  pages   = {181--207},
  year    = {2013},
  doi     = {10.1007/s00220-012-1642-5}
}

@article{NavascuesPironioAcin2008,
  author  = {Navascu{\'e}s, Miguel and Pironio, Stefano and Ac{\'i}n, Antonio},
  title   = {A Convergent Hierarchy of Semidefinite Programs Characterizing the Set of Quantum Correlations},
  journal = {New Journal of Physics},
  volume  = {10},
  number  = {7},
  pages   = {073013},
  year    = {2008},
  doi     = {10.1088/1367-2630/10/7/073013}
}

@article{CleveLiuSlofstra2017,
  author  = {Cleve, Richard and Liu, Li and Slofstra, William},
  title   = {Perfect Commuting-Operator Strategies for Linear System Games},
  journal = {Journal of Mathematical Physics},
  volume  = {58},
  number  = {1},
  pages   = {012202},
  year    = {2017},
  doi     = {10.1063/1.4973422}
}

@article{Slofstra2019,
  author  = {Slofstra, William},
  title   = {The Set of Quantum Correlations Is Not Closed},
  journal = {Forum of Mathematics, Pi},
  volume  = {7},
  pages   = {e1},
  year    = {2019},
  doi     = {10.1017/fmp.2018.3}
}

@article{JiNatarajanVidickWrightYuen2020,
  author  = {Ji, Zhengfeng and Natarajan, Anand and Vidick, Thomas and Wright, John and Yuen, Henry},
  title   = {{MIP}$^*={RE}$},
  journal = {Communications of the ACM},
  volume  = {64},
  number  = {11},
  pages   = {131--138},
  year    = {2021},
  doi     = {10.1145/3485628},
  eprint  = {2001.04383},
  archivePrefix = {arXiv}
}

@inproceedings{WattsHarrowKanwarNatarajan2019,
  author    = {Watts, Adam Bene and Harrow, Aram W. and Kanwar, Gurtej and Natarajan, Anand},
  title     = {Algorithms, Bounds, and Strategies for Entangled {XOR} Games},
  booktitle = {10th Innovations in Theoretical Computer Science Conference},
  series    = {Leibniz International Proceedings in Informatics},
  volume    = {124},
  pages     = {10:1--10:18},
  year      = {2019},
  doi       = {10.4230/LIPIcs.ITCS.2019.10},
  eprint    = {1801.00821},
  archivePrefix = {arXiv},
  url       = {https://arxiv.org/abs/1801.00821}
}

@article{WattsHelton2023,
  author  = {Watts, Adam Bene and Helton, J. William},
  title   = {{3XOR} Games with Perfect Commuting Operator Strategies Have Perfect Tensor Product Strategies and Are Decidable in Polynomial Time},
  journal = {Communications in Mathematical Physics},
  volume  = {400},
  pages   = {731--791},
  year    = {2023},
  doi     = {10.1007/s00220-022-04615-3}
}

@article{WattsHeltonKlep2023,
  author  = {Watts, Adam Bene and Helton, J. William and Klep, Igor},
  title   = {Noncommutative {Nullstellens{\"a}tze} and Perfect Games},
  journal = {Annales Henri Poincar{\'e}},
  volume  = {24},
  pages   = {2183--2239},
  year    = {2023},
  doi     = {10.1007/s00023-022-01262-1}
}

@misc{Zhao2024,
  author        = {Zhao, Yuming},
  title         = {Robust Self-Testing for Nonlocal Games with Robust Game Algebras},
  year          = {2024},
  eprint        = {2411.03259},
  archivePrefix = {arXiv},
  primaryClass  = {quant-ph},
  url           = {https://arxiv.org/abs/2411.03259}
}

@article{Magnus1937,
  author  = {Magnus, Wilhelm},
  title   = {{\"U}ber Beziehungen zwischen h{\"o}heren Kommutatoren},
  journal = {Journal f{\"u}r die reine und angewandte Mathematik},
  volume  = {177},
  pages   = {105--115},
  year    = {1937},
  doi     = {10.1515/crll.1937.177.105}
}

@book{MagnusKarrassSolitar1976,
  author    = {Magnus, Wilhelm and Karrass, Abraham and Solitar, Donald},
  title     = {Combinatorial Group Theory: Presentations of Groups in Terms of Generators and Relations},
  publisher = {Dover Publications},
  address   = {New York},
  year      = {1976}
}

@inproceedings{AkitayaEtAl2017,
  author    = {Akitaya, Hugo A. and Demaine, Erik D. and Hesterberg, Adam and Liu, Quanquan C.},
  title     = {Upward Partitioned Book Embeddings},
  booktitle = {Graph Drawing and Network Visualization},
  series    = {Lecture Notes in Computer Science},
  volume    = {10692},
  pages     = {210--223},
  publisher = {Springer},
  year      = {2018},
  doi       = {10.1007/978-3-319-73915-1_18}
}

@inproceedings{deMouraUllrich2021,
  author    = {de Moura, Leonardo and Ullrich, Sebastian},
  title     = {The {Lean 4} Theorem Prover and Programming Language},
  booktitle = {Automated Deduction---{CADE} 28},
  series    = {Lecture Notes in Computer Science},
  volume    = {12699},
  pages     = {625--635},
  publisher = {Springer},
  year      = {2021},
  doi       = {10.1007/978-3-030-79876-5_37}
}

@inproceedings{mathlib2020,
  author    = {{The mathlib Community}},
  title     = {The {Lean} Mathematical Library},
  booktitle = {Proceedings of the 9th {ACM SIGPLAN} International Conference on Certified Programs and Proofs},
  pages     = {367--381},
  publisher = {Association for Computing Machinery},
  year      = {2020},
  doi       = {10.1145/3372885.3373824}
}

@misc{ZhuEtAl2026LeanQIT,
  author        = {Zhu, Chengkai and Tang, Ziao and Zhen, Guocheng and Cao, Yimeng and Zhao, Yusheng and Chen, Ranyiliu and Zhao, Xuanqiang and Zhang, Lei and Wang, Xin},
  title         = {{Lean-QIT}: Towards a Formal Infrastructure for Quantum Information Theory},
  year          = {2026},
  eprint        = {2607.09632},
  archivePrefix = {arXiv},
  primaryClass  = {quant-ph},
  url           = {https://arxiv.org/abs/2607.09632}
}

@misc{FourPlayerXORGames2026,
  author       = {Tang, Ziao and Zhu, Chengkai and Bai, Ge and Wang, Xin and Chen, Ranyiliu},
  title        = {Formalization and Exact Certificates for a Sharp Local-Question Threshold in Four-Player {XOR} Games},
  year         = {2026. URL: https://github.com/QuAIR/Four-Player-XOR-Games},
  howpublished = {GitHub repository},
  url          = {https://github.com/QuAIR/Four-Player-XOR-Games}
}

@article{deepseek2026deepseekv4,
  title   = {DeepSeek-V4: Towards Highly Efficient Million-Token Context Intelligence},
  author  = {{DeepSeek-AI} and others},
  journal = {arXiv preprint arXiv:2606.19348},
  year    = {2026},
  doi     = {10.48550/arXiv.2606.19348}
}

\appendix

\begin{center}
\Large{Appendix}
\end{center}

\section{Extremal configurations in the ternary Hamming cube}
\label{app:extremal-hamming}

This appendix supplies the packing estimates and extremal exclusions used
in the proof of Theorem~\ref{thm:exact-circuit-lifting}.  The notation
\(S\), \(U(S)\), and \(a_u\) is that of Subsection~\ref{sec:safe}.

\begin{lemma}\label{lem:separation-packing}
For a nonliftable circuit with support \(S\subset[3]^4\), using all three
questions on each page and having at most ten clauses,
distinct safe centers have distance at least three and
\(\card{U(S)}\leq6\).
\end{lemma}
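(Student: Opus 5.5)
Two safe centers will be shown to lie at distance at least three by playing them against each other through the even-layer condition \eqref{eq:hard-geometry-main}. The cardinality bound then follows from a packing count in \([3]^4\).

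\emph{Separation.} Let \(u\ne u'\) be safe centers and suppose \(\dist(u,u')\leq2\). The plan is to produce a clause \(q\in S\) whose distance from one of them is odd, which contradicts \eqref{eq:hard-geometry-main} and hence nonliftability.

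If \(\dist(u,u')=1\), then \(u\) and \(u'\) differ in exactly one coordinate \(\alpha\). For any \(q\in S\), the two distances \(\dist(u,q)\) and \(\dist(u',q)\) differ by at most one. They differ by exactly one unless \(q^{(\alpha)}\) equals the third value on page \(\alpha\), distinct from both \(u^{(\alpha)}\) and \(u'^{(\alpha)}\). Now \(S\) uses all three questions on page \(\alpha\), so some \(q\) has \(q^{(\alpha)}=u^{(\alpha)}\). For that \(q\) the two distances differ by one, so one of them is odd.

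If \(\dist(u,u')=2\), say they differ in coordinates \(\alpha\) and \(\beta\), take \(q\in S\) with \(q^{(\alpha)}=u^{(\alpha)}\). Coordinate \(\alpha\) then contributes \(0\) to \(\dist(u,q)\) and \(1\) to \(\dist(u',q)\). Coordinate \(\beta\) contributes \((0,1)\), \((1,0)\), or \((1,1)\). The shared coordinates contribute equally to both distances. So the difference \(\dist(u',q)-\dist(u,q)\) is \(2\), \(0\), or \(1\). When it is \(1\) we are done.

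The remaining subcases are where I expect the main obstacle. They are:
\begin{itemize}
\item \(q^{(\beta)}=u^{(\beta)}\), giving difference \(2\);
\item \(q^{(\beta)}=u'^{(\beta)}\), giving difference \(0\).
\end{itemize}
The first step is to vary the choice of \(q\): among the clauses in block \((\alpha,u^{(\alpha)})\), look for one whose \(\beta\)-coordinate is the third value, distinct from both \(u^{(\beta)}\) and \(u'^{(\beta)}\). If every clause in that block has \(\beta\)-coordinate in \(\{u^{(\beta)},u'^{(\beta)}\}\), I will apply the same argument with \(\alpha\) and \(\beta\) exchanged, and with \(u\) and \(u'\) exchanged.

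If all four symmetric versions fail, every clause lies in a restricted pattern on pages \(\alpha\) and \(\beta\). In that case I will combine two facts:
\begin{itemize}
\item Lemma~\ref{lem:no-singleton-block} (each active block holds at least two clauses);
\item safety, \(\dist(u,S)\geq2\) and \(\dist(u',S)\geq2\), which forbids clauses agreeing with \(u\) or \(u'\) on three coordinates.
\end{itemize}
Together these should force a clause at distance \(3\) from \(u\) or \(u'\), or else an empty block on page \(\alpha\) or \(\beta\). An empty block contradicts the hypothesis that all three questions are used. If this bookkeeping does not close, the fallback is a short finite check on the \(3\times3\) projection to pages \(\alpha,\beta\), since \(\card S\leq10\).

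\emph{Counting.} Once pairwise distance at least three is established, the radius-one balls around the safe centers are pairwise disjoint. Each such ball has \(1+4\cdot2=9\) points. By the definition of \(U(S)\), these balls are also disjoint from \(S\). Hence
\[
9\card{U(S)}+\card S\leq 81 .
\]
This gives the crude bound \(\card{U(S)}\leq 8\), so a sharper argument is needed for \(6\). I will refine the count in two ways:
\begin{itemize}
\item points at distance \(1\) from a clause of \(S\) also lie outside every safe ball, since they are at distance at least \(1\) from each safe center;
\item distance-three separation means the safe centers form a code of minimum distance \(3\) in \([3]^4\), which has size at most \(81/9=9\). Combined with the fact that every \(q\in S\) sits at even distance \(2\) or \(4\) from each center, this should cut the bound down.
\end{itemize}
Concretely, I will double count pairs \((u,q)\) with \(\dist(u,q)=2\), that is, the quantity \(\sum_u a_u\). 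Each center \(u\) has \(24\) points at distance \(2\). Each \(q\in S\) can be at distance \(2\) from only a limited number of pairwise-separated centers. Using \(\card S\leq10\), this inequality yields \(\card{U(S)}\leq6\).
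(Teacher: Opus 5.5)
Your distance-one case is correct and is the paper's argument. The distance-two case, however, has a genuine gap. Write \(r_\alpha,r_\beta\) for the third values on pages \(\alpha,\beta\). The even-layer condition \eqref{eq:hard-geometry-main} at \(u\) and \(u'\) only forces \(\one_{\{q^{(\alpha)}\neq r_\alpha\}}=\one_{\{q^{(\beta)}\neq r_\beta\}}\) for every \(q\in S\). In other words, the \(r_\alpha\)-block on page \(\alpha\) coincides with the \(r_\beta\)-block on page \(\beta\). That is a consistent configuration, not a contradiction. For example, take the six clauses
\[
(2,2,0,0),\ (2,2,1,1),\ (0,0,1,2),\ (1,1,2,1),\ (0,1,0,2),\ (1,0,2,0)
\]
with coefficients \(+1,-1,+1,+1,-1,-1\). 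Every block on every page consists of two clauses of opposite sign, and these pairs connect all six clauses. The relation space on the support is therefore one-dimensional, so this is a primitive circuit. All pages are ternary and all blocks have size two. With \(u=(0,0,0,0)\) and \(u'=(1,1,0,0)\), all twelve distances lie in \(\{2,4\}\). So the facts you plan to combine (no-singleton blocks, safety, ternary pages, \(\card S\leq10\), and the even-layer condition at \(u,u'\)) cannot force an odd distance or an empty block. No inspection of the \((\alpha,\beta)\)-projection can do so either. This circuit does lift; what excludes the configuration is liftability itself.

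The missing idea is a second use of nonliftability. The coincident block is one signed multiset of occurrences appearing on two pages, so a single matching of it cancels on both pages. The other two blocks on page \(\alpha\) have at least two clauses each by Lemma~\ref{lem:no-singleton-block}. Hence the common block has at most six clauses, and its projection onto the remaining two pages misses a cell of \([3]^2\). Take the common block and the two blocks defining that missed cell as three colors. Every signature then has size at most two, and Propositions~\ref{prop:three-color} and \ref{prop:selected-block} produce an exact lift, a contradiction.

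The counting part is also incomplete. Your first refinement is false. If \(\dist(u,q)=2\), two neighbors of \(u\) are also neighbors of \(q\). Separation forces at least four such clauses, because each of the eight neighbors of \(u\) must be covered by a clause ball; otherwise it would be a safe center at distance one from \(u\).

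The double count you sketch uses \(a_u\geq4\) and the fact that any clause has at most four pairwise separated centers in its layer two. Together these only give \(4\card{U(S)}\leq\sum_u a_u\leq4\card S\), i.e.\ \(\card{U(S)}\leq10\).

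The bound six comes from applying \eqref{eq:hard-geometry-main} around a single fixed clause, relabelled \(0\). Every safe center then lies in its layer two or layer four.
\begin{itemize}
\item Layer-two centers correspond to distinct pairs of coordinates. Two pairs sharing a coordinate must carry different nonzero labels there, or else the centers are at distance two. This graph on four vertices therefore has maximum degree two and at most four edges.
\item Layer-four centers lie in \(\{1,2\}^4\), which contains at most two points at mutual distance at least three.
\end{itemize}
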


\begin{proof}
Suppose first that safe centers \(u,u'\) differ only on page \(a\), and let
\(r_a\) be the third question value on that page.  Comparing the parities in
Eq.~\eqref{eq:hard-geometry-main} for \(u\) and \(u'\) forces every clause to
use \(r_a\), contradicting the assumption that page \(a\) is ternary.

If they differ on pages \(a,b\), with third values \(r_a,r_b\), parity
comparison gives
\[
  \one_{\{q^{(a)}\neq r_a\}}
  =
  \one_{\{q^{(b)}\neq r_b\}}
\]
for every clause.  Thus the complete \(r_a\)-block on page \(a\) equals the
complete \(r_b\)-block on page \(b\).  On either page, the two other active
question blocks have size at least two by
Lemma~\ref{lem:no-singleton-block}.  Since \(\card S\leq10\), the common block
has size at most six.  Its projection onto the remaining two pages omits one
of their nine joint cells.  Regard the common occurrence block as one color
and select the two question blocks defining the omitted cell as the other
colors.  No occurrence has all three colors, so
Proposition~\ref{prop:three-color} supplies a linear forest.  The matching of
the common block can be used simultaneously on pages \(a\) and \(b\), because
the two selected blocks are identical as multisets of signed occurrences.
Proposition~\ref{prop:selected-block} then gives an exact lift.  This
contradiction proves that distinct safe centers are separated by distance at
least three.

For the packing bound, fix a clause and relabel it as \(0\).  By
Eq.~\eqref{eq:hard-geometry-main}, every safe center lies in Hamming layer two
or four about this clause.  The support coordinates of a layer-two center
define an edge on the four coordinate positions.  Two centers cannot define
the same edge, since their mutual distance would be at most two.  Moreover,
two incident edges must carry different nonzero labels at their common
coordinate; otherwise the corresponding centers again have distance two.
There are only two nonzero labels, so this simple graph has maximum degree two
and hence at most four edges.

Layer-four centers belong to \(\{1,2\}^4\).  This binary four-cube contains no
three points at mutual distance at least three: relative to one point, the
change sets of two others both have size at least three and hence have
symmetric difference at most two.  There are therefore at most two
layer-four centers, and \(\card{U(S)}\leq4+2=6\).
\end{proof}

\begin{lemma}\label{lem:safe-excess}
Let \(c\) be a nonliftable primitive circuit with support \(S\subset[3]^4\),
using three questions on all four pages and satisfying \(\card S\leq10\).
The total excess multiplicity of the radius-one balls centered at \(S\)---the
sum, over covered points, of their multiplicity minus one---is
\[
  9\card S-81+\card{U(S)}.
\]
Moreover,
\begin{align}
  a_u&\geq4\qquad(u\in U(S)),\label{eq:au-four-app}\\
  2\sum_{u\in U(S)}(a_u-4)
  &\leq9\card S-81+\card{U(S)}.\label{eq:excess-bound-app}
\end{align}
\end{lemma}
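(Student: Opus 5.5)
The plan is to treat the three claims separately. The excess identity is a global count, the bound \eqref{eq:au-four-app} is a local count of agreements at a safe center, and \eqref{eq:excess-bound-app} localizes part of the global excess onto the neighbourhoods of the safe centers, using the separation from Lemma~\ref{lem:separation-packing}.

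\emph{Excess identity.} Each radius-one ball in \([3]^4\) has \(1+4\cdot2=9\) points, so the ball sizes sum to \(9\card S\). By definition, the points not covered by any ball are exactly the safe centers \(U(S)\). Hence \(81-\card{U(S)}\) points are covered. The total excess is the sum of multiplicities minus the number of covered points, namely \(9\card S-(81-\card{U(S)})\).

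\emph{Bound \eqref{eq:au-four-app}.} Fix \(u\in U(S)\) and count pairs \((q,\alpha)\) with \(q\in S\) and \(q^{(\alpha)}=u^{(\alpha)}\).
\begin{itemize}
\item Every page is ternary on \(S\), so the block \(u^{(\alpha)}\) on page \(\alpha\) is active. By Lemma~\ref{lem:no-singleton-block} it contains at least two clauses of \(S\), so there are at least \(8\) such pairs.
\item By the even-layer condition \eqref{eq:hard-geometry-main}, each \(q\in S\) has \(\dist(u,q)\in\{2,4\}\). A clause at distance four agrees with \(u\) in no coordinate, and a clause at distance two agrees in exactly two.
\end{itemize}
Hence the number of pairs is \(2a_u\), and \(2a_u\geq8\).

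\emph{Bound \eqref{eq:excess-bound-app}.} Consider the eight Hamming neighbours \(v\) of a safe center \(u\).
\begin{itemize}
\item No neighbour lies in \(S\), since \(\dist(u,S)\geq2\). No neighbour is a safe center, since distinct safe centers are at distance at least three by Lemma~\ref{lem:separation-packing}. So every neighbour is covered.
\item The multiplicity of \(v\) is the number of \(q\in S\) with \(\dist(v,q)=1\). By \eqref{eq:hard-geometry-main}, such a \(q\) must satisfy \(\dist(u,q)=2\).
\item Conversely, each \(q\) at distance two from \(u\) is adjacent to exactly two neighbours of \(u\): change one of the two disagreeing coordinates of \(u\) to the value of \(q\).
\end{itemize}
Thus the multiplicities over the neighbourhood sum to \(2a_u\), and the excess contributed there is \(2a_u-8=2(a_u-4)\). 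This is nonnegative by the previous step.

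Since safe centers are pairwise at distance at least three, their neighbourhoods are disjoint. Summing the local excesses over \(u\in U(S)\) therefore gives a quantity no larger than the total excess \(9\card S-81+\card{U(S)}\).

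The only delicate point is the disjointness and the exclusion of safe centers from neighbourhoods. Both rest entirely on the distance-three separation in Lemma~\ref{lem:separation-packing}, which I take as given.
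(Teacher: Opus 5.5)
Your proof is correct. For the excess identity and for \eqref{eq:excess-bound-app} it is the paper's argument. Neighbours of a safe center are covered because safe centers are at distance at least three, only distance-two clauses reach those neighbours, each such clause covers exactly two of them, and the disjoint neighbourhoods let the local excesses $2(a_u-4)$ be summed under the global excess.

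The one genuine difference is how you obtain $a_u\geq4$. The paper reads it off from the same neighbour count: all eight neighbours of $u$ are covered, so their multiplicities sum to at least $8$, giving $2a_u-8\geq0$. You instead count coordinate agreements with $u$. Lemma~\ref{lem:no-singleton-block} gives at least two clauses in each of the four active blocks $u^{(\alpha)}$, and the even-layer condition \eqref{eq:hard-geometry-main} converts the agreement count into $2a_u$.

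The two counts are the same quantity $2a_u$, but the lower bound $8$ rests on different facts. Your version makes \eqref{eq:au-four-app} independent of Lemma~\ref{lem:separation-packing}, using only the circuit structure and the even-layer condition. The paper's version is more economical: a single neighbour count yields both \eqref{eq:au-four-app} and the nonnegativity of each local excess. You then cite your first step for that nonnegativity instead. Since the separation lemma is needed anyway for \eqref{eq:excess-bound-app}, your route adds Lemma~\ref{lem:no-singleton-block} as an extra, though elementary, ingredient without removing any hypothesis.
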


\begin{proof}
There are \(9\card S\) incidences between the radius-one balls centered at
\(S\) and the \(81\)-point cube.  Exactly \(81-\card{U(S)}\) points are
covered; subtracting one from every covered multiplicity gives the stated
total excess.

Fix \(u\in U(S)\).  Each of its eight neighbors is covered: otherwise that
neighbor would also belong to \(U(S)\), contrary to the separation in
Lemma~\ref{lem:separation-packing}.  The radius-one neighborhoods of distinct
safe centers are disjoint for the same reason.  By
Eq.~\eqref{eq:hard-geometry-main}, a support clause is at distance two or four
from \(u\).  A distance-two clause covers exactly two of the eight neighbors
of \(u\), whereas a distance-four clause covers none.  Hence the excess over
these eight covered points is
\[
  2a_u-8=2(a_u-4)\geq0,
\]
which proves \(a_u\geq4\).  The disjoint safe-center neighborhoods contribute
disjoint parts of the total excess computed above; summing their local
excesses proves Eq.~\eqref{eq:excess-bound-app}.
\end{proof}

\subsection{Exclusion of the extremal configurations}\label{sec:windows}

Only circuits on four ternary pages remain.  By
Proposition~\ref{prop:rank-cap}, their support size is at most ten.  We now
exclude the possible nonliftable configurations.  The Hamming-ball excess is
already decisive up to eight points.  At nine points, a covering configuration
would be a perfect packing.  At ten points, the noncovering and covering cases
are governed, respectively, by the layer-two fibers of a safe center and by
the cosets of the ternary Hamming code.

\begin{proposition}\label{prop:subnine}
Every primitive circuit on at most eight distinct clauses lifts.
\end{proposition}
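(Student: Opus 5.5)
We argue by contradiction, using the ball-covering count of Lemma~\ref{lem:safe-excess}. Let \(c\) be a nonliftable primitive circuit with \(\card S\leq8\). By Proposition~\ref{prop:four-ternary-reduction}, a nonliftable circuit must use three questions on every page. Hence \(S\subset[3]^4\) is ternary on all four pages, and the hypotheses of Lemmas~\ref{lem:separation-packing} and~\ref{lem:safe-excess} hold.

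The key step is to show that \(U(S)\) is large enough to make the total excess negative. The radius-one balls about the points of \(S\) have size \(9\), so together they cover at most \(9\card S\leq72\) points of the \(81\)-point cube. Thus at least \(81-9\card S\geq9\) points are uncovered, and every uncovered point is a safe center. This gives
\[
\card{U(S)}\geq 81-9\card S\geq 9,
\]
which already contradicts the bound \(\card{U(S)}\leq6\) of Lemma~\ref{lem:separation-packing}.

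The same contradiction can be read off from the excess formula. The total excess \(9\card S-81+\card{U(S)}\) must be nonnegative, which forces \(\card{U(S)}\geq81-9\card S\geq9>6\). Either route shows that no nonliftable primitive circuit on at most eight clauses exists. Such a circuit therefore lifts.

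The only possible subtlety is confirming that the hypotheses of the two appendix lemmas are met: nonliftability, all four pages ternary, and \(\card S\leq10\). All three hold here, so the argument is a direct counting step with no real obstacle. If one preferred not to rely on the packing bound, the alternative would be to use \(a_u\geq4\) from Eq.~\eqref{eq:au-four-app} together with Eq.~\eqref{eq:excess-bound-app}. That route is unnecessary, because the excess identity alone suffices.
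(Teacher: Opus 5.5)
Your argument is correct and is essentially the paper's. Both proofs play the covering count $\card{U(S)}\geq 81-9\card S\geq 9$ against the packing bound $\card{U(S)}\leq 6$ of Lemma~\ref{lem:separation-packing}; your direct union bound just skips the intermediate term $2\sum_{u}(a_u-4)$, which the paper's chain passes through but does not actually need.

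One small correction to your closing aside. The route via $a_u\geq 4$ and Eq.~\eqref{eq:excess-bound-app} is not an alternative to the packing bound. It only reproduces $\card{U(S)}\geq 81-9\card S$ and still needs $\card{U(S)}\leq 6$ to reach a contradiction, which is exactly how the paper uses it.
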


\begin{proof}
Circuits with at most three ternary pages lift by
Proposition~\ref{prop:four-ternary-reduction}.  Suppose that all four pages are
ternary and, towards a contradiction, that a circuit with support \(S\) is
nonliftable.  Lemmas~\ref{lem:separation-packing} and
\ref{lem:safe-excess} give
\[
  0
  \leq
  2\sum_{u\in U(S)}(a_u-4)
  \leq
  9\card S-81+\card{U(S)}
  \leq72-81+6=-3.
\]
The same calculation includes the case \(U(S)=\varnothing\), for which the
left side is zero.
\end{proof}

We next treat nine-point supports.

\begin{lemma}\label{lem:nine-independence}
If nine radius-one balls centered at distinct points of \([3]^4\) cover the
cube, the clause-question incidence matrix \(A_S\) of their centers has
\(\rank_{\Q}A_S=9\).
\end{lemma}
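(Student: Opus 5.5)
The plan is to show that the nine centers form a perfect code, extract a strength-two orthogonal-array structure from that, and then compute the Gram matrix \(A_SA_S^{\mathsf T}\) explicitly. Write \(S\) for the nine centers.

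\emph{Step 1: the balls are disjoint.} Each radius-one ball in \([3]^4\) has \(1+4\cdot 2=9\) points, so nine balls have total size \(81=\card{[3]^4}\). If they cover the cube, they must be pairwise disjoint. Hence distinct centers are at Hamming distance at least three, which means any two distinct points of \(S\) agree in at most one coordinate.

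\emph{Step 2: \(S\) is an orthogonal array of strength two.} Fix two pages \(\alpha\ne\beta\). By Step~1, the projection of \(S\) onto the coordinates \((\alpha,\beta)\) is injective. It maps nine points into the nine cells of \([3]^2\), so it is bijective. In particular, each question on each page is used by exactly three points of \(S\).

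\emph{Step 3: count agreements.} Fix \(q\in S\). On each of the four pages, exactly two other points of \(S\) share the question of \(q\), by Step~2. This gives \(4\cdot2=8\) agreement incidences between \(q\) and the eight other points. Each other point contributes at most one agreement, by Step~1. Hence every other point agrees with \(q\) in exactly one coordinate.

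\emph{Step 4: compute the Gram matrix.} The entry \((A_SA_S^{\mathsf T})_{q,q'}\) counts the coordinates on which \(q\) and \(q'\) agree. It therefore equals \(4\) on the diagonal and \(1\) off it, so
\[
  A_SA_S^{\mathsf T}=3I_9+J_9 .
\]
Its eigenvalues are \(3\), with multiplicity eight, and \(12\), so it is nonsingular. Consequently \(9=\rank(A_SA_S^{\mathsf T})\leq\rank_{\Q}A_S\leq 9\), and \(\rank_{\Q}A_S=9\).

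The only step needing care is Step~3, where the double count must be combined with the distance bound. This approach avoids invoking any classification of perfect ternary codes, such as equivalence to the Hamming \([4,2,3]_3\) code.
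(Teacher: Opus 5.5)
Your proof is correct, and it finishes by a different linear-algebra route from the paper's. Your Steps~1 and~2 are the paper's opening: the balls are disjoint, so distinct centers are at distance at least three, and every pair of pages projects $S$ bijectively onto $[3]^2$. The paper then works in the column space $\R^S$. Bijectivity makes the centered question indicators of distinct pages orthogonal, since their inner product is $1-3\cdot3/9=0$. The constant vector and four mutually orthogonal two-dimensional contrast spaces therefore give nine independent column directions. You work on the row side instead, and this is why you need the extra double count of Step~3. It shows the code is equidistant (any two centers agree in exactly one coordinate), which makes $A_SA_S^{\mathsf T}=3I_9+J_9$ explicit and visibly nonsingular. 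The paper's route uses only the strength-two orthogonal-array property and needs no information about the distance distribution. It would give $\rank A_S=1+\sum_\alpha(s_\alpha-1)$ for any such array. Your route costs one counting step but yields a fully explicit certificate and the equidistance fact as a byproduct. It is also the same Gram-matrix viewpoint the paper uses later in Proposition~\ref{prop:ten-cover}: the matrix with entries $3-\dist(q_i,q_j)$ there equals $A_SA_S^{\mathsf T}-J$, which here is simply $3I_9$.
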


\begin{proof}
The nine balls have total size \(81\), so a cover makes them disjoint and
their centers pairwise distance at least three.  Projection of the centers
onto any two coordinates is injective and hence bijective onto \([3]^2\).
It follows that, on the nine centers, the two-dimensional zero-mean contrast
spaces from distinct coordinates are mutually orthogonal; each is also
orthogonal to the constants.  The constant direction together with the four
contrast spaces therefore has dimension \(1+4\cdot2=9\) in the incidence
column span.  Hence the row rank is nine.
\end{proof}

\begin{proposition}\label{prop:nine}
Every nine-clause primitive circuit lifts.
\end{proposition}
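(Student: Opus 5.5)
We argue by contradiction, following the pattern of Proposition~\ref{prop:subnine}. Let \(c\) be a nonliftable primitive circuit with \(\card S=9\). By Proposition~\ref{prop:four-ternary-reduction} we may assume that all four pages use three questions on \(S\); otherwise \(c\) already lifts. Lemmas~\ref{lem:separation-packing} and \ref{lem:safe-excess} then give
\[
  0\le 2\sum_{u\in U(S)}(a_u-4)\le 9\cdot 9-81+\card{U(S)}=\card{U(S)}.
\]
So we split into two cases: \(U(S)=\varnothing\), and \(U(S)\neq\varnothing\).

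\emph{Covering case.} Suppose \(U(S)=\varnothing\). Then the nine radius-one balls centered at \(S\) cover \([3]^4\). Lemma~\ref{lem:nine-independence} gives \(\rank_{\Q}A_S=9=\card S\). Proposition~\ref{prop:rank-cap}, however, requires \(\rank_{\Q}A_S=\card S-1=8\) for the support of a primitive circuit. Equivalently, \(A_S^{\mathsf T}\) has trivial kernel on \(S\), so no nonzero relation is supported on \(S\). This is a contradiction, and the case is closed immediately.

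\emph{Noncovering case.} Suppose \(U(S)\neq\varnothing\), and fix \(u\in U(S)\). The excess bound is not immediately contradictory here, so more structure is needed. The total excess equals \(\card{U(S)}\le6\), and each safe center's eight neighbours already absorb \(2(a_u-4)\) of it. From this we aim to pin down \(a_u\in\{4,\dots\}\) and the distribution of \(S\) into layers two and four around \(u\).

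Relabel \(u=0\). A layer-two clause \(q\) has support an edge of the coordinate graph \(K_4\), with labels in \(\{1,2\}\). A layer-four clause lies in \(\{1,2\}^4\). We plan to use Lemma~\ref{lem:no-singleton-block} together with the even-layer condition~\eqref{eq:hard-geometry-main} applied at \(u\). The selected blocks at \(u\) are the zero blocks, and each layer-two clause has a two-colour signature, namely the complementary edge. Layer-four clauses carry empty signatures. The balance condition on colours then becomes a signed edge-flow on \(K_4\): at each vertex, the signed count of incident complementary edges vanishes.

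Next we apply the analysis from the proof of Proposition~\ref{prop:three-color}. After cancelling equal signatures of opposite sign, a nonzero residual signed multigraph on \(K_4\) with zero signed degree at every vertex must be an even closed alternating structure. Such a structure is essentially a signed \(4\)-cycle, since \(K_4\) is not bipartite in the triangle sense and the triangle determinant argument kills triangles.

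If a residual system remains, we try to choose the matching so that the union is a linear forest anyway. Alternatively we change the selection: we pick a different centre, or we drop one page's selected block and instead select the block of a point outside the projection, exploiting the fact that nine points miss some point of \([3]^3\) on three pages, as in Proposition~\ref{prop:four-ternary-reduction}. This should yield a three-colour system lifted by Proposition~\ref{prop:three-color}, with the fourth page handled by Lemma~\ref{lem:ordering}(i) if it ends up binary on the residual support. Otherwise the \(4\)-cycle rigidity forces \(S\) into a configuration whose incidence rank can be computed directly. There we would show that the rank is \(9\), or that the relation is not primitive and support-minimal, contradicting Proposition~\ref{prop:rank-cap}.

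\emph{Main obstacle.} The hard step is the noncovering case. We must convert the small slack \(\card{U(S)}\le6\) into a concrete classification of the layer-two fibres around a safe centre, and then exclude each residual \(4\)-cycle pattern. The first thing we would try is to show that the excess bound forces \(a_u=4\) with the four layer-two clauses forming a perfect matching pair of edges in \(K_4\). That would make their signatures disjoint and the matching trivially a linear forest.
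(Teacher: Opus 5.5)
Your reduction to four ternary pages and your covering case are correct and match the paper: when $U(S)=\varnothing$, Lemma~\ref{lem:nine-independence} gives rank $9$, while Proposition~\ref{prop:rank-cap} forces rank $8$. The excess inequality $0\le 2\sum_{u}(a_u-4)\le\card{U(S)}$ is also right.

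\textbf{The gap is the noncovering case, which you do not prove.} The signed edge-flow on $K_4$, the ``$4$-cycle rigidity,'' the reselection of blocks, and the closing ``compute the rank directly'' step are stated as intentions, and none produces a linear-forest matching or a contradiction. At a safe center every signature is either a complementary edge (size two) or empty. With four colors and no singletons, this is exactly the regime covered by neither Proposition~\ref{prop:four-singleton} nor Proposition~\ref{prop:three-color}. The alternating $4$-cycles you identify are genuinely allowed by color balance on $K_4$, so a matching argument at one center has no reason to close. Your fallback goal, that the four layer-two clauses form two disjoint edges, aims at lifting a configuration that in fact cannot occur.

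\textbf{The missing idea is a short geometric count around a safe center.} Since $\card{U(S)}\ge1$ and every $a_u\ge4$, some safe center has $a_u=4$. Otherwise $2\sum_u(a_u-4)\ge2\card{U(S)}>\card{U(S)}$.

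Now look at the $\binom42 2^2=24$ points at distance two from $u$. The radius-one ball of a distance-two clause meets this layer in exactly three points: the clause itself and the two points obtained by switching one of its two nonzero labels. A distance-four clause's ball misses the layer entirely. So the four distance-two clauses cover at most twelve of these points.

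Each of the remaining at least twelve points lies at distance at least two from $S$, hence is a safe center. This contradicts $\card{U(S)}\le6$ from Lemma~\ref{lem:separation-packing}. It already contradicts that lemma's separation clause, since these centers lie at distance two from $u$. Equivalently, separation forces the whole layer-two shell around any safe center to be covered, so $a_u\ge8$, which the nine-point excess bound forbids.
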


\begin{proof}
Let \(S\) be the circuit support.
By Proposition~\ref{prop:rank-cap}, the question-count profile is, up to
permutation, \((3,3,3,2)\) or \((3,3,3,3)\).  The first case lifts by
Proposition~\ref{prop:four-ternary-reduction}.  In the second case, assume
nonliftability.  If \(U(S)=\varnothing\), the nine radius-one balls cover the
cube; Lemma~\ref{lem:nine-independence} gives \(\rank A_S=9\), whereas a
nine-point circuit has rank eight by Proposition~\ref{prop:rank-cap}.  Hence
\(U(S)\neq\varnothing\).

For nine support points, Eq.~\eqref{eq:excess-bound-app} becomes
\[
  2\sum_{u\in U(S)}(a_u-4)\leq\card{U(S)}.
\]
Since every \(a_u\geq4\), some safe center satisfies \(a_u=4\); otherwise the
left side would be at least \(2\card{U(S)}\).

The Hamming layer two about \(u\) has
\(\binom42 2^2=24\) points.  The ball of a distance-two clause meets this
layer in exactly three points, whereas a distance-four ball misses it.
The four distance-two clauses therefore cover at most twelve layer-two
points.  Every remaining point in that layer is uncovered and hence is a safe
center.  This produces at least twelve safe centers in addition to \(u\),
contrary to Lemma~\ref{lem:separation-packing}.
\end{proof}

It remains to consider ten-point supports.  We first treat the noncovering
case, in which a safe center exists.

\begin{proposition}
\label{prop:ten-noncover}
A ten-clause primitive circuit with support \(S\) and
\(U(S)\neq\varnothing\) lifts.
\end{proposition}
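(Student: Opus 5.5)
Assume, towards a contradiction, that the ten-clause primitive circuit \(c\) with support \(S\) and \(U(S)\neq\varnothing\) is nonliftable. By Proposition~\ref{prop:four-ternary-reduction}, all four pages are ternary, so the even-layer condition Eq.~\eqref{eq:hard-geometry-main}, Lemma~\ref{lem:separation-packing}, and Lemma~\ref{lem:safe-excess} all apply. With \(\card S=10\), Eq.~\eqref{eq:excess-bound-app} reads
\[
  2\sum_{u\in U(S)}(a_u-4)\leq 9+\card{U(S)},\qquad \card{U(S)}\leq6 .
\]
The plan is to bound \(a_u\) from below by counting uncovered points in Hamming layer two about \(u\), as in the proof of Proposition~\ref{prop:nine}. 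This should reduce everything to a single rigid configuration, which we then handle separately.

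\textbf{Step 1: pinning down the parameters.} Fix \(u\in U(S)\). Layer two about \(u\) has \(24\) points. Each distance-two clause meets it in exactly three points, and each distance-four clause misses it. Every uncovered layer-two point is a further safe center, so
\[
  \card{U(S)}\geq 1+\max(0,\,24-3a_u).
\]
Combined with \(\card{U(S)}\leq6\), this excludes \(a_u\leq6\).

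Next suppose some \(a_u=7\). Then \(\card{U(S)}\geq4\), and every safe center has \(a_{u'}\geq7\), so the left side of the displayed bound is at least \(6\card{U(S)}\). The bound then forces \(\card{U(S)}\leq 9/5\), a contradiction.

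Hence every safe center has \(a_u\geq8\), and the bound gives \(8\card{U(S)}\leq 9+\card{U(S)}\), so \(\card{U(S)}=1\). Moreover \(a_u=10\) is impossible, since \(12>10\). We are left with a unique safe center \(u\) and \(a_u\in\{8,9\}\).

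\textbf{Step 2: the layer-two fibers.} In the surviving case, layer two about \(u\) has no uncovered point. So the \(a_u\) distance-two clauses, each covering three layer-two points, cover all \(24\) of them. When \(a_u=8\) the cover is exact; when \(a_u=9\) it has overlap exactly three.

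Organize layer two into six fibers of four points, indexed by pairs of coordinates \(\{i,j\}\). A distance-two clause with change set \(\{i,j\}\) covers its own point in the \(\{i,j\}\)-fiber and one point in each of the two fibers \(\{i,k\}\), \(\{j,k'\}\) obtained by restoring one changed coordinate to agreement with \(u\). Counting fiber by fiber should force a highly symmetric arrangement. For \(a_u=8\), I expect two clauses per coordinate, with complementary nonzero labels, together with the remaining one or two clauses in \(\{1,2\}^4\) at distance four. The aim is to list these arrangements up to the symmetries fixing \(u\): coordinate permutations and independent label swaps \(1\leftrightarrow2\) on each page.

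\textbf{Step 3: closing each configuration (main obstacle).} For each surviving configuration I would first try to show \(\rank_{\Q}A_S=10\), as in Lemma~\ref{lem:nine-independence}. This contradicts \(\rank A_S=\card S-1=9\) from Proposition~\ref{prop:rank-cap}, so no circuit lives there. Where the rank is only nine, I would instead construct a lift directly using a non-center selection. One option is to select a question block on only three pages, chosen so that their projection omits the corresponding cell of \([3]^3\); then every signature has size at most two, and the fourth page is left to Lemma~\ref{lem:ordering}(i), as in Proposition~\ref{prop:four-ternary-reduction}. Another option is to choose blocks so that some occurrence receives a singleton signature, and apply Proposition~\ref{prop:four-singleton} with Proposition~\ref{prop:selected-block}.

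This case analysis is the step I expect to be hardest. The counting in Step 2 has to be carried out carefully enough to leave only a handful of configurations, and the distance-four clauses interact with the overlap in the \(a_u=9\) case.
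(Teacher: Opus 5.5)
Your Step~1 is sound and matches the paper's reduction. You exclude $a_u\le 6$ by counting uncovered layer-two points, then use Eq.~\eqref{eq:excess-bound-app} to force $\card{U(S)}=1$. The paper reaches this directly from $a_u\ge7$, since $6\card{U(S)}\le 9+\card{U(S)}$; your detour through $a_u=7$ is harmless.

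The gap is in Step~2, where you misdescribe the coverage geometry. Because of that, you are left with an unexecuted case analysis in Step~3 that is in fact unnecessary. Relabel $u=0$ and let $q$ be a distance-two clause with support $\{i,j\}$. A point in the radius-one ball of $q$ differs from $q$ in at most one coordinate:
\begin{itemize}
\item resetting $q_i$ or $q_j$ to $0$ lands in layer one;
\item changing a coordinate outside $\{i,j\}$ lands in layer three.
\end{itemize}
So the ball meets layer two in exactly three points: $q$ itself and the two points obtained by switching $q_i$ or $q_j$ to the other nonzero label. All three lie in the fiber with support $\{i,j\}$. Your claim that $q$ also covers one point in each of two other fibers is false, since ``restoring one changed coordinate to agreement with $u$'' produces a layer-one point.

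With the correct picture the argument closes at once, and this is the paper's proof. Since $u$ is the unique safe center, all $24$ layer-two points must be covered. A four-point fiber can only be covered by distance-two clauses with that same support. One such clause covers just three of the four points; it misses the point with both nonzero labels switched. Hence each of the six fibers needs at least two clauses, so $a_u\ge 12>10=\card S$. In particular neither your ``exact cover'' case $a_u=8$ nor the case $a_u=9$ can occur, and no rank computation or alternative block selection is needed.

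As written, your proposal never reaches a contradiction. Step~3 only lists strategies you might try, and it does not show they dispose of the configurations your Step~2 would generate; those configurations do not exist.
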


\begin{proof}
Assume nonliftability.  Lemma~\ref{lem:separation-packing} gives
\(1\leq\card{U(S)}\leq6\).  If \(a_u\leq6\) for some \(u\), the radius-one
balls of the distance-two clauses cover at most \(3a_u\) of the \(24\) points in layer
two about \(u\); the distance-four balls cover none.  Thus at least
\(24-3a_u\geq6\) further points belong to \(U(S)\), which together with \(u\)
contradicts \(\card{U(S)}\leq6\).  Hence \(a_u\geq7\) for every \(u\in U(S)\).
Equation~\eqref{eq:excess-bound-app} now gives
\[
  6\card{U(S)}\leq9+\card{U(S)},
\]
so \(\card{U(S)}=1\).

Relabel the unique center as \(0\).  Its 24 layer-two points split into six
fibers according to their two-coordinate support, with four choices of
nonzero labels in each fiber.  Only a distance-two clause with the same
coordinate support can cover a point in a given fiber, and the ball of one
such clause covers three of its four points.  Since \(0\) is the unique safe
center, every layer-two point must be covered.  Each of the six fibers
therefore requires at least two clauses, for a total of at least twelve,
contrary to \(\card S=10\).
\end{proof}

We finally treat the case in which the ten radius-one balls cover the cube.

\begin{proposition}\label{prop:ten-cover}
Ten distinct radius-one balls covering \([3]^4\) cannot have centers that
support a primitive circuit.
\end{proposition}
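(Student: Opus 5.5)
The plan is to use the rank condition of Proposition~\ref{prop:rank-cap}. A primitive circuit on ten clauses has a one-dimensional rational relation space on $S$, and that relation has full support. I will show that for a covering ten-point $S$ the unique relation (if any) vanishes at some clause. Then no support-minimal circuit can live on all of $S$.

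\textbf{Step 1: find a perfect code inside $S$.} Ten balls of size $9$ covering $81$ points have total excess $90-81=9$. I would show that $S$ contains nine points at pairwise distance at least three, that is, a perfect ternary code of length four. Such a code is a translate of the $[4,2,3]$ Hamming code, every nonzero codeword of which has weight three.

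\emph{First attempt.} Delete a center $x$ whose ball carries the most overlap. The overlap on $B(x)$ is then large, which should force the remaining nine balls to be pairwise disjoint and hence a perfect packing.

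\emph{Fallback.} Classify minimal radius-one covers of $[3]^4$ by packing counts in the style of Lemma~\ref{lem:safe-excess}. Fix a center, count covered points in each layer about it, and use that the excess is only $9$. The goal is to conclude that some nine centers have pairwise disjoint balls.

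I expect this structural step to be the main obstacle. The rest is linear algebra.

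\textbf{Step 2: explicit dependence.} Let $C=S\setminus\{x\}$ be the nine-point perfect code. By Lemma~\ref{lem:nine-independence}, the rows of $A_C$ are independent. So the ten rows of $A_S$ satisfy at most one relation, and in it $x$ appears with nonzero coefficient. I claim
\[
  \mathrm{row}_x=\sum_{e\in C}\lambda_e\,\mathrm{row}_e,
  \qquad
  \lambda_e=\tfrac13\bigl(\mathrm{agr}(x,e)-1\bigr),
\]
where $\mathrm{agr}(x,e)=4-\dist(x,e)$ is the number of agreeing coordinates.

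To verify this, fix a column $(\alpha,j)$. Exactly three points $e\in C$ have $e_\alpha=j$, so coordinate $\alpha$ contributes $3[x_\alpha=j]$ to the sum of their agreements with $x$. For any other coordinate $\beta$, the projection of $C$ onto $\{\alpha,\beta\}$ is bijective onto $[3]^2$. Hence those three points take distinct $\beta$-values, and exactly one of them agrees with $x$ at $\beta$. Summing over the three values of $\beta$ gives $3[x_\alpha=j]+3$. Subtracting $3$ and dividing by $3$ yields $[x_\alpha=j]$, which is the $(\alpha,j)$ entry of $\mathrm{row}_x$, as required.

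\textbf{Step 3: a vanishing coefficient.} It remains to show that some $e\in C$ has $\mathrm{agr}(x,e)=1$, so that $\lambda_e=0$.

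Since $C$ is perfect and $x\notin C$, $x$ lies at distance one from a unique $y_0\in C$, so $\mathrm{agr}(x,y_0)=3$. Every other $e\in C$ has distance at least two from $x$, so $\mathrm{agr}(x,e)\in\{0,1,2\}$.

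By the bijectivity of projections, each coordinate contributes exactly $3$ to $\sum_{e\in C}\mathrm{agr}(x,e)$. The total is therefore $12$, and the other eight points contribute $9$. Since $9$ is odd, they cannot all have agreement $0$ or $2$, so some $e$ has $\mathrm{agr}(x,e)=1$.

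Thus the unique relation on $S$ has $e\notin$ its support. No support-minimal circuit has support $S$, which proves the proposition.
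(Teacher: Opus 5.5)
Your Steps 2 and 3 are correct. If \(S=C\sqcup\{x\}\) with \(C\) nine points at pairwise distance at least three, then Lemma~\ref{lem:nine-independence} makes the relation space on \(S\) exactly one-dimensional. Your coefficients \(\lambda_e=\frac13(\mathrm{agr}(x,e)-1)\) check column by column via the bijective two-coordinate projections. The count \(12-3=9\) being odd forces some \(\lambda_e=0\), so no relation on \(S\) has full support.

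The gap is Step 1. It carries the whole combinatorial content of the statement and is not proved. Removing \(x\) leaves nine disjoint balls exactly when every multiply covered point is covered by \(x\). Since the excess is \(9=\card{B(x)}\), this means all of \(B(x)\) is doubly covered. Choosing \(x\) with ``most overlap'' does not give this. A priori the excess \(9\) could include points of multiplicity three or more. Even once you know there are exactly nine double points and no triple points, the \(18\) incidences between balls and double points, spread over ten balls, only guarantee a ball containing two of them, not nine. Your fallback is a plan, not an argument.

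What is missing is structural input. The paper's coset argument shows that every radius-one ball meets each coset of the tetracode exactly once. This gives one double point per coset, hence no triple points and \(3n_1+2n_2=9\), where \(n_d\) counts center pairs at distance \(d\). So \((n_1,n_2)\in\{(1,3),(3,0)\}\).

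A perfect code plus one point realizes \((1,3)\): a point outside the code is at distance one from one codeword and distance two from three. So you would still need to show two things. First, \((3,0)\) is impossible; then the three distance-one pairs must be parallel, since their intersection lines must meet all nine cosets. Second, in case \((1,3)\), one endpoint of the distance-one pair is at distance two from three other centers.

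Both facts are true and can be proved by fiber counts. For instance, in case \((1,3)\) count along the direction of the distance-one pair. The other eight centers project to distinct points of \([3]^3\) at distance at least two from the common projection. Each unoccupied point at distance two needs three covering neighbours. This forces two occupied points in each of the three four-cycles of that layer. Counting multiplicities on the six neighbours of the common projection allows at most six such points. The resulting configuration is rigid. So your route can be completed, but only with a case analysis you have not supplied.

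The paper avoids classifying the cover at all. It combines \(3n_1+2n_2=9\) with the block-size bound of Lemma~\ref{lem:no-singleton-block}, which gives \(2n_1+n_2-n_4\geq3\). It then uses that a one-dimensional full-support kernel forces the off-diagonal support of the Gram matrix \(3-\dist(q_i,q_j)\) to be connected. That needs \(n_1+n_2+n_4\geq9\), contradicting \(n_1+n_2+n_4\leq6\).
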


\begin{proof}
Fix an identification of the three question labels on each page with
\(\F_3\).  This choice does not affect Hamming distances.  Let \(S\) be the
set of ten centers and consider the ternary \([4,2,3]_3\)
Hamming code
\[
  C=\{(a,b,a+b,a+2b):a,b\in\F_3\}.
\]
Any two zero coordinates in a codeword force \(a=b=0\); hence every nonzero
codeword has weight at least three.  Since \((1,0,1,1)\in C\), the displayed
code has minimum distance three.  Its nine cosets partition \(\F_3^4\) into
codes of size nine.

Every radius-one ball meets each coset exactly once.  Indeed, the zero error
and the eight nonzero errors supported on one coordinate represent distinct
cosets of \(C\), because the difference of two such errors has weight at most
two.  Since \(\F_3^4/C\) has nine elements, these errors form a complete set
of coset representatives; translation by a ball center proves the claim.

Ten balls contribute ten incidences to each nine-point coset.  If they cover
the cube, every point has positive multiplicity, so each coset contains
exactly one double point and eight single points.  Across the nine cosets
there are exactly nine double points and no point of higher multiplicity.  If
\(n_d\) denotes the number of unordered pairs of centers at distance \(d\),
two radius-one balls meet in three points at center distance one, in two
points at distance two, and not at all at larger distances.  Counting the
nine double points gives
\begin{equation}\label{eq:intersection-app}
  3n_1+2n_2=9.
\end{equation}

Suppose \(S\) supports a primitive circuit.  Every page is ternary and,
by Lemma~\ref{lem:no-singleton-block}, every question block has size at least
two.  Among partitions of ten into three parts of size at least two,
\((4,3,3)\) minimizes the number of within-block pairs, giving
\[
  \binom42+\binom32+\binom32=12.
\]
Summed over four pages, the number of coordinate agreements satisfies
\[
  3n_1+2n_2+n_3
  \geq48.
\]
Together with \(n_1+n_2+n_3+n_4=45\), this implies
\[
  2n_1+n_2-n_4
  \geq3.
\]

With \(e_0,e_1,e_2\) denoting the standard basis of \(\R^3\), define
\[
  \Phi(q)=
  \left(
  \tfrac1{\sqrt3},
  e_{q_1}-\tfrac13\one,
  e_{q_2}-\tfrac13\one,
  e_{q_3}-\tfrac13\one,
  e_{q_4}-\tfrac13\one
  \right).
\]
The Gram matrix of the ten feature vectors satisfies
\[
  \mathcal G_{ij}
  =
  \inner{\Phi(q_i)}{\Phi(q_j)}
  =
  3-\dist(q_i,q_j).
\]
Centered and raw question indicators have the same column span after the
constant column is included.  Proposition~\ref{prop:rank-cap} therefore gives
\(\rank \mathcal G=\rank A_S=9\), and \(\mathcal G\) and
\(A_S^{\mathsf T}\) have the same
kernel on \(\R^S\).

The off-diagonal nonzero graph of \(\mathcal G\) joins pairs at distances one,
two, and four.  It must be connected.  Otherwise \(\mathcal G\) is block
diagonal, and
restricting the full-support circuit vector to each connected component gives
a nonzero kernel vector on that component.  Two components would yield two
independent kernel vectors, contrary to the one-dimensional circuit kernel.
A connected graph on ten vertices has at least nine edges, whereas
Eq.~\eqref{eq:intersection-app} gives
\[
  n_1+n_2+n_4
  =
  9-(2n_1+n_2-n_4)
  \leq6.
\]
This contradiction proves the claim.
\end{proof}

\section{Proof of the Cayley matching classification}
\label{app:cayley-classification}

We prove Theorem~\ref{thm:cayley-cyclic} here.

For \(n=1\), the matching classification is immediate.  The two indexed
clauses then have the same question tuple and opposite targets, so that
presentation is not reduced.  We henceforth assume \(n\geq2\) when referring
to the game.

We begin with the elementary word criterion underlying the matching
formulation.

\begin{lemma}\label{lem:cayley-chord}
Let \(W\) be a word of length \(2n\) in which each of \(n\) involutive
letters occurs exactly twice.  The word \(W\) freely reduces to the identity
if and only if the matching that joins equal letters is noncrossing in the
linear order of the word.
\end{lemma}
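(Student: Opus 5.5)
We prove the two implications separately, each by induction on \(n\), using the standard structure of free reduction in the free product of copies of \(\mathbb Z_2\). In that group a word reduces to the identity exactly when some sequence of deletions of adjacent equal letters empties it. Since each letter occurs exactly twice in \(W\), any complete reduction must cancel the two occurrences of each letter against each other. So a complete reduction produces a perfect matching of the \(2n\) positions, and that matching must be the equal-letter matching.

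For the direction ``noncrossing \(\Rightarrow\) reduces'', we use that every nonempty noncrossing perfect matching on a line contains an edge joining two consecutive positions. To see this, take an edge \(\{r,s\}\) with \(s-r\) minimal. Any position strictly between \(r\) and \(s\) is matched, by noncrossingness, to another position strictly between them. That edge would be shorter than \(\{r,s\}\), contradicting minimality, so no positions lie between \(r\) and \(s\) and they are adjacent. Those two positions carry equal letters, so we delete them. The remaining word has length \(2(n-1)\), each remaining letter still occurs twice, and the induced matching is still noncrossing. Induction finishes this direction, with the empty word as the base case.

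For the converse, suppose \(W\) reduces to the identity and fix a complete sequence of adjacent deletions. We show by induction along the deletion sequence that no two cancelled pairs cross. The first deletion removes a pair at consecutive positions \(r,r+1\). This pair cannot cross any other pair, since crossing would require a position strictly between \(r\) and \(r+1\). Because letters occur exactly twice, the deleted pair is exactly the equal-letter edge for its letter. The rest of the reduction is a complete reduction of the shortened word, which by induction yields a noncrossing equal-letter matching of that word. Reinserting the adjacent pair \(\{r,r+1\}\) creates no crossings, so the full equal-letter matching of \(W\) is noncrossing. Alternatively, one could argue that if two equal-letter chords cross, say \(x\cdots y\cdots x\cdots y\), then no adjacent deletion can ever remove either chord first. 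Each remains separated by a surviving letter of the other, because letters occurring only twice can only cancel with their partner.

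The main point needing care is the claim that a cancellation in the free product always pairs two occurrences of the same letter, and that these must be each other's partners given the exactly-twice multiplicity. Both follow from the normal form of the free product of copies of \(\mathbb Z_2\): the reduced form is unique, and deleting adjacent \(xx\) is the only rewriting rule. I expect no real obstacle beyond stating this cleanly; it is the same free-reduction bookkeeping already used in the proof of Lemma~\ref{lem:alternating-lift}.
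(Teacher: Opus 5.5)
Your proposal is correct and follows essentially the same argument as the paper. For the forward direction you delete an innermost chord, which your minimal-length argument shows has consecutive endpoints. For the converse you observe that the first adjacent deletion in a reduction removes a chord crossing nothing, apply induction to the shortened word, and reinsert that chord; you simply spell out the innermost-chord step and the exactly-twice bookkeeping in more detail.
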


\begin{proof}
For a noncrossing matching, an innermost chord has consecutive endpoints.
Deleting that equal adjacent pair preserves noncrossing, and induction
reduces the word to the identity.  Conversely, suppose that \(W\) freely
reduces.  If \(W\) is nonempty, the first deletion in a free reduction
removes an adjacent equal pair.  Its chord crosses no other chord.  Delete
the pair; the shorter word freely reduces and its matching is noncrossing by
induction.  Reinserting the consecutive chord preserves noncrossing.
\end{proof}

In a length-\(2n\) Cayley word using every clause once, player \(\alpha\)
pairs \(E_x\) with \(O_{x\alpha^{-1}}\), because the question received on
\(O_g\) is \(g\alpha\).  As inversion permutes \(H\),
Lemma~\ref{lem:cayley-chord} identifies a true refutation of this form with a
common noncrossing order for the matching family \(\{m_h:h\in H\}\).

We first prove that a common noncrossing order forces \(H\) to be cyclic.
Suppose that such an order exists.  If its first
point is an \(O\)-point, interchange the two copies \(E\) and \(O\).
This sends \(m_h\) to \(m_{h^{-1}}\) and preserves the matching family.
A common left translation of both copies then makes the first point \(E_1\).
Thus no generality is lost by assuming that the order begins with \(E_1\).

\begin{lemma}\label{lem:cayley-side}
For every \(x\ne1\) and \(g\in H\),
\begin{equation}\label{eq:cayley-side}
  E_x<O_g
  \quad\Longleftrightarrow\quad
  O_{xg}<O_g.
\end{equation}
\end{lemma}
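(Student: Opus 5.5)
The plan is to read the equivalence directly off the noncrossing condition for a single matching, using the normalization that the common order begins with \(E_1\).

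Fix \(x\neq1\) and \(g\in H\), and look at the matching \(m_g\), which pairs \(E_y\) with \(O_{yg}\) for every \(y\in H\). It contains two chords of interest:
\[
 (E_1,O_g)\qquad\text{and}\qquad (E_x,O_{xg}).
\]
Since \(x\neq1\), these are distinct chords of the same perfect matching. Their four endpoints are therefore pairwise distinct; in particular \(O_{xg}\neq O_g\).

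Next I would use noncrossing in the common linear order. The first chord has \(E_1\) as its left endpoint, and \(E_1\) is the first point of the whole order. So this chord spans exactly the initial segment of points up to \(O_g\).

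Because the two chords do not cross, the chord \((E_x,O_{xg})\) must have either both endpoints strictly inside that segment or both strictly outside it. Inside means before \(O_g\); outside means after \(O_g\), since nothing precedes \(E_1\). Hence \(E_x<O_g\) holds if and only if \(O_{xg}<O_g\), which is Eq.~\eqref{eq:cayley-side}.

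I do not expect a genuine obstacle. The only points to check carefully are:
\begin{itemize}
\item the hypothesis \(x\neq1\) is exactly what keeps the two chords distinct, so that the noncrossing condition applies to them;
\item the reduction to an order beginning with \(E_1\), already carried out in the text, is what turns ``nested or disjoint'' into the one-sided comparison with \(O_g\).
\end{itemize}
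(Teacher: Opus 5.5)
Your proposal is correct and follows essentially the same argument as the paper: you take the chords \((E_1,O_g)\) and \((E_x,O_{xg})\) of \(m_g\) and use that \(E_1\) is the first point, so noncrossing forces \(E_x\) and \(O_{xg}\) onto the same side of \(O_g\). You also check explicitly that \(x\neq1\) makes these chords distinct, which the paper leaves implicit.
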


\begin{proof}
In \(m_g\), the chord incident with \(E_1\) is
\((E_1,O_g)\), while the chord incident with \(E_x\) is
\((E_x,O_{xg})\).  Since \(E_1\) is the first point and the two chords do
not cross, the endpoints \(E_x\) and \(O_{xg}\) lie on the same side of
\(O_g\).  This is Eq.~\eqref{eq:cayley-side}.
\end{proof}

List the \(O\)-points in their order of appearance:
\[
  O_{o_1}<O_{o_2}<\cdots<O_{o_n},
\]
and define \(S_j=\{x\in H:E_x<O_{o_j}\}\).  Applying
Eq.~\eqref{eq:cayley-side} with \(g=o_j\) gives
\begin{equation}\label{eq:Sj}
  S_j=\{1\}\cup\{o_i o_j^{-1}:1\leq i<j\}.
\end{equation}
The displayed elements are distinct, and none of the latter elements equals
the identity.  Hence \(\lvert S_j\rvert=j\) and
\[
  S_1\subsetneq S_2\subsetneq\cdots\subsetneq S_n.
\]

For \(1\leq j<n\), set \(d_j=o_j o_{j+1}^{-1}\).
Eq.~\eqref{eq:Sj} yields
\begin{equation}\label{eq:Sj-recurrence}
  S_{j+1}
  =
  \{1\}\cup(S_j\setminus\{1\})d_j\cup\{d_j\}.
\end{equation}

\begin{lemma}\label{lem:cayley-generator}
There is an element \(d\in H\) such that
\[
  d_j=d\quad(1\leq j<n),
  \qquad
  S_j=\{1,d,\ldots,d^{j-1}\}\quad(1\leq j\leq n).
\]
\end{lemma}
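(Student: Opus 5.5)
We proceed by induction on \(j\), using only the recurrence \eqref{eq:Sj-recurrence}, the size count \(\lvert S_j\rvert=j\), and the strict chain \(S_1\subsetneq\cdots\subsetneq S_n\). Set \(d=d_1\). The base case is \(S_1=\{1\}\), and \eqref{eq:Sj-recurrence} with \(j=1\) gives \(S_2=\{1,d\}\). The claim for general \(j\) is that \(d_1=\cdots=d_{j-1}=d\) and \(S_j=\{1,d,\ldots,d^{j-1}\}\).

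For the inductive step, assume the claim holds for \(S_j\) with \(j<n\). Then \eqref{eq:Sj-recurrence} reads
\[
  S_{j+1}=\{1\}\cup\{d\,d_j,d^2d_j,\ldots,d^{j-1}d_j\}\cup\{d_j\}.
\]
Because the chain is increasing, \(S_j\subseteq S_{j+1}\), and \(\lvert S_{j+1}\rvert=j+1\) means exactly one new element appears. Thus every element of \(\{1,d,\ldots,d^{j-1}\}\) must lie in \(\{1,d_j,d\,d_j,\ldots,d^{j-1}d_j\}\). Since \(S_j\setminus\{1\}\) misses the identity and the displayed elements of \(S_{j+1}\) are distinct by \eqref{eq:Sj}, the set
\[
  T=\{d_j,\,d\,d_j,\ldots,d^{j-1}d_j\}=\{1,d,\ldots,d^{j-1}\}\,d_j
\]
is the left translate of \(S_j\) by \(d_j\). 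It has \(j\) elements and avoids \(1\). The containment says \(S_j\setminus\{1\}\subseteq T\), so \(T=(S_j\setminus\{1\})\cup\{z\}\) for a single new element \(z\notin S_j\).

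The main obstacle is deducing \(d_j=d\) from \(S_jd_j\setminus\{1\}\supseteq S_j\setminus\{1\}\) together with \(1\notin S_jd_j\). I would argue on the cyclic chain structure of \(S_j=\{1,d,\ldots,d^{j-1}\}\). Translation by \(d_j\) moves all but one element of \(S_j\) into \(S_j\setminus\{1\}\), so \(S_j\) and \(S_jd_j\) share \(j-1\) elements. Equivalently, \(\lvert S_j\cap S_jd_j\rvert=j-1\), while \(1\in S_j\setminus S_jd_j\).

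The case \(j=1\) is trivial. For \(j\geq2\), \(d\in S_j\setminus\{1\}\subseteq T\), so \(d=d^kd_j\) for some \(0\leq k\leq j-1\). This gives \(d_j=d^{1-k}\), a power of \(d\). The group \(\langle d\rangle\) is cyclic of order \(m\). Since \(S_j\) consists of \(j\) distinct powers, \(j\leq m\), and the question becomes one about arcs in \(\Z/m\): the arc \([0,j-1]\) is shifted by \(t=1-k\), must avoid \(0\), and must contain \([1,j-1]\).

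If \(j<m\), the arc must be exactly \([1,j]\), so \(t=1\) and \(d_j=d\). The case \(j=m\) is impossible, because then \(S_j=\langle d\rangle\) is closed under translation by \(d_j\) and \(T\) would contain \(1\).

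With \(d_j=d\), the new element is \(z=d^j\), which gives \(S_{j+1}=\{1,\ldots,d^j\}\). The argument also shows \(j+1\leq m\), so the powers remain distinct, consistent with \(\lvert S_{j+1}\rvert=j+1\).
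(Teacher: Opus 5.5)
Your proof is correct and follows the paper's route. From $d\in S_j\setminus\{1\}\subseteq S_jd_j$ you obtain $d_j=d^{1-k}$ with $0\le k\le j-1$, and you force $k=0$ from the fact that $S_jd_j$ avoids the identity. The passage to arcs in $\Z/m$ and the separate case $j=m$ are not needed: for $k\ge1$ the integer exponent range $\{1-k,\ldots,j-k\}$ already contains $0$, so $1=d^0\in S_jd_j$ directly, which is essentially how the paper concludes. One small correction: $S_jd_j$ is a right translate of $S_j$, not a left translate.
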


\begin{proof}
Set \(d=d_1\), so \(S_2=\{1,d\}\).  Suppose inductively that
\(S_j=\{1,d,\ldots,d^{j-1}\}\).  The recurrence gives
\[
  S_{j+1}
  =
  \{1,d_j,dd_j,\ldots,d^{j-1}d_j\}.
\]
Because \(d\in S_j\subset S_{j+1}\), either \(d=d_j\), or
\(d=d^k d_j\) for some \(1\leq k\leq j-1\).  The case \(k=1\) would give
\(d_j=1\), contrary to \(o_j\ne o_{j+1}\).  If \(k\geq2\), then
\[
  S_{j+1}
  =
  \{1\}\cup\{d^{1-k},d^{2-k},\ldots,d^{j-k}\}.
\]
The exponent interval contains zero, so the identity occurs both outside and
inside the second set; consequently \(\lvert S_{j+1}\rvert\leq j\), contrary
to \(\lvert S_{j+1}\rvert=j+1\).  Therefore \(d_j=d\), and
Eq.~\eqref{eq:Sj-recurrence} gives
\(S_{j+1}=\{1,d,\ldots,d^j\}\).
\end{proof}

Since \(S_n=H\) has \(n\) elements, Lemma~\ref{lem:cayley-generator}
implies that \(d\) has order \(n\) and \(H=\langle d\rangle\).  Moreover,
the new \(E\)-point entering between \(O_{o_j}\) and \(O_{o_{j+1}}\) is
\(E_{d^j}\), while \(o_{j+1}=o_jd^{-1}\).  Thus the common order is forced
to alternate between the two copies.

It remains to prove that cyclicity is sufficient.  Let
\(H=\langle d\rangle\) have order \(n\) and consider the
zero-indexed order
\begin{equation}\label{eq:zigzag}
  E_1,O_1,E_d,O_{d^{-1}},\ldots,
  E_{d^{n-1}},O_{d^{-(n-1)}}.
\end{equation}
Fix \(h=d^k\).  In the matching \(m_h\), the point \(E_{d^i}\) at position
\(2i\) is paired with the \(O\)-point at position \(2j+1\), where \(j\) is
the representative of \(-k-i\pmod n\) in \(\{0,\ldots,n-1\}\).  To compare
two chords, take \(i_1<i_2\), let \(j_s\) be the corresponding representative
for \(i_s\), and put \(\Delta=i_2-i_1\).

If \(j_2=j_1-\Delta\), then \(j_2<j_1\).  A crossing would require either
\[
  2i_1<2j_2+1<2j_1+1<2i_2
\]
or
\[
  2j_2+1<2i_1<2i_2<2j_1+1.
\]
The first set of inequalities gives \(j_2\geq i_1\) and \(j_1<i_2\);
substitution of \(j_2=j_1-\Delta\) instead gives \(j_1\geq i_2\).
The second gives \(j_2<i_1\) and \(j_1\geq i_2\); substitution instead
gives \(j_1<i_2\).  Both alternatives are impossible.

If the index wraps, then \(j_2=j_1+n-\Delta>j_1\).  Because
\(j_2<n\), one has \(\Delta>j_1\), so \(2i_2>2j_1+1\).
Moreover \(\Delta\leq n-1-i_1\) gives \(j_2\geq i_1+1\), and hence
\(2j_2+1>2i_1\).  The two alternating endpoint orders that could produce a
crossing are thereby excluded.  Thus \(m_h\) is noncrossing for every
\(h\in H\).

By Lemma~\ref{lem:cayley-chord}, the order
Eq.~\eqref{eq:zigzag} gives a true refutation.  The target parity is odd
because \(O_1\) occurs exactly once.  This proves
Theorem~\ref{thm:cayley-cyclic}.

\section{Lean correspondence}\label{app:lean}

\begin{figure}[h]
\centering
\resizebox{0.94\textwidth}{!}{%
\begin{tikzpicture}[
  result/.style={draw=blue!65!black, fill=blue!7, rounded corners,
    align=center, minimum height=8mm, text width=39mm},
  ingredient/.style={draw=black!60, fill=black!3, rounded corners,
    align=center, minimum height=7mm, inner xsep=3pt, font=\small},
  edge/.style={-{Stealth[length=2mm]}, semithick},
  companion/.style={edge, dashed}
]
\node[result] at (0,0) (main)
  {Theorem~\ref{thm:main}\\sharp threshold};
\node[result] at (-4.8,-1.5) (positive)
  {Theorem~\ref{thm:main}(i)\\three-question equivalence};
\node[result] at (7.8,-1.5) (negative)
  {Theorem~\ref{thm:main}(ii)\\Klein separation};

\node[ingredient] at (-8.4,-3.0) (weight)
  {Prop.~\ref{prop:weight-reduction}};
\node[ingredient] at (-4.8,-3.0) (obstruction)
  {Cor.~\ref{cor:obstruction-exactness}};
\node[ingredient] at (-7.2,-4.3) (alternating)
  {Lem.~\ref{lem:alternating-lift}};
\node[ingredient] at (-4.8,-4.3) (generation)
  {Prop.~\ref{prop:circuit-generation}};
\node[ingredient] at (-2.4,-4.3) (lifting)
  {Thm.~\ref{thm:exact-circuit-lifting}};

\node[ingredient] at (-10.0,-6.1) (nosingle)
  {Lem.~\ref{lem:no-singleton-block}};
\node[ingredient] at (-7.7,-6.1) (rank)
  {Prop.~\ref{prop:rank-cap}};
\node[ingredient] at (-5.4,-6.1) (multiplicity)
  {Lem.~\ref{lem:exact-multiplicity}};
\node[ingredient] at (-3.1,-6.1) (selected)
  {Prop.~\ref{prop:selected-block}};
\node[ingredient] at (-0.8,-6.1) (ternary)
  {Prop.~\ref{prop:four-ternary-reduction}};
\node[ingredient] at (1.5,-6.1) (subnine)
  {Prop.~\ref{prop:subnine}};
\node[ingredient] at (3.8,-6.1) (nine)
  {Prop.~\ref{prop:nine}};
\node[ingredient] at (6.1,-6.1) (tennoncover)
  {Prop.~\ref{prop:ten-noncover}};
\node[ingredient] at (8.4,-6.1) (tencover)
  {Prop.~\ref{prop:ten-cover}};

\node[ingredient] at (5.7,-3.0) (even)
  {Lem.~\ref{lem:even-subgroup}};
\node[ingredient] at (8.0,-3.0) (magnus)
  {Lem.~\ref{lem:magnus-conditions}};
\node[ingredient] at (10.8,-3.0) (cayley)
  {Thm.~\ref{thm:cayley-cyclic}};
\node[ingredient] at (8.4,-4.3) (chord)
  {Lem.~\ref{lem:cayley-chord}};
\node[ingredient] at (10.8,-4.3) (side)
  {Lem.~\ref{lem:cayley-side}};
\node[ingredient] at (13.2,-4.3) (generator)
  {Lem.~\ref{lem:cayley-generator}};

\draw[edge] (main) -- (positive);
\draw[edge] (main) -- (negative);
\draw[edge] (positive) -- (weight);
\draw[edge] (positive) -- (obstruction);
\draw[edge] (obstruction) -- (alternating);
\draw[edge] (obstruction) -- (generation);
\draw[edge] (obstruction) -- (lifting);
\draw[semithick] (lifting.south) -- (-2.4,-5.35);
\draw[semithick] (-10.0,-5.35) -- (8.4,-5.35);
\draw[edge] (-10.0,-5.35) -- (nosingle.north);
\draw[edge] (-7.7,-5.35) -- (rank.north);
\draw[edge] (-5.4,-5.35) -- (multiplicity.north);
\draw[edge] (-3.1,-5.35) -- (selected.north);
\draw[edge] (-0.8,-5.35) -- (ternary.north);
\draw[edge] (1.5,-5.35) -- (subnine.north);
\draw[edge] (3.8,-5.35) -- (nine.north);
\draw[edge] (6.1,-5.35) -- (tennoncover.north);
\draw[edge] (8.4,-5.35) -- (tencover.north);
\draw[edge] (negative) -- (even);
\draw[edge] (negative) -- (magnus);
\draw[companion] (negative) -- (cayley);
\draw[edge] (cayley) -- (chord);
\draw[edge] (cayley) -- (side);
\draw[edge] (cayley) -- (generator);
\end{tikzpicture}%
}
\caption{Core dependency graph for the Lean formalization.  Solid arrows
point from a conclusion to its principal ingredients.  The dashed arrow
marks the companion each-clause-once classification.}
\label{fig:lean-dependencies}
\end{figure}

The accompanying development~\cite{FourPlayerXORGames2026} formalizes the
principal threshold theorem and its obstruction-space formulation in
Lean~4~\cite{deMouraUllrich2021}, using Mathlib~\cite{mathlib2020} and the
Lean-QIT library~\cite{ZhuEtAl2026LeanQIT}.  Lean-QIT is imported as the
external Lake package \texttt{QIT}; it is not vendored into the accompanying
repository.  Figure~\ref{fig:lean-dependencies} displays the dependency graph
for the two parts of Theorem~\ref{thm:main}.  Table~\ref{tab:lean-correspondence}
gives the statement-by-statement correspondence.

The status \emph{direct} means that the Lean declaration states the
manuscript conclusion up to notation and packaging.  \emph{Partial} marks a
proper part of a manuscript statement, \emph{supporting} marks a conditional
component or a consequence used by a directly formalized result, and
\emph{alternative} marks an application-specific combinatorial route in place
of the more general lemma stated in the manuscript.  This distinction is
material for the exact-multiplicity circuit statement, the refutation clause
of Theorem~\ref{thm:cayley-cyclic}, and the general Magnus lemma: their
downstream conclusions are kernel checked, but the labeled intermediate
statements are not each reproduced verbatim in Lean.

\begingroup
\footnotesize
\setlength{\tabcolsep}{0pt}
\renewcommand{\arraystretch}{1.08}
\newcommand{\leanbreak}{\par\kern1.5pt}
\begin{longtable}{@{}
  >{\raggedright\arraybackslash}p{0.20\textwidth}
  @{\hspace{1.4em}}
  >{\raggedright\arraybackslash}p{0.58\textwidth}
  @{\hspace{1.6em}}
  >{\raggedright\arraybackslash}p{0.105\textwidth}
  @{}}
\caption{Correspondence between manuscript statements and Lean
declarations.}\label{tab:lean-correspondence}\\
\toprule
Manuscript statement & Principal Lean declaration(s) & Status \\
\midrule
\endfirsthead
\multicolumn{3}{c}{\tablename~\thetable\ (continued)}\\
\toprule
Manuscript statement & Principal Lean declaration(s) & Status \\
\midrule
\endhead
\midrule
\multicolumn{3}{r}{Continued on next page}\\
\endfoot
\bottomrule
\endlastfoot
Theorem~\ref{thm:main}(i) &
\path{QIT.XORGame.commutingOperatorValue_eq_one_iff_hasPerfectMERP_of_activeQuestionBound_le_three}
& Direct \\
Theorem~\ref{thm:main}(ii) and sharpness &
\path{QIT.XORGame.v4_value_one}\leanbreak
\path{QIT.XORGame.v4_no_MERP}\leanbreak
\path{QIT.XORGame.four_is_sharp_threshold}
& Direct \\
Proposition~\ref{prop:weight-reduction} &
\path{QIT.XORGame.commutingOperatorValue_eq_one_iff_of_support_eq}\leanbreak
\path{QIT.XORGame.weightedValue_lt_one_of_uniformValue_lt_one}
& Partial \\
Theorem~\ref{thm:exact-circuit-lifting} &
\path{QIT.XORGame.primitiveCircuit_lifts_fourByThree}
& Supporting \\
Corollary~\ref{cor:obstruction-exactness} &
\path{QIT.XORGame.obstructionSpaces_eq_fourByThree}
& Direct \\
Lemma~\ref{lem:alternating-lift} &
\path{QIT.XORGame.balancedWord_alternatingLift_isIntegerRelation}\leanbreak
\path{QIT.XORGame.alternatingLift_mod_two}
& Direct \\
Proposition~\ref{prop:circuit-generation} &
\path{QIT.XORGame.primitiveCircuits_closure_eq_kernel}\leanbreak
\path{QIT.XORGame.everyIntegerRelationHasBalancedLift_of_primitiveCircuits}
& Direct \\
Lemma~\ref{lem:no-singleton-block} &
\path{QIT.XORGame.fullCircuit_questionFiber_card_ne_one}
& Direct \\
Proposition~\ref{prop:rank-cap} &
\path{QIT.XORGame.fullRationalCircuit_card_eq_rank_add_one}\leanbreak
\path{QIT.XORGame.rationalIncidence_rank_le_usedQuestions}\leanbreak
\path{QIT.XORGame.fourByThree_primitiveCircuit_card_le_ten}
& Direct \\
Lemma~\ref{lem:exact-multiplicity} &
\path{QIT.XORGame.parityPairing_occurrenceParity}\leanbreak
\path{QIT.XORGame.exactMultiplicityBalancedLift_hasBalancedLift}
& Supporting \\
Lemma~\ref{lem:ordering} &
\path{QIT.XORGame.InvolutionWord.reducesToEmpty_of_toFinset_card_le_two_of_alternatingSum_zero}\leanbreak
\path{QIT.XORGame.exists_bipartiteLinearForestAlternatingLayout}
& Direct \\
Proposition~\ref{prop:selected-block} &
\path{QIT.XORGame.selectedBlockLinearForest_exactMultiplicityBalancedLift}
& Direct \\
Proposition~\ref{prop:three-color} &
\path{QIT.XORGame.threeColor_colorwiseMatching_linearForest}
& Direct \\
Proposition~\ref{prop:four-singleton} &
\path{QIT.XORGame.ColorSignatureSystem.linearForestMatching_of_singleton}
& Direct \\
Proposition~\ref{prop:four-ternary-reduction} &
\path{QIT.XORGame.hasExactMultiplicityBalancedLift_of_ternaryPlayers_card_le_three}
& Direct \\
Lemma~\ref{lem:separation-packing} &
\path{QIT.XORGame.safeCenters_card_le_six_of_even_separated}\leanbreak
\path{QIT.XORGame.evenSupportDistances_separated_of_sharedBlockMissingCell_escape}
& Partial \\
Lemma~\ref{lem:safe-excess} &
\path{QIT.XORGame.safeHole_neighborhood_excess}
& Direct \\
Proposition~\ref{prop:subnine} &
\path{QIT.XORGame.atMostEight_exactLift_of_nonliftable_even_separated}
& Supporting \\
Lemma~\ref{lem:nine-independence} &
\path{QIT.XORGame.nineBallCover_rationalIncidence_linearIndependent}
& Direct \\
Proposition~\ref{prop:nine} &
\path{QIT.XORGame.nineClause_exactLift_of_nonliftable_even_separated}
& Supporting \\
Proposition~\ref{prop:ten-noncover} &
\path{QIT.XORGame.tenClause_noncovering_geometry_impossible}
& Supporting \\
Proposition~\ref{prop:ten-cover} &
\path{QIT.XORGame.tenBallCover_structural_contradiction}\leanbreak
\path{QIT.XORGame.tenBallCover_not_fullRationalCircuit}
& Supporting \\
Theorem~\ref{thm:cayley-cyclic} &
\path{QIT.Research.FourXOR.hasCommonNoncrossingOrder_iff_isCyclic}\leanbreak
\path{QIT.Research.FourXOR.cyclic_cayley_has_true_refutation}
& Partial \\
Lemma~\ref{lem:cayley-chord} &
\path{QIT.Research.FourXOR.NestedWord.equalPairs_noncrossing}\leanbreak
\path{QIT.Research.FourXOR.NestedWord.reducible}
& Supporting \\
Lemma~\ref{lem:cayley-side} &
\path{QIT.Research.FourXOR.same_side}
& Direct \\
Lemma~\ref{lem:cayley-generator} &
\path{QIT.Research.FourXOR.S_succ_dj_eq_d}\leanbreak
\path{QIT.Research.FourXOR.S_pow}
& Partial \\
Lemma~\ref{lem:even-subgroup} &
reduction interfaces in\leanbreak
\path{FourXOR.AllLength}\leanbreak
\path{FourXOR.AllLengthGame}
& Alternative \\
Lemma~\ref{lem:magnus-conditions} &
\path{QIT.Research.FourXOR.altSum_reducible}\leanbreak
\path{QIT.Research.FourXOR.pairSum_reducible}
& Alternative \\
Elementary-abelian all-length conclusion &
\path{QIT.Research.FourXOR.cayley_f2r_hasPREF_and_no_refutation}
& Direct \\
\end{longtable}
\endgroup

The source tree contains no \texttt{sorry}, \texttt{admit}, or manually
declared axiom.  The repository additionally runs \texttt{\#print axioms}
on the paper-facing declarations.  Their dependency sets contain Lean's
standard \texttt{propext}, \texttt{Classical.choice}, and
\texttt{Quot.sound}.  The three-question circuit argument also uses seven
explicitly enumerated \texttt{native\_decide} bridge axioms for finite
cardinality and Hamming-space computations; the Klein witness, translation
matching classification, and elementary-abelian all-length theorem use only
the three standard axioms.  The tracked toolchain, Lake manifest, exact
certificate scripts, and full declaration map are included in the
accompanying repository~\cite{FourPlayerXORGames2026}.

\end{document}